\newtheorem{theorem}{Theorem}[section]
\newtheorem*{thm2}{Theorem}
\newtheorem{definition}[theorem]{Definition}
\newtheorem{proposition}[theorem]{Proposition}
\newtheorem{conjecture}[theorem]{Conjecture}
\newtheorem{lemma}[theorem]{Lemma}
\newtheorem{corollary}[theorem]{Corollary}
\newcommand{\junk}[1]{}
\newcommand{\ignore}[1]{}
\newcommand \E {{\rm E}}
\newcommand{\D}[0]{{\ensuremath{\mathcal{D}}}\xspace}
\newcommand{\R}[0]{{\ensuremath{\mathbb{R}}}}
\newcommand{\Ft}{\ensuremath{\mathbb{F}[2]}}
\newcommand{\eps}{\varepsilon}
\newcommand{\ol}[1]{\ensuremath{\overline{#1}}\xspace}
\newcommand{\wh}[1]{\ensuremath{\widehat{#1}}\xspace}
\newcommand{\mc}[1]{\ensuremath{\mathcal{#1}}\xspace}
\newcommand{\tn}[1]{\ensuremath{\textnormal{#1}}\xspace}
\newcommand{\poly}{\ensuremath{\textnormal{poly}}\xspace}
\newcommand{\IS}{{\sc AlmostColHyp}}
\newcommand{\wt}{\textnormal{wt}}
\newcommand{\slfrac}[2]{\left.#1\middle/#2\right.}
\newcommand{\initOneLiners}{%
    \setlength{\itemsep}{0pt}
    \setlength{\parsep }{0pt}
    \setlength{\topsep }{0pt}
}
\def\showauthornotes{1}
\newcommand{\Authornote}[2]{{\sf\small\color{red}{[#1: #2]}}}
\newcommand{\Authornote}[2]{}
\newcommand{\defeq}{\stackrel{\textup{def}}{=}}
\begin{document}
\title{Hardness of Finding Independent Sets in $2$-Colorable \\ 
and Almost
$2$-Colorable Hypergraphs}
\author{Subhash Khot\thanks{Department of Computer Science, 
University of Chicago, USA. email: \texttt{khot@cs.nyu.edu}}
\and
Rishi Saket\thanks{IBM T. J. Watson Research Center, USA. email: \texttt{rsaket@us.ibm.com} }}
\maketitle
\thispagestyle{empty}
\setcounter{page}{0}
\begin{abstract}
This work studies the hardness of finding independent sets in
hypergraphs which are either $2$-colorable or are \emph{almost}
$2$-colorable, i.e. can be $2$-colored after removing a small fraction
of vertices and the incident hyperedges. To be precise, say that a
hypergraph is $(1-\eps)$-almost $2$-colorable if removing an $\eps$
fraction of its vertices and all hyperedges incident on them makes the
remaining hypergraph $2$-colorable.
In particular we prove the
following results.
\begin{itemize}
\item For an arbitrarily
small constant $\gamma > 0$, there is a constant $\xi > 0$, 
such that, given a $4$-uniform hypergraph on $n$
vertices which is $(1 - \eps)$-almost $2$-colorable for $\eps =
2^{-(\log n)^\xi}$, it is quasi-NP-hard\footnote{A problem is
quasi-NP-hard if it admits a $n^{\poly(\log n)}$ time reduction from
$3$SAT.} to find an independent set of
$\slfrac{n}{\left(2^{(\log
n)^{1-\gamma}}\right)}$ vertices.   

\item 
For any constants $\eps, \delta > 0$, given as input a 
$3$-uniform hypergraph on $n$ vertices which is $(1-\eps)$-almost 
$2$-colorable, 
it is NP-hard to find an
independent set of $\delta n$ vertices.

\item  
Assuming the 
\emph{$d$-to-$1$ Games Conjecture} the following holds.
For any constant $\delta > 0$, 
given a $2$-colorable $3$-uniform hypergraph
on $n$ vertices, it is NP-hard to find an
independent set of $\delta n$ vertices.
\end{itemize}
The hardness result on independent set in almost $2$-colorable
$3$-uniform hypergraphs was earlier known only assuming the Unique Games
Conjecture. In this work we prove the result \emph{unconditionally},
combining Fourier analytic techniques with the Multi-Layered PCP of
\cite{DGKR03}.

For independent sets in $2$-colorable $3$-uniform hypergaphs we prove the 
first strong hardness result, albeit assuming the $d$-to-$1$ Games
Conjecture. Our reduction uses the $d$-to-$1$ Game as a starting point
to construct a
Multi-Layered PCP with the \emph{smoothness} property. We use
analytical techniques based on the Invariance
Principle of Mossel~\cite{Mossel}. The smoothness property is
crucially exploited in a manner similar to recent work of H\aa
stad~\cite{Hastad12} and Wenner~\cite{Wenner12}.    

Our result on almost $2$-colorable $4$-uniform hypergraphs 
gives the first nearly polynomial hardness factor
for independent set in hypergraphs which are (almost) colorable with
constantly many colors. It partially bridges the gap between the
previous best lower bound of $\poly(\log n)$ and the algorithmic upper
bounds of $n^{\Omega(1)}$. This also exhibits a bottleneck to
improving the algorithmic techniques for hypergraph coloring.

\end{abstract}

\section{Introduction}
A $k$-uniform hypergraph consists of a set of vertices and a set of
hyperedges, where each hyperedge is a subset of exactly $k$
vertices. For $k=2$ this defines the usual notion of a graph. An
\emph{independent set} in a $k$-uniform hypergraph is a subset of vertices
such that no hyperedge has all of its $k$ vertices from this subset.
In other words, an independent set does not contain any hyperedge. The
problem of finding independent sets of maximum size in (hyper)graphs is
a fundamental one in combinatorial optimization. Note that the
complement of an independent set is a \emph{vertex cover}, i.e. a
subset of vertices that contains at least one vertex from each
hyperedge. Thus, finding a maximum sized independent
set is  same as finding a minimum vertex cover, an equally
important problem in combinatorics. Throughout this paper, we shall
frequently
use the size of a set of vertices to mean its \emph{relative} size,
i.e. as a fraction of the total weight of the vertices.

The study of independent sets is closely
related to that of hypergraph coloring. A hypergraph is 
$q$-colorable if its vertices can each be assigned one of $q$ distinct
colors so that no hyperedge is monochromatic. The
problem in hypergraph coloring is to determine the
minimum possible value of $q$, which is known as the \emph{chromatic
number} of the hypergraph. Note
that the color classes in a $q$-coloring form a partition of the
vertices into $q$ disjoint independent sets. Thus, a $q$-colorable
hypergraph has an independent set of size at least $\slfrac{1}{q}$. 
On the other hand, if a hypergraph does not have an
independent set of size $\slfrac{1}{q}$  then it is not
$q$-colorable either. Thus, the absence of large independent sets
implies a large chromatic number. 

This connection can also be studied with a
relaxed notion of hypergraph coloring. Say that a hypergraph is \emph{
almost $q$-colorable} if there is a subset of vertices of size at most
 $\eps$ such that removing this subset and all
hyperedges containing a vertex from this subset makes the hypergraph
$q$-colorable. Here $\eps$ can be an arbitrarily
small positive constant. It is easy to see that an almost
$q$-colorable hypergraph contains $q$ pairwise disjoint independent sets 
containing within them at least $(1-\eps)$ fraction of vertices. Thus,
there is at least one independent set of size $\slfrac{(1-\eps)}{q}$. 

The problem of finding independent sets
in (almost) $q$-colorable $k$-uniform hypergraphs is most interesting for small values of $q$
and $k$ and has been studied extensively from the complexity
perspective in a sequence of works including
\cite{GHS, Khot-color,Holmerin,Khot-3,DRS,BK09,BK10,GSinop,KS12,Chan13}. 
For constant $q$ and $k$, the strongest hardness 
result in terms of the relative size of the independent set is by
Khot~\cite{Khot-color} who showed the hardness of finding independent sets of
size $(\log n)^{-c}$ in $5$-colorable $4$-uniform
hypergraphs on $n$ vertices. On the other hand, the best algorithms for
these problems yield independent sets of size $n^{-\Omega(1)}$. 
 
In this work we focus on the case of (almost)
$2$-colorable $3$-uniform and $4$-uniform hypergraphs. The motivation
for our first result stems from the gap between the algorithmic and
complexity results mentioned above. We prove the following.
\begin{theorem}\label{thm-main1}
For any arbitrarily
small constant $\gamma > 0$, there is a constant $\xi > 0$
such that given a $4$-uniform hypergraph $G(V, E)$ on $n$
vertices such that removing $2^{-(\log n)^\xi}$ fraction of vertices and
all hyperedges incident on them makes the remaining hypergraph
$2$-colorable, it is quasi-NP-hard to find an independent set in $G$ of
$\slfrac{n}{\left(2^{(\log
n)^{1-\gamma}}\right)}$ vertices.    
\end{theorem}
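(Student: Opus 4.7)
The plan is a reduction from the Multi-Layered PCP of \cite{DGKR03} to Independent Set in $4$-uniform hypergraphs. The completeness case should yield a hypergraph that is $(1-\eps)$-almost $2$-colorable with $\eps = 2^{-(\log n)^\xi}$, and the soundness case should forbid an independent set of fractional size $2^{-(\log n)^{1-\gamma}}$. The nearly-polynomial gap is obtained by taking the number of layers $L$ in the multi-layered instance to grow as $\Theta((\log n)^{1-\gamma})$, while driving the per-pair soundness down to $2^{-\Omega((\log n)^\xi)}$ via polylogarithmic Raz parallel repetition on a \textsc{3Sat} base. The DGKR construction guarantees that any small-fraction selection of labelings spread across many layers must violate the projections on most pairs of layers.

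Next I would construct the hypergraph. For each variable $u$ of the PCP, introduce a block of vertices indexed by $\{-1,1\}^{R_u}$ (the Long Code of $u$). Hyperedges come from a $4$-query dictatorship test tailored to $2$-colorability: pick two layers $i<j$, linked variables $u$ in layer $i$ and $v$ in layer $j$ with projection $\pi:[R_v]\to[R_u]$; pick $x_1,x_2\in\{-1,1\}^{R_u}$ and $y_1,y_2\in\{-1,1\}^{R_v}$ uniformly subject to $x_1(a)\,x_2(a)\,y_1(b)\,y_2(b)=-1$ for every $a\in[R_u]$ and $b\in\pi^{-1}(a)$; then perturb $y_1,y_2$ with a tiny amount of independent noise. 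The hyperedge is $\{(u,x_1),(u,x_2),(v,y_1),(v,y_2)\}$.

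For completeness, given a satisfying labeling $\ell$ of the label cover, define the coloring $f_u(x)=x(\ell_u)$. The product constraint forces $x_1(\ell_u)x_2(\ell_u)y_1(\ell_v)y_2(\ell_v)=-1$, so the four values are not all equal and the hyperedge is bichromatic, except on an $\eps$-fraction affected by noise; discarding the vertices responsible gives the $(1-\eps)$-almost $2$-colorable instance. For soundness, suppose an independent set of fractional size $\beta = 2^{-(\log n)^{1-\gamma}}$ exists, and let $f_u:\{-1,1\}^{R_u}\to\{0,1\}$ be its indicator on the $u$-block. Independence translates into $\Ex[f_u(x_1)f_u(x_2)f_v(y_1)f_v(y_2)]=0$ on every hyperedge. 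Expanding each $f$ in the Fourier basis and exploiting the noise-smoothed test distribution, I would derive an averaged inequality which forces, for many $(u,v)$, the four functions to share an influential coordinate related by $\pi$. Standard Long Code decoding then produces a PCP labeling satisfying an $\Omega(\beta^{O(1)})$ fraction of constraints, contradicting the DGKR soundness once $L$ and the repetition parameter are sufficiently large.

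The main obstacle is parameter balancing. The hypergraph has size $n \approx 2^{\max_u R_u}\cdot |V|$, where $R_u$ grows with the alphabet (hence with the number of repetitions) and $|V|$ grows with $L$. To achieve the $2^{-(\log n)^{1-\gamma}}$ lower bound while keeping the reduction quasi-polynomial, one must simultaneously take $L=\Theta((\log n)^{1-\gamma})$, drive the label cover soundness to $2^{-\Omega((\log n)^\xi)}$ with only a quasi-polynomial blow-up, and ensure the Fourier-analytic soundness step loses only a polynomial factor in $\beta$. Matching these constraints pins down the admissible $\xi$ as a function of $\gamma$, and is the crux of the argument.
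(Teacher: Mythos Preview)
Your proposal has a fundamental gap: the use of Long Codes makes the target hardness factor unattainable within a quasi-polynomial reduction. After $r$ rounds of parallel repetition the label set has size $2^{\Theta(r)}$, so a Long Code block has $2^{2^{\Theta(r)}}$ vertices and the hypergraph size satisfies $\log n \ge 2^{\Omega(r)}$. Hence $r = O(\log\log n)$, and any Fourier-decoding soundness argument yields at best $\delta \le 2^{-O(r)} = (\log n)^{-O(1)}$, i.e.\ only a $\poly(\log n)$ hardness factor---precisely the previous bound of \cite{Khot-color}, not $2^{(\log n)^{1-\gamma}}$. The paper states this explicitly in the overview: ``one cannot use Long Codes which have an unmanageable blowup for our purpose. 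Instead, we use Hadamard Codes which are exponentially shorter.'' With Hadamard codes the block size is $2^{O(r)}$, permitting $r = \log^{\ell} n$ repetitions and the almost-polynomial gap.

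Several secondary choices are also misaligned with what the parameters force. The Multi-Layered PCP of \cite{DGKR03} is not the right starting point here; it is used in the paper only for the $3$-uniform results (Theorems~\ref{thm-main2} and~\ref{thm-main3}), where the asymmetric $1$-vs-$2$ query pattern requires the layered structure. For a $4$-query test split $2$--$2$, a plain $2$-prover $1$-round game suffices, and taking $L = \Theta((\log n)^{1-\gamma})$ layers buys nothing toward the gap. Moreover, the paper does not obtain almost-$2$-colorability from noise in the test (which would corrupt hyperedges, not vertices); it comes instead from the near-perfect but imperfect completeness $1 - 2^{-\Omega(\sqrt{\log n})}$ of the Khot--Ponnuswami \textsc{Max-$3$Lin} instance. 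Blocks touching an unsatisfied equation are the removed vertices, and the Hadamard-code test (folding over homogeneous constraints, using the non-homogeneous ones as the ``$e_1$'' shift) is exact on the remaining blocks. Your parameter-balancing paragraph correctly identifies the crux, but the constraints you list are mutually infeasible under a Long Code encoding.
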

This is the first result showing an almost polynomial factor hardness
for independent set in (almost) $q$-colorable $k$-uniform hypergraphs.
While existing algorithms are for the case of exact colorability, they
rely on the presence of a small number of 
pairwise disjoint independent sets 
covering almost
all the vertices,
and are also applicable to the case of almost colorability. Thus,
the above result indicates a bottleneck in the improvement of existing
algorithms. The hardness factor obtained is exponentially stronger than
the previous lower bound of $\poly(\log n)$ by Khot~\cite{Khot-color},
albeit for the case of exact colorability.

Our next result is an analogue of the result of Bansal and
Khot~\cite{BK09,BK10} who showed, assuming the Unique Games Conjecture
(UGC),
that it is NP-hard to find an independent set of $\delta$ fraction
of vertices (for any constant $\delta > 0$) in an almost $2$-colorable
graph (i.e. almost bipartite graph). The related work of Guruswami and
Sinop~\cite{GSinop} showed a similar result for almost $2$-colorable
$3$-uniform hypergraphs (with the hardness factor depending on the
degree), assuming UGC. We show that it is possible to
prove the result for $3$-uniform hypergraphs \emph{without}
assuming UGC. 
\begin{theorem}\label{thm-main2}
For any constants $\eps, \delta > 0$, given a $3$-uniform hypergraph
on $n$ vertices
such that removing at most $\eps$ fraction of vertices and the
hyperedges incident on them makes the remaining hypergraph $2$-colorable, 
it is NP-hard to find an
independent set of $\delta n$ vertices. 
\end{theorem}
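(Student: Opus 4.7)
The plan is to reduce from a Multi-Layered Label Cover instance $\Phi$, obtained via the Multi-Layered PCP of \cite{DGKR03}, to the problem of finding independent sets in an almost $2$-colorable $3$-uniform hypergraph. The Multi-Layered PCP will play the role that the Unique Games Conjecture plays in the earlier reductions of Bansal--Khot \cite{BK09,BK10} and Guruswami--Sinop \cite{GSinop}. The starting instance has $L$ layers, projection constraints between every pair of layers, perfect completeness, and arbitrarily small soundness $\eta > 0$; the parameters $L, R, \eta$ will be tuned to the target constants $\eps, \delta$.

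The hypergraph $G = (V, E)$ is then built by the standard long-code style reduction: each label cover vertex $u$ contributes a block of vertices indexed by $\{-1,+1\}^{R_u}$, folded so that $x$ and $-x$ are identified, and hyperedges are generated by a $3$-query test. A hyperedge involves picking a root $w$ with two neighbors $u_1, u_2$ in distinct higher layers (with projections $\pi_1, \pi_2$ to $w$) and then drawing a triple $(x^1, x^2, x^3)$ from a carefully designed $3$-ary distribution $\D_{\pi_1, \pi_2}$; the hyperedge is then $\{(u_1, x^1), (u_2, x^2), (u_2, x^3)\}$ or a similar configuration. The distribution $\D$ is built so that, on any triple of consistent labels $(\ell, m, m)$ with $\pi_1(\ell) = \pi_2(m)$, the three queried bits $(x^1_\ell, x^2_m, x^3_m)$ are never all equal. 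This yields the completeness: starting from a good labeling $A$ of $\Phi$, the partial coloring $(u, x) \mapsto x_{A(u)}$ creates no monochromatic hyperedge on the blocks of those $u$ participating only in satisfied constraints. Tuning $\eta$ small enough, the total measure of ruined blocks is at most $\eps$, and removing them yields a $2$-colorable subhypergraph.

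For the soundness, suppose $S \subseteq V$ is an independent set of density at least $\delta$ and let $f_u : \{-1,+1\}^{R_u} \to \{0,1\}$ be its indicator on block $B_u$. Since no hyperedge lies entirely inside $S$, the $3$-function correlation $\Ex_{\D}[f_{u_1}(x^1) f_{u_2}(x^2) f_{u_2}(x^3)]$ averages to zero while the marginals average to $\delta$. Expanding each $f_{u_i}$ in the Fourier basis and using the projection structure of $\D$, this correlation reduces to a sum over matched Fourier coefficients indexed by label pairs $(\ell, m)$ with $\pi_1(\ell) = \pi_2(m)$. A Fourier-analytic argument, exploiting the multi-layer structure to average over $\binom{L}{2}$ pairs of layers and invoking the weak density of $\Phi$, would then show that the existence of a density-$\delta$ independent set produces a randomized labeling of $\Phi$ satisfying a non-negligible fraction of constraints, contradicting soundness for small enough $\eta$.

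The main technical obstacle is the joint design of the $3$-ary distribution $\D$ and the matching Fourier analysis. With only three queries, the distribution has very little slack to simultaneously guarantee: (i) the dictator-based completeness (no triple of consistent bits is ever constant), (ii) a Fourier support that is rich enough to force cancellations everywhere except on ``low-degree, matched'' coefficients, and (iii) compatibility with the projections $\pi_1, \pi_2$ so that the surviving coefficients actually decode to labels of $\Phi$. This is significantly more constrained than the $4$-uniform setting of Theorem \ref{thm-main1}, and is the step where the smoothness afforded by the Multi-Layered PCP is used most crucially.
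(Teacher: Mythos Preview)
Your outline has the right outer verifier (the Multi-Layered PCP of \cite{DGKR03} with weak density) and the right high-level shape, but the inner verifier you sketch cannot work as stated, and the place where the ``almost'' in ``almost $2$-colorable'' enters is misidentified.

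Concretely: the Multi-Layered PCP of \cite{DGKR03} has \emph{perfect} completeness. If your test distribution $\D$ over $\{-1,+1\}$ truly guarantees that for every consistent label pair the three queried bits are never all equal, then in the YES case the dictator coloring $2$-colors the \emph{entire} hypergraph --- there are no ``ruined blocks'' to remove, and no role for the parameter $\eps$. What you would then be proving is the unconditional version of Theorem~\ref{thm-main3}, which is not known. Conversely, if $\D$ does not have this property, completeness fails outright. Your appeal to ``tuning $\eta$'' and to ``the smoothness afforded by the Multi-Layered PCP'' does not help here: $\eta$ is the soundness parameter, and the PCP of \cite{DGKR03} used for this theorem has no smoothness --- smoothness appears only in the $d$-to-$1$-based construction for Theorem~\ref{thm-main3}.

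The paper resolves this by changing the alphabet of the long code rather than the outer PCP. The long code is over $\{1,2,*\}^m$ under the biased product measure that makes each coordinate equal to $*$ with probability $\eps$. A hyperedge $(x,y,z)$ with $x\in\mc{H}^u$, $y,z\in\mc{H}^v$ is added whenever, for every $j$ and $i=\pi(j)$, the triple $(x_i,y_j,z_j)$ is neither $(1,1,1)$ nor $(2,2,2)$. The $*$ symbol is exactly the slack you were looking for: in the YES case one removes the $\eps$-mass of vertices whose dictator coordinate is $*$, and the rest are $2$-colored by that coordinate. The soundness is then not a Fourier argument at all but a junta-style analysis: a large monotone independent set in $\{1,2,*\}^m$ yields, via Russo's Lemma and Friedgut's theorem, a small core and two elements agreeing (in $\{1,2\}$) only on that core; combined across an edge of the PCP this produces a short label list, and a sunflower/disjointness count (Lemma~\ref{lem-inter}) plus weak density finishes the decoding. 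Your Fourier plan over $\{-1,+1\}$ never gets off the ground because the required $3$-ary $\D$ does not exist without the extra $*$ symbol.
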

The instances constructed in the Theorems \ref{thm-main1} and
\ref{thm-main2} are degree regular, and thus also work for an
alternate definition of almost colorability -- which involves removing
$\eps$ fraction of the hyperedges instead of vertices --  used in
\cite{GSinop}.

Our final result proves the first strong hardness factor for finding
independent sets in $2$-colorable $3$-uniform hypergraphs, assuming
 the
\emph{$d$-to-$1$ Games Conjecture} of Khot~\cite{Khot02}. 
\begin{theorem}\label{thm-main3} 
Assuming the 
\emph{$d$-to-$1$ Games Conjecture} the following holds.
For any constant $\delta > 0$,
given a $2$-colorable $3$-uniform hypergraph
on $n$ vertices, it is NP-hard to find an
independent set of $\delta n$ vertices.
\end{theorem}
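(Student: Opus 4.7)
The plan is to reduce from a $d$-to-$1$ Label Cover game, guaranteed by the $d$-to-$1$ Games Conjecture, to the independent set problem in $2$-colorable $3$-uniform hypergraphs via a long-code dictatorship test, routing the reduction through a Multi-Layered PCP with the smoothness property in the style of \cite{DGKR03}. The first step is to transform the $d$-to-$1$ game into such a multi-layered instance: $L$ layers of super-nodes with projections between layers, where smoothness guarantees that for every super-node $u$ and every small set $T$ of labels of $u$, the probability that a random lower-layer super-node projecting to $u$ has its label fall in $T$ is tiny. This property is what will ultimately control the combinatorial blowup coming from $d$-to-$1$ collisions.

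For the hypergraph, I attach to each super-node $u$ a copy of the long code $\{-1,+1\}^R$ (appropriately folded), and place hyperedges according to the following $3$-query dictator test: sample three super-nodes $u_1,u_2,u_3$ from three distinct layers whose labels project to a common super-vertex, sample $(x_1,x_2,x_3)\in(\{-1,+1\}^R)^3$ coordinate-wise from the distribution $\mc{D}$ on $\{-1,+1\}^3$ that is uniform on the six non-monochromatic triples (so each pairwise correlation is $-\tfrac{1}{3}$), and apply a tiny per-coordinate noise. For completeness, fix a labeling $\ell$ of the game and color $(u,x)$ by the dictator $x_{\pi_u(\ell_u)}$; consistent labels ensure that the three queried bits live in a single coordinate of $\mc{D}$, which by design avoids all-$+1$ and all-$-1$ outcomes, so every hyperedge is non-monochromatic, giving a valid $2$-coloring.

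The core of the argument is soundness. Suppose $S$ is an independent set of relative size $\delta$, and let $f_u:\{-1,+1\}^R\to\{0,1\}$ be its indicator on the copy of $u$, so $\Ex_u[\Ex_x f_u(x)]=\delta$. Independence forces $\Ex[f_{u_1}(x_1)f_{u_2}(x_2)f_{u_3}(x_3)]=0$. I would invoke Mossel's invariance principle to transport this identity to Gaussian space: since $\mc{D}$'s pairwise correlations are bounded away from $\pm 1$ and the noise smooths high-frequency parts, the Boolean acceptance probability matches its Gaussian analog up to $o_R(1)$ plus an error controlled by coordinates of the $f_u$ whose low-degree influence is small. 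A Gaussian three-set correlation bound---that any three measurable subsets of Gauss space of mean $\delta$, drawn under the $-\tfrac{1}{3}$-off-diagonal covariance of $\mc{D}$, must jointly overlap with probability at least some $\tau(\delta)>0$---then forces some $f_u$ to carry a coordinate of significant low-degree influence.

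Finally, the smoothness of the multi-layered PCP is used, in the spirit of \cite{Hastad12, Wenner12}, to convert influential coordinates of the $f_u$ into a consistent labeling of the $d$-to-$1$ game: smoothness ensures that pulling back the small lists of influential labels through $d$-to-$1$ projections does not inflate them catastrophically, so sampling a random element of each list yields a strategy of value $\Omega_\delta(1)$ in the $d$-to-$1$ game. Taking the soundness of the starting game small enough in terms of $\delta$ delivers the desired contradiction, and hence Theorem \ref{thm-main3}. The most delicate step I anticipate is the Gaussian three-set isoperimetric estimate $\tau(\delta)>0$ for the specific negatively-correlated $3$-wise distribution $\mc{D}$, since the negative pairwise correlations rule out a direct appeal to reverse hypercontractivity bounds designed for nonnegatively correlated spaces; interlacing this Gaussian estimate with the smoothness-based decoding to obtain a labeling strategy is the technical heart of the proof.
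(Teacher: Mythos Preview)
Your high-level architecture is right and matches the paper: start from the $d$-to-$1$ Game, build a smooth Multi-Layered PCP, attach long codes, run a $3$-query non-monochromatic test, and analyze soundness via invariance with smoothness controlling the $d$-to-$1$ blowup. But the concrete test you propose and the paper's test are quite different, and the step you flag as ``delicate'' is in fact the gap.

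The paper does \emph{not} query three distinct long codes with the symmetric ``uniform on the six non-monochromatic triples'' distribution. Instead it picks one query $x$ from $H^u$ and \emph{two} queries $y,z$ from the \emph{same} long code $H^v$, under a distribution $\mc{D}_{\delta,r}$ in which $x$ is almost independent of $(y,z)$ (correlation $\leq\delta$) while each pair $(y_j,z_j)$ is almost perfectly anti-correlated (correlation $-1+\eta$ with $\eta=2\delta/r$). This asymmetry is the point. The soundness analysis then proceeds in two stages: (i) use \emph{bilinear} Gaussian stability (Mossel's two-function bound) to decouple $x$, reducing to a lower bound on $\E[f_v(y)f_v(z)]$; (ii) since $z\approx -y$, rewrite $\E[f_v(y)f_v(z)]$ as $\E[f_v(y)\,T_{1-\eta}f_v(-y)]$ and lower-bound it by reverse hypercontractivity on the Boolean cube, which applies because $y$ and $-z$ are \emph{positively} correlated. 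Smoothness is used not merely to bound list sizes at decoding time, but structurally: it guarantees that small Fourier sets are shattered by $\pi_{v\to u}$ for most neighbors $u$, which is what lets one replace the block Bonami--Beckner operator $\ol{T}_{1-\gamma}$ (which depends on $u$) by the coordinate-wise $T_{1-\gamma}$, and what justifies the approximation $\E[f_v(y)f_v(z)]\approx\E[f_v(y)T_{1-\eta}f_v(-y)]$.

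Your symmetric test with pairwise correlations $-\tfrac13$ would instead require a genuine \emph{trilinear} Gaussian lower bound: for $(X_1,X_2,X_3)$ jointly Gaussian with off-diagonal covariance $-\tfrac13$, any three $[0,1]$-valued functions of mean $\delta$ satisfy $\E[F_1F_2F_3]\geq\tau(\delta)>0$. You correctly note that reverse hypercontractivity does not apply to negatively correlated spaces, and you do not supply an alternative argument; no such clean trilinear bound for negative correlations is invoked anywhere in the paper, and producing one is not routine. The paper's asymmetric test is engineered precisely to avoid this obstacle by turning the trilinear problem into a bilinear one plus a reverse-hypercontractive estimate in a regime where it is available.
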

We note that Dinur, Regev and Smyth~\cite{DRS} showed
that $2$-colorable $3$-uniform hypergraphs are NP-hard to color with
constantly many colors. However, their reduction produced instances
with linear sized independent sets in the NO Case, and thus did not
yield any hardness for finding independent sets in such hypergraphs. 
Our result therefore proves a stronger property, albeit
assuming the conjecture.

In the remainder of this section we shall formally state the problems
we study in this work, give an overview of previous related work and
describe the techniques used in our results.

\subsection{Problem Definition}
Given a hypergraph $G$, let ${\sf IS}(G)$ be the size of the maximum
independent set in $G$ and let $\chi(G)$ be its chromatic number, i.e. the
minimum number of colors required to color the hypergraph such that every
hyperedge is non-monochromatic. 
We define the problem of finding independent sets
in $q$-colorable hypergraphs as follows.

\medskip
\noindent
{\bf {\sc ISColor}$(k, q, Q)$} : 
Given a $k$-uniform hypergraph $G(V, E)$, decide between,
\begin{itemize}
\item YES Case: $\chi(G) \leq q$.
\item NO Case: ${\sf IS}(G) < \frac{|V|}{Q}$.
\end{itemize}
It is easy to see that if {\sc ISColor}$(k, q, Q)$ is NP-hard for some
parameters $q, Q \in \mathbb{Z}^+$ then it is
NP-hard to color a $q$-colorable $k$-uniform hypergraph with $Q$ colors. 
In this paper
we also study a slight variant of this problem, in which the goal is
to find independent sets in almost colorable hypergraphs. For parameters
$k, q, Q$, and a parameter $\eps >0$ it is defined as follows.

\medskip
\noindent
{\bf {\sc ISAlmostColor}$_\eps(k, q, Q)$}: Given a $k$-uniform hypergraph 
$G(V, E)$, decide
between,
\begin{itemize}
\item YES Case: There is a subset of $(1-\eps)$ fraction of the
vertices, such that for 
the $k$-uniform hypergraph $G'$ on this subset of vertices 
containing the hyperedges which lie completely inside it, $\chi(G')
\leq q$. We also denote this by $\chi_\eps(G) \leq q$.
\item NO Case: ${\sf IS}(G) < \frac{|V|}{Q}$.
\end{itemize}
Note that the second property above, i.e. ${\sf IS}(G) < \frac{|V|}{Q}$, 
implies that $\chi_\eps(G) \geq Q-1$ for
sufficiently small $\eps > 0$.

Using the above definitions the results of this paper can be concisely
restated as follows. The number of vertices in the hypergraph is
denoted by $n$.

\subsubsection*{Our Results}

\begin{thm2}(Theorem \ref{thm-main1})
For an arbitrarily
small constant $\gamma > 0$, there is a constant $\xi > 0$ such that
{\sc ISAlmostColor}$_\eps(4, 2, Q)$ is quasi-NP-hard, where $\eps = 2^{-(\log
n)^\xi}$ and $Q = 2^{(\log n)^{1-\gamma}}$.
\end{thm2}

\begin{thm2} (Theorem \ref{thm-main2}) 
For any constant $Q > 0$ and arbitrarily small constant $\eps > 0$,\\ 
{\sc ISAlmostColor}$_\eps(3, 2, Q)$ is NP-hard. 
\end{thm2}

\begin{thm2} (Theorem \ref{thm-main3}) Assuming the $d$-to-$1$
Games Conjecture the following holds.
For any constant $Q > 0$, {\sc ISColor}$(3, 2, Q)$ is NP-hard. 
\end{thm2}

\subsection{Previous Work}
The problem of finding independent sets in (almost) colorable graphs
and hypergraphs has been studied extensively from algorithmic as
well as complexity perspectives. On $2$-colorable, i.e. bipartite
graphs, the maximum independent set can be computed in polynomial
time. On the other hand, a significant body of work -- including 
\cite{Wigderson}, \cite{Blum}, \cite{KMS}, \cite{BK},
\cite{ACC}, and \cite{KT12} -- 
has shown that a $3$-colorable graph can be efficiently
colored with $n^\alpha$ colors where the currently best value of
$\alpha \approx 0.2038$ was shown in \cite{KT12}. In particular, this
shows that {\sc ISColor}$(2, 3, n^{\alpha})$ can be efficiently
solved. For $2$-colorable
$3$-uniform hypergraphs Krivelevich et al.~\cite{KNS} gave a coloring
algorithm using $O(n^{1/5})$ colors, thus solving {\sc ISColor}$(3, 2,
O(n^{1/5}))$. Chen and Frize~\cite{CF} and Kelsen, Mahajan and
Ramesh~\cite{KMR} independently gave algorithms for coloring
$2$-colorable $4$-uniform hypergraphs using $O(n^{3/4})$ colors, which
implies an algorithm for  {\sc ISColor}$(4, 2,
O(n^{3/4}))$. In related work Chlamtac and Singh~\cite{CS} gave an
algorithm that on 
a $3$-uniform hypergraph which has an independent set of $\gamma
n$ vertices, efficiently computes an independent set of 
$n^{\Omega(\gamma^2)}$ vertices. While the algorithmic approaches have
studied the case of exactly colorable hypergraphs, 
they rely on the existance of
disjoint independent sets and are also applicable to almost colorable
hypergraphs.   

Several hardness results for these problems have been obtained
using either the PCP Theorem or well known conjectures 
as the starting point. 
Under standard  complexity assumptions,
Khot \cite{Khot-color} showed the hardness of finding independent sets of
size $(\log n)^{-c}$ in $5$-colorable $4$-uniform
hypergraphs on $n$ vertices. Building upon similar work of Guruswami,
H\aa stad and Sudan~\cite{GHS}, 
Holmerin~\cite{Holmerin} showed that it is NP-hard to
find an independent set of size $\delta$ in a
$2$-colorable $4$-uniform hypergraph, where $\delta > 0$ is any
constant. For $3$-uniform hypergraphs which
are $3$-colorable, Khot~\cite{Khot-3} showed a hardness of finding
independent sets of size $(\log\log n)^{-c}$.  On
$3$-colorable
graphs, assuming the so called \emph{Alpha Conjecture}, Dinur
et al.~\cite{DMR} showed it is NP-hard to find independent sets of
size $\delta$. Bansal and Khot~\cite{BK09, BK10}
assumed the more well known Unique Games Conjecture to show that it is
NP-hard to find independent sets of size $\delta$ in
\emph{almost bipartite} (i.e. almost $2$-colorable) graphs. Guruswami
and Sinop~\cite{GSinop} showed a similar result for almost $2$-colorable
$3$-uniform hypergraphs, the focus of their work being the case of
bounded degree hypergraphs.  

\medskip
It is pertinent to note that while the algorithmic results have
$\poly(n)$ factors, the previous best inapproximability was a
$\poly(\log n)$ factor~\cite{Khot-color}. Our result for independent
set in almost $2$-colorable $4$-uniform hypergraphs -- Theorem
\ref{thm-main1} -- partially bridges this gap by showing an almost
polynomial factor $2^{-(\log n)^{1-\eps}}$, an exponential improvement
over the previous lower bound.

Theorem \ref{thm-main2} unconditionally proves the hardness result for
independent set in almost $2$-colorable $3$-uniform hypergraphs,
which was earlier known only assuming the Unique Games Conjecture. We
also show -- in Theorem \ref{thm-main3} -- the first inapproximability
for the case of $2$-colorable $3$-uniform hypergraphs assuming
 the $d$-to-$1$ Games Conjecture.

In the rest of this section we give an informal overview of the
techniques used to proves our results.

\subsection{Overview of Techniques}
The results of this work follow a common template of
reductions from an instance of a NP-hard constraint satisfaction
problem -- the so called \emph{Outer Verifier} -- via its combination 
with a proof encoding -- the \emph{Inner Verifier}. However, the techniques
used to prove Theorems \ref{thm-main1}, \ref{thm-main2} and
\ref{thm-main3} are somewhat varied and we describe them separately. 

\subsubsection*{Almost $2$-Colorable $4$-Uniform Hypergraphs}
The goal of this result is to prove an almost polynomial hardness
factor for independent set in almost $2$-colorable $4$-uniform
hypergraphs. To accomplish this, the size of the
hardness reduction needs to be bounded. Thus, one cannot use \emph{Long
Codes} which have an unmanageable blowup for our purpose. Instead, we
use Hadamard Codes which are exponentially shorter and have
been used in previous works~\cite{KP06, KS08} for a similar reason.  
The Hadamard Code
$H^v$ of an element $v \in \Ft^m$ is indexed by all $x \in \Ft^m$ such
that $H^v(x) := x\cdot v \in \Ft$. The ``gadget'' used for the
reduction is as follows. 

Consider the following $4$-uniform
hypergraph. The vertex set is $\Ft^m$. Let $e_1 \in \Ft^m$ be the
element which has $1$ in the first coordinate and $0$ everywhere else.
For any $x, y, z \in \Ft^m$, add a hyperedge  between the elements 
$x, y, x+z$ and $y+z+e_1$, where the addition is done in the vector
space $\Ft^m$. This is (essentially) a $4$-uniform hypergraph.
Consider any element $v \in \Ft^m$ such that $v_1 = 1$. It is easy to
see that $H^v(x) + H^v(x+z) + H^v(y) + H^v(y + z + e_1) = 1$, and thus
the coloring to $\Ft^m$ given by the value of $H^v$ is a valid
$2$-coloring of this hypergraph.  On the other hand it can be shown
that any independent set $S \subseteq \Ft^m$ of size $\delta 2^m$ can  be
\emph{decoded} into a list of elements $v$ such that $v_1 = 1$. This
analysis uses only some basic tools from Fourier Analysis.

The above gadget can be combined with a parallel repetition of an
appropriate linear constraint system. In our case, we choose a
specialized instance of {\sc Max-$3$Lin} constructed by Khot and
Ponnuswami~\cite{KP06}. The main idea in this combination is to do the 
\emph{folding} only over the homogeneous constraints and use the
non-homogeneous constraints to play the role of $e_1$ in the above
gadget. The almost polynomial hardness factor is obtained by an
appropriate number of rounds of parallel repetition which is afforded
by the parameters of the {\sc Max-$3$Lin} instance used in the
reduction.

\subsection*{Almost $2$-Colorable $3$-Uniform Hypergraphs}
This reduction uses as the Outer Verifier a layered constraint
satisfaction problem, referred to as the \emph{Multi-Layered PCP}.
This PCP was used earlier by Khot~\cite{Khot-3} for similar results
for $3$-Colorable
$3$-Uniform Hypergraphs and by Dinur, Guruswami, Khot and 
Regev~\cite{DGKR03}
and Sachdeva and Saket~\cite{SS11} in
their hardness results for hypergraph vertex cover. Due to some
fundamental
limitations of existing techniques, the use of this PCP is
necessitated for proving results for independent sets in $3$-uniform 
hypergraphs.     

The Inner Verifier uses a \emph{biased} Long Code encoding similar to
the reductions of Dinur, Khot, Perkins and Safra~\cite{DKPS}, Khot 
and Saket~\cite{KS12}
and Sachdeva and Saket~\cite{SS13}. The following gadget encapsulates
the Inner Verifier. Consider the biased Long Code $\mc{H} =
\{1,2,*\}^m$. The associated measure is induced by sampling each
coordinate independently to be $1$ or $2$ with probability
$\frac{1-\eps}{2}$ and $*$ with probability $\eps$. Let $\mc{H}_0,
\mc{H}_1, \dots, \mc{H}_d$ be $d+1$ identical copies of $\mc{H}$. A vertex
weighted $3$-uniform hypergraph is constructed by taking the union of the $d+1$
Long Codes with weights given by the measure. Consider $x \in
\mc{H}_0$ and $y, z \in \mc{H}_k$ ($1\leq k \leq d$), such that for
any $i \in [m]$ the tuple $(x_i, y_i, z_i)$ is not $(1,1,1)$ or
$(2,2,2)$. Add a hyperedge between $x, y$ and $z$ for all such
choices. It is easy to see that for any $j \in [m]$, removing all the 
vertices $x$ such that $x_j = *$ and all hyperedges incident on these
vertices makes the hypergraph $2$-colorable by coloring the rest of
the vertices $y$ according to whether $y_j = 1$ or $2$. On the other
hand, using Russo's Lemma and Friedgut's Junta Theorem one can show
that if there is an independent set $\mc{I}$ which has at least 
$\delta$ fraction of measure from each of the $d+1$ Long Codes, then
it can be decoded into a distinguished coordinate $\ell \in [m]$. This
Inner Verifier is robust enough to be combined with the Multi-Layered
PCP to yield the desired result. 

The hardness factor obtained, however, is much weaker than in the
previous reduction, due to our use of Long Codes and also due to the
structure of the Multi-Layered PCP.

\subsubsection{$2$-Colorable $3$-Uniform Hypergraphs}
For independent set in $2$-colorable $3$-uniform hypergraphs, the
existing PCP techniques seem insufficient to yield the desired
results. Thus, we rely on the $d$-to-$1$ Games Conjecture of
Khot~\cite{Khot02}. This conjecture was earlier used to
establish hardness results for independent sets in $4$-colorable
graphs~\cite{DMR}. Our use of this conjecture is similar to that of
O'Donnell and Wu~\cite{OW} who showed an optimal $\frac{5}{8} + \eps$
factor hardness for a satisfiable instance of {\sc Max-$3$CSP}. In
a recent work H\aa stad~\cite{Hastad12} showed the same result
unconditionally. We also make use of certain techniques used in
\cite{Hastad12}. 

The Outer Verifier in our reduction is a multi-layered PCP constructed
using the $d$-to-$1$ games problem. The construction of this PCP
ensures a \emph{smoothness} property which has been used in several
previous works~\cite{Khot-3, KS06, KS08a, GRSW} including the above
mentioned work of H\aa stad~\cite{Hastad12} and a related work of
Wenner~\cite{Wenner12}. The Inner Verifier yields a $3$-uniform
hypergraph with hyperedges corresponding to a $3$-query PCP test over
Long Codes which
is in a same vein as the test used in \cite{OW} and \cite{Hastad12}. 
The analysis is based in large part on the Invariance Principle 
of Mossel~\cite{Mossel}, the application of which follows an approach
used by O'Donnell and Wu~\cite{OW}, while avoiding certain
complications they face. The smoothness property is crucial for the
analysis and is leveraged in a manner similar to \cite{Hastad12}.

\medskip
\noindent
{\bf Organization of Paper.} 
The next section contains the known PCP constructions which shall be
the starting points in our reductions for Theorems \ref{thm-main1} and
\ref{thm-main2}. We shall also state the $d$-to-$1$
Games Conjecture that we shall require for proving Theorem
\ref{thm-main3} and describe the smooth layered PCP we construct
based on this assumption, a sketch of the construction being 
deferred
to Section \ref{sec-dto1multi}.

Sections \ref{sec-main1}, \ref{sec-main2} and \ref{sec-main3} contain
the  hardness reduction and proofs for Theorems 
\ref{thm-main1}, \ref{thm-main2} and \ref{thm-main3} respectively
along with a description of the mathematical tools needed to complete
the analyses.
 
\section{Preliminaries}
In this section we shall describe some useful results in PCPs and
hardness of approximation along with the description of the 
 $d$-to-$1$ Games Conjecture. 

For proving Theorem \ref{thm-main1} we shall begin with the following 
theorem of Khot and
Ponnuswami~\cite{KP06} on the hardness of a specific {\it gap} version
 of {\sc Max-$3$Lin} with a desirable setting of the parameters. 
An instance of {\sc Max-$3$Lin} consists of a system of linear
equations over $\Ft$ where each equation has exactly $3$ variables,
the goal being to find an assignment to the variables 
satisfying the maximum number of equations. The instance is said to be
$d$-regular if each variable occurs in exactly $d$ equations.  
\begin{theorem}
\label{thm-KP} \cite{KP06} Given a $7$-regular instance $\mathcal{A}$
of  {\sc Max-$3$Lin} over $\Ft$
on $n$ variables, unless $\textnormal{NP} \subseteq
\textnormal{DTIME}(2^{O(\log ^2 N)})$, there is no polynomial time algorithm to
distinguish between the following two cases,

\begin{itemize}
\item \textnormal{{\bf YES} Case.} There is an assignment to the variables of $\mathcal{A}$ that
satisfies $1 - c(n) := 1 - 2^{-\Omega(\sqrt{\log n})}$ fraction of the equations
(completeness).

\item \textnormal{{\bf NO} Case.} No assignment to the variables of $\mathcal{A}$ satisfies more than
$1 - s(n) := 1 - \Omega(\log^{-3}n)$ fraction of the equations (soundness).
\end{itemize}
\end{theorem}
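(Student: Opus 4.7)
The plan is to follow the standard two-stage PCP template: first produce a two-prover one-round game with suitably small soundness, then compose it with a three-query linear inner verifier tailored to Max-$3$Lin. I would start from the PCP theorem to obtain gap-$3$SAT on $N$ variables with constant gap, convert it into a projection game $G$ with perfect completeness and some fixed soundness $1-c_0 < 1$, and apply an appropriate amount of parallel repetition (Raz's lemma, possibly in a derandomized variant to control the alphabet blowup) to obtain a game of soundness $2^{-\Omega(r)}$. Choosing $r = \Theta(\sqrt{\log n})$ and $N$ of order $2^{\Theta(\sqrt{\log n})}$ keeps the entire construction inside the quasi-polynomial budget $2^{O(\log^2 N)}$ and yields a Max-$3$Lin instance on $n$ variables.

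The inner verifier is a H\aa stad-style three-query linear test applied to folded Hadamard or long-code encodings of the provers' answers; folding along the homogeneous direction and using the non-homogeneous bit supplied by the projection constraint produces the final system of equations over $\Ft$. Honest encodings corresponding to a satisfying strategy for the repeated game satisfy all but a $2^{-\Omega(r)} = 2^{-\Omega(\sqrt{\log n})}$ fraction of the equations, arising from the noise inserted into the test for soundness, which is exactly the stated completeness $1-c(n)$. For soundness, suppose an assignment satisfies strictly more than $1-s(n) = 1-\Omega(\log^{-3} n)$ of the equations. Expanding the expected test acceptance in a Fourier basis yields a sum of products of three Fourier coefficients of the decoded tables; a $\log^{-O(1)} n$ excess over random forces at least one coefficient of non-trivial but bounded degree to be $\log^{-O(1)} n$-large on each side, and standard list-decoding of the encodings then produces labels for the repeated game winning with probability $\log^{-O(1)} n$. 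For balanced parameters this exceeds the Raz soundness bound $2^{-\Omega(\sqrt{\log n})}$, contradicting the NO hypothesis.

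The main technical obstacle is the parameter balancing: the three independent Fourier/decoding losses compound to produce the cubic exponent in $s(n) = \log^{-3} n$, and this must be reconciled simultaneously with the completeness deviation $2^{-\sqrt{\log n}}$, the alphabet growth under repetition, and the quasi-polynomial time budget. Finally, $7$-regularity is enforced by a routine degree-regularization step: each variable is split into a bounded number of copies linked via a constant-degree expander with equality equations, and padded with a few dummy equations so that each variable occurs in exactly $7$. A standard averaging argument shows this perturbs completeness and soundness by only a $1 \pm o(1)$ multiplicative factor, which is absorbed into the asymptotic notation for $c(n)$ and $s(n)$.
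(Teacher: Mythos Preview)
The paper does not prove this theorem; it is quoted as a black box from~\cite{KP06}, so there is no in-paper argument to compare your sketch against. Your outline has the right general shape (outer Label Cover plus a three-query linear inner verifier), but it skips precisely the step that makes the result nontrivial: controlling the instance size while pushing completeness all the way to $1 - 2^{-\Omega(\sqrt{\log n})}$. With $r = \Theta(\sqrt{\log n})$ rounds of repetition the alphabet has size $2^{\Theta(r)}$, so a long-code inner verifier produces $2^{2^{\Theta(\sqrt{\log n})}}$ bits per vertex, vastly more than the $n$ you budget; the ``Hadamard'' alternative you mention requires the repeated game to carry an $\Ft$-linear structure compatible with the test, which is not automatic and which you do not set up. Separately, in a single-shot H\aa stad test with noise $\eps = 2^{-\Theta(\sqrt{\log n})}$, the Fourier decoding returns label sets of size up to roughly $\eps^{-1}$, so the recovered Label Cover strategy succeeds with probability only $2^{-\Theta(\sqrt{\log n})}$, the same scale as the repetition soundness you are trying to beat; whether one exponent dominates the other is exactly the delicate issue you wave past as ``parameter balancing.''

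In fact~\cite{KP06} does not use a single-shot reduction at all: it \emph{iterates} a self-reduction on {\sc Max-$3$Lin}, treating the current $3$-Lin instance as a linear Label Cover and reapplying an inner verifier, so that each round roughly squares the completeness defect while degrading the soundness gap only polynomially; the $\log^{-3} n$ soundness is the cumulative loss across these iterations, not the ``three independent Fourier/decoding losses'' you describe. Your final regularization step to enforce $7$-regularity is fine and is indeed routine.
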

The usefulness of the above theorem is due to the fact that the
completeness is very close to $1$, while the soundness is bounded away
from $1$ to allow for $\poly(\log n)$ rounds of parallel repetition.

The rest of this section describes PCP constructions --
required for Theorems
\ref{thm-main2} and \ref{thm-main3} -- which are somewhat more
complicated.

\subsection{Multi-Layered PCP}\label{sec-multi}
The Multi-Layered PCP described here was constructed by Dinur,
Guruswami, Khot and Regev~\cite{DGKR03}  
who also proved its useful properties. An instance $\Phi$ of the
Multi-Layered PCP is parametrized by
integers $L, R > 1$. The PCP consists of $L$ sets of variables $V_1,
\dots, V_L$. The label set (or range) of the variables in the $l^\textrm{th}$
set $V_l$ is a set $R_l$ where $|R_l| = R^{O(L)}$. For any two
integers $1 \leq l < l' \leq L$, the PCP has a set of constraints
$\Phi_{l,l'}$ in which each constraint depends on one variable $v \in
V_l$ and one variable $v' \in V_{l'}$. The constraint (if it exists)
between $v \in V_l$ and $v' \in V_{l'}$ ($l < l'$) is denoted and
characterized by a projection 
$\pi_{v\rightarrow v'} : R_l\to R_{l'}$. A labeling to $v$
and $v'$ satisfies the constraint $\pi_{v\rightarrow v'}$ if the 
projection (via $\pi_{v\rightarrow v'}$) of the label
assigned to $v$ coincides with the label assigned to $v'$. 

The following useful `weak-density' property of the Multi-Layered PCP was
defined in \cite{DGKR03}, which (roughly speaking) states that any
significant subset of variables induces a significant fraction of
the constraints between some pair of layers. 
\begin{definition}\label{def-weakly-dense}
An instance $\Phi$ of the Multi-Layered PCP with $L$
layers is \textnormal{weakly-dense} if for any $\delta > 0$, given $m
\geq \lceil\frac{2}{\delta}\rceil$ layers $l_1 < l_2 < \dots < l_m$
and given any sets $S_i\subseteq V_{l_i}$, for $i\in[m]$ such that
$|S_i|\geq \delta|V_{l_i}|$; there always exist two layers
$l_{i'}$ and $l_{i''}$ such that the constraints between the variables
in the sets $S_{i'}$ and $S_{i''}$ is at least
$\frac{\delta^2}{4}$ fraction of the constraints between the sets 
$V_{l_{i'}}$ and $V_{l_{l''}}$.
\end{definition}

The following inapproximability of the Multi-Layered PCP was proven by
Dinur et al. \cite{DGKR03} based on the PCP Theorem (\cite{AS},
\cite{ALMSS}) and Raz's Parallel Repetition Theorem (\cite{Raz}).
\begin{theorem}\label{thm-multi}
There exists a universal constant $\gamma >0$ such that for any
parameters $L > 1$ and $R$, there is a weakly-dense $L$-layered PCP
$\Phi = \cup \Phi_{l,l'}$ such that it is NP-hard to distinguish
between the following two cases:
\begin{itemize}
\item \textnormal{{\bf YES} Case:} There exists an assignment of
labels to the
variables of $\Phi$ that satisfies all the constraints.
\item \textnormal{{\bf NO} Case:} For every $1\leq l < l' \leq L$, 
not more that
$1/R^\gamma$ fraction of the constraints in $\Phi_{l,l'}$ can be
satisfied by any assignment. 
\end{itemize}
\end{theorem}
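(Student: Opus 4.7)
The plan is to follow the construction of Dinur, Guruswami, Khot and Regev~\cite{DGKR03}, which builds the multi-layered PCP as an $L$-dimensional generalization of the standard 2-prover 1-round game. First I would start with a hard instance of Label Cover arising from the PCP Theorem~\cite{AS,ALMSS}, which gives a projection game with constant soundness $1-\gamma_0$ and constant alphabet size. Applying Raz's Parallel Repetition Theorem~\cite{Raz} with $k = \Theta(\log R)$ rounds then yields a projection game with soundness at most $R^{-\gamma_1}$ for a universal $\gamma_1 > 0$, and alphabet size $R^{O(1)}$.

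Next I would construct the $L$-layer structure by defining a variable in layer $l \in [L]$ to be a tuple of queries that blends the ``prover-1'' and ``prover-2'' roles of the base game in ratio controlled by $l$ (concretely, a layer-$l$ variable is an $L$-tuple of base queries, of which $l$ are of one type and $L-l$ of the other, giving label set $R_l$ of size $R^{O(L)}$). For any $1 \le l < l' \le L$, a constraint between $v \in V_l$ and $v' \in V_{l'}$ is naturally induced by the base projection wherever the two tuples ``agree on coordinates,'' yielding a projection $\pi_{v\to v'}\colon R_l \to R_{l'}$. The completeness direction is immediate: any perfect assignment to the base Label Cover lifts coordinatewise to an assignment satisfying every constraint in every $\Phi_{l,l'}$. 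For the soundness direction, I would argue by contradiction: if for some pair $l<l'$ more than $R^{-\gamma}$ fraction of $\Phi_{l,l'}$ is satisfied, then an averaging/fixing argument over the coordinates that do not participate in the $(l,l')$ projection produces (with significant probability) a pair of labels to the base $U,V$ vertices satisfying the base projection constraint, giving a strategy for the base game with success probability exceeding $R^{-\gamma_1}$ — a contradiction for $\gamma$ chosen sufficiently smaller than $\gamma_1$.

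Finally, the weak-density property would be established by a direct counting argument. Given layers $l_1<\cdots<l_m$ with $m \ge \lceil 2/\delta\rceil$ and sets $S_i \subseteq V_{l_i}$ of measure at least $\delta$, one picks a uniformly random layer-pair constraint and argues via the regularity of the construction (each constraint's endpoints are uniform marginally in the respective layers) that the expected number of pairs $(i',i'')$ with both endpoints landing in $S_{i'} \times S_{i''}$ is at least $\binom{m}{2}\delta^2$; a standard pigeonhole on the $\binom{m}{2}$ pairs then produces one pair whose fractional constraint mass into $S_{i'}\times S_{i''}$ is at least $\delta^2/4$. The main obstacle in carrying out this program is the soundness step: one must design the cross-layer projections so that a good labeling between two specific layers can genuinely be ``decoded'' into a strategy for the original parallel-repeated game without incurring a loss that scales with $L$, and ensure that the constant $\gamma$ produced is truly universal and independent of both $L$ and $R$.
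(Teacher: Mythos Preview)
Your proposal is correct and follows exactly the construction of Dinur, Guruswami, Khot and Regev~\cite{DGKR03}; the paper does not give its own proof of Theorem~\ref{thm-multi} but simply quotes it as a known result from~\cite{DGKR03} (built on the PCP Theorem~\cite{AS,ALMSS} and Raz's Parallel Repetition Theorem~\cite{Raz}). The only clarification worth making in your weak-density sketch is that the averaging should be over a single random \emph{global} verifier choice, which simultaneously induces a variable in every layer (uniform in each layer by regularity), so that a second-moment bound on the number of selected layers hit --- rather than any independence between layers --- yields a pair with constraint density at least $\delta^2/4$.
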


\subsection{The $d$-to-$1$ Games Conjecture}\label{sec-dto1}
Before we state the conjecture we need to define a $d$-to-$1$
Game.
\begin{definition} For a positive integer $d$, a 
$d$-to-$1$ Game $\mc{L}$ consists 
two sets of variables $\mc{U}$ and $\mc{V}$, label sets $[k]$ and
$[m]$, and set of constraints $\mc{E}$ where 
each constraint $\pi_{v\rightarrow u} : [m] \rightarrow [k]$ is 
between a variable $v \in \mc{V}$ and $u \in  \mc{U}$, and for any $i
\in [k]$ $\left|\pi_{v \rightarrow u}^{-1}(i)\right| = d$. A
labeling $\sigma$ to the variables in $\mc{U}$ from $[k]$ and $\mc{V}$
from $[m]$ satisfies a constraint  $\pi_{v\rightarrow u}$ iff
$\pi_{v\rightarrow u}(\sigma(v)) = \sigma(u)$. 
\end{definition}
Note that the definition of $d$-to-$1$ Game in \cite{Khot02} had the
condition that $\left|\pi_{v \rightarrow u}^{-1}(i)\right|
\leq d$. All of our proofs go through analogously with this relaxed
condition, but to avoid notational complications we stick to assuming
that the pre-image of every singleton is of size exactly $d$. We now state
the $d$-to-$1$ Games Conjecture.
\begin{conjecture} \label{conj-dto1bireg} \textnormal{($d$-to-$1$ Games
Conjecture~\cite{Khot02})} 
There is a fixed positive integer $d$ such that for
any $\zeta > 0$, there exist integers $k$ and $m$ such that given a
$d$-to-$1$ Game instance $\mc{L}$ with label sets $[k]$ and $[m]$ it
is NP-hard to distinguish between the following two cases:

\begin{itemize}
\item \textnormal{{\bf YES} Case.} There is a labeling to the variables that satisfies all the
constraints.

\item \textnormal{{\bf NO} Case.} Any labeling to the variables satisfies at most $\zeta$
fraction of constraints.
\end{itemize}
In addition we make the assumption\footnote{It is not known whether
this assumption can be made WLOG. However, all known Label Cover
constructions are bi-regular which makes the assumption, in the
authors' opinion, a reasonable one.}  that the instance $\mc{L}$ is
\emph{bi-regular}, i.e.
for any variable $v \in \mc{V}$ 
the number of constraints containing $v$ is the same, and similarly 
for any variable $u \in \mc{U}$ 
the number of constraints containing $u$ is the same. 
\end{conjecture}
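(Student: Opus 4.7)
The final statement is the $d$-to-$1$ Games Conjecture of Khot, which is a longstanding open problem; no unconditional proof is known and I do not claim one here. The honest contribution I can offer is to outline the direction an attack would take and to isolate the central obstruction, together with the one piece that actually is routinely provable (the bi-regularity postscript).

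The natural plan would parallel the route used for standard Label Cover hardness: start from a gap instance of \textsc{$3$-SAT} guaranteed by the PCP Theorem, reduce to a two-prover one-round game, and amplify soundness down to $\zeta$ via Raz's Parallel Repetition Theorem. The obstacle specific to the $d$-to-$1$ formulation is that the projection constraints produced by vanilla Label Cover have no structural control over fibers: the preimage $\pi_{v \to u}^{-1}(i)$ can have arbitrary size. To force every fiber to have size exactly $d$, one would need either (a) a base PCP whose projections are intrinsically $d$-to-$1$, or (b) a gadget that restructures an arbitrary projection into a $d$-to-$1$ projection without degrading completeness or soundness. Known cloning-with-consistency gadgets either inject constraints that the honest prover cannot satisfy (breaking completeness) or grant a cheating prover enough extra freedom to violate the soundness target. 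Parallel repetition does not save the day either, since tensoring $d_1$-to-$1$ with $d_2$-to-$1$ projections need not yield a uniform fiber size on the product alphabet.

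The one part that is genuinely provable is the bi-regularity footnote. Given any $d$-to-$1$ instance $\mc{L}$, replace each variable $v \in \mc{V} \cup \mc{U}$ by a number of copies proportional to its degree (after rounding to a common multiple of all degrees) and distribute the incident constraints uniformly among the copies of $v$. The resulting instance is bi-regular, retains the $d$-to-$1$ projection structure on every edge exactly, and its YES and NO values agree with those of $\mc{L}$ up to a negligible additive error from the rounding. So the bi-regularity assumption in Conjecture \ref{conj-dto1bireg} can be imposed without loss of generality modulo this standard bookkeeping, which partially justifies the footnote in the statement.

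The main obstacle, and the reason the conjecture has stood open since \cite{Khot02}, is precisely the fiber-size control described above: all known soundness-amplification techniques for Label Cover produce projections whose fibers vary wildly in size, and there is no known algebraic or combinatorial reduction that converts a generic low-soundness Label Cover instance into a $d$-to-$1$ instance at comparable parameters. Absent a new PCP construction whose projections are $d$-to-$1$ by design — something that would represent a genuine advance beyond the current state of the art in hardness of approximation — the plan sketched above cannot be carried to completion, and the conjecture must remain assumed rather than proved, exactly as it is used in Theorem \ref{thm-main3}.
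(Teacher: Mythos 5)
This is a \emph{conjecture}, not a theorem; the paper simply states it as a complexity-theoretic assumption (citing \cite{Khot02}) and never proves it, so there is no proof in the paper against which to compare. You are right to recognize that no proof can be offered and that the honest content here is at most a discussion of the statement's status.

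Your one affirmative claim --- that the bi-regularity postscript can be made without loss of generality by copying vertices and spreading their incident constraints --- directly contradicts the paper's own footnote, which explicitly says ``It is not known whether this assumption can be made WLOG,'' and your sketch has a real gap. When a vertex $v$ of degree $\deg(v)$ is split into $c_v$ copies, an adversarial labeling may assign \emph{different} labels to different copies of $v$; this strictly enlarges the feasible set, so the NO-case value can only go up, by a factor as large as $\max_v c_v$. Since the degrees in a generic $d$-to-$1$ instance are not bounded, $\max_v c_v$ is not controlled, so the ``negligible additive error'' you assert does not follow. (For Unique Games, Khot--Regev give a genuine regularization, but it leans on the uniqueness structure and does not transfer in an obvious way to $d$-to-$1$ projections.) The safe position, and the one the paper takes, is to add bi-regularity as an extra hypothesis and note that all known Label Cover constructions (e.g.\ from parallel repetition of a regularized base game) happen to produce it, not that it is a theorem.
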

Using Conjecture \ref{conj-dto1bireg} we have the following
layered PCP with an additional \emph{smoothness} property.

\subsection{Smooth $d$-to-$1$ Multi-Layered
PCP}\label{sec-dto1multidef}
The following is an analogue of the Multilayered PCP based on the
$d$-to-$1$ conjecture and also incorporating the \emph{smoothness}
property. We shall refer to it as the \emph{Smooth $d$-to-$1$ MLPCP}. 

An instance $\Phi$ of the
Smooth $d$-to-$1$ MLPCP is parametrized by
integers $d, L, R, T > 1$. The PCP consists of $L$ sets of variables $V_1,
\dots, V_L$. The label set (or range) of the variables in the $l^\textrm{th}$
set $V_l$ is a set $R_l$ where $|R_l| = R^{O(TL)}$. For any two
integers $1 \leq l < l' \leq L$, the PCP has a set of constraints
$\Phi_{l,l'}$ in which each constraint depends on one variable $v \in
V_l$ and one variable $v' \in V_{l'}$. The constraint (if it exists)
between $v \in V_l$ and $v' \in V_{l'}$ ($l < l'$) is denoted and
characterized by a projection 
$\pi_{v\rightarrow v'} : R_l\to R_{l'}$. The projection
$\pi_{v\rightarrow v'}$ has the property that for every $j \in
R_{l'}$, $\left|\pi_{v\rightarrow v'}^{-1}(j)\right| = d^{l-l'}$. 
A labeling to $v$
and $v'$ satisfies the constraint $\pi_{v\rightarrow v'}$ if the 
projection (via $\pi_{v\rightarrow v'}$) of the label
assigned to $v$ coincides with the label assigned to $v'$. 

We have a similar weak density property as in the previous section.
\begin{definition}\label{def-weakly-dense-dto1}
An instance $\Phi$ of  Smooth $d$-to-$1$ MLPCP with $L$
layers is \textnormal{weakly-dense} if for any $\delta > 0$, given $m
\geq \lceil\frac{2}{\delta}\rceil$ layers $l_1 < l_2 < \dots < l_m$
and given any sets $S_i\subseteq V_{l_i}$, for $i\in[m]$ such that
$|S_i|\geq \delta|V_{l_i}|$; there always exist two layers
$l_{i'}$ and $l_{i''}$ such that the constraints between the variables
in the sets $S_{i'}$ and $S_{i''}$ is at least
$\frac{\delta^2}{4}$ fraction of the constraints between the sets 
$V_{l_{i'}}$ and $V_{l_{l''}}$.
\end{definition}

We also have the \emph{smoothness} property as defined below.
\begin{definition}\label{def-smoothness}
An instance $\Phi$ of Smooth $d$-to-$1$ MLPCP with $L$ layers and
parameter $T$ has the \emph{smoothness} property if for any two layers
$l < l'$, and variable $v \in V_l$ and two distinct labels $i, j \in
R_l$,
$$\Pr_{v' \in N(v)\cap V_{l'}}\left[\pi_{v\rightarrow v'}(i) =
\pi_{v\rightarrow v'}(j)\right] \leq \frac{1}{T},$$
where the probability is taken over a random variable in $V_{l'}$
which has a constraint with $v$. 
\end{definition}

The following inapproximability of the Smooth $d$-to-$1$ MLPCP
essentially follows
from combining Conjecture \ref{conj-dto1bireg} with the layered 
construction of \cite{Khot-3}. A sketch of the construction is provided in
Section \ref{sec-dto1multi}.
\begin{theorem}\label{thm-dto1multi} Assuming Conjecture
\ref{conj-dto1bireg} the following holds.
There exists a universal constant positive integer $d$ such that for
any arbitrarily small constant $\zeta > 0$, there exists a positive
integer $R$, such that for every $L, T > 1$,
there is a weakly-dense  and smooth $L$-layered PCP with parameters
$d, T, R$,
$\Phi = \cup \Phi_{l,l'}$, such that it is NP-hard to distinguish
between the following two cases:
\begin{itemize}
\item \textnormal{{\bf YES} Case.} There exists an assignment of
labels to the
variables of $\Phi$ that satisfies all the constraints.
\item \textnormal{{\bf NO} Case.} For every $1\leq l < l' \leq L$, 
not more that
$\zeta$ fraction of the constraints in $\Phi_{l,l'}$ can be
satisfied by any assignment. 
\end{itemize}
\end{theorem}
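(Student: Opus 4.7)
The plan is to construct the Smooth $d$-to-$1$ MLPCP by combining three ingredients: the bi-regular $d$-to-$1$ Game given by Conjecture~\ref{conj-dto1bireg}, the smoothing transformation of Khot~\cite{Khot-3}, and the layering mechanism of Dinur, Guruswami, Khot and Regev~\cite{DGKR03}. Starting from a bi-regular instance $\mc{L}$ with sides $\mc{U}, \mc{V}$, label sets $[k], [m]$, and soundness $\zeta_0$ (a sufficiently small constant), the first step is to apply $T$-smoothing: each new $\mc{V}$-variable is identified with a tuple $(v, u_1, \ldots, u_T)$, where $u_1, \ldots, u_T$ are $T$ independent uniformly random $\mc{U}$-neighbors of $v$, and the label set for $v$ remains $[m]$. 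The constraint with $u_i$ uses the original projection $\pi_{v \to u_i}$. Smoothness is forced because two labels $i \neq j \in [m]$ collide under $\pi_{v \to u}$ only when $u$ lies in a particular restricted subset of the neighborhood of $v$; bi-regularity together with averaging over $T$ independent samples yields the $1/T$ bound uniformly in $v$ and in the pair $(i,j)$.

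Next, I would construct the $L$ layers following~\cite{DGKR03}: variables in $V_l$ are ordered tuples of length roughly $L$, with the first $L-l$ coordinates coming from the $\mc{V}$-side of the smoothed instance (``heavy'' coordinates) and the remaining $l$ coordinates coming from the $\mc{U}$-side (``light'' coordinates). The label set $R_l$ is the product of $[m]$ and $[k]$ according to this split, and after parallel repetition of parameter $R$ one obtains $|R_l| = R^{O(TL)}$. The constraint between $V_l$ and $V_{l'}$ for $l < l'$ is coordinate-wise: in the coordinates where layer $l$ has a heavy variable and layer $l'$ has a light variable, one applies a $d$-to-$1$ map inherited from $\mc{L}$; in all other coordinates the projection is identity. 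This immediately yields the intended fiber size $d^{l'-l}$, and smoothness of each factor lifts coordinate-wise to smoothness of the composite projection.

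Completeness is straightforward: a labeling of $\mc{L}$ satisfying all constraints lifts coordinate-wise to a labeling of $\Phi$ satisfying every constraint in every $\Phi_{l,l'}$. The weak density property of Definition~\ref{def-weakly-dense-dto1} follows verbatim from the corresponding argument in~\cite{DGKR03}, since that proof depends only on the layered tuple structure and not on the identity of the base game. For soundness, I would argue the contrapositive: if some labeling satisfies more than $\zeta$ fraction of $\Phi_{l,l'}$ for a pair $l < l'$, then by averaging over coordinates and applying a random restriction to the remaining ones, one decodes a randomized labeling of $\mc{L}$ that satisfies more than $\zeta_0$ fraction of its constraints in expectation, contradicting Conjecture~\ref{conj-dto1bireg} once $R$ is taken large enough relative to the decoding loss.

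The main obstacle I expect is verifying that the smoothness property of Definition~\ref{def-smoothness} holds between \emph{every} pair of layers $l < l'$, not merely at the base level. For two distinct labels $\bar\ell, \bar\ell' \in R_l$ differing in some coordinate $i^\star$, one must bound the probability over $v' \in N(v) \cap V_{l'}$ that the coordinate-wise projection collapses $\bar\ell$ and $\bar\ell'$; this amounts to a conditional calculation asking that the $i^\star$-th coordinate of $v'$ be chosen to ``witness'' a smoothing collision, and bi-regularity together with the independence of the $T$ smoothing neighbors is what makes this probability bounded by $1/T$ uniformly. Once these compositional bookkeeping details are verified, the remainder of the argument is a direct adaptation of the multi-layered constructions already present in the literature.
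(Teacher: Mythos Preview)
Your high-level plan---combine a bi-regular $d$-to-$1$ game with the smoothing of~\cite{Khot-3} and the layered structure of~\cite{DGKR03}---is exactly what the paper does, and your treatments of completeness, weak density, and soundness are adequate at the sketch level. However, your smoothing step is not the one in~\cite{Khot-3} or in the paper, and as described it does not yield the $1/T$ bound.

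In your construction a smoothed $\mc{V}$-variable is $(v,u_1,\dots,u_T)$ with label set still equal to $[m]$, and its neighbors are precisely $u_1,\dots,u_T$. For two labels $i\neq j\in[m]$, the probability over a random neighbor $u_r$ that $\pi_{v\to u_r}(i)=\pi_{v\to u_r}(j)$ is, after averaging over the random choice of $u_1,\dots,u_T$, simply $\Pr_{u\in N(v)}[\pi_{v\to u}(i)=\pi_{v\to u}(j)]$. Bi-regularity says nothing about this quantity; in a $d$-to-$1$ game it can be as large as $1$ (e.g.\ if all projections out of $v$ share the same fibers). Your subsequent layering, with tuples of length only $L$, does not rescue the bound either: the probability that a fixed heavy coordinate is converted to a light one when passing from layer $l$ to layer $l'$ is $(l'-l)/(L-l)$, which can be $\Theta(1)$. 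Finally, the appeal to a separate ``parallel repetition of parameter $R$'' to reach $|R_l|=R^{O(TL)}$ is not part of the construction; the parameter $R$ is just the base label-set size, and the $R^{O(TL)}$ comes automatically once the tuple length is $\Theta(TL)$.

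The paper's construction (following~\cite{Khot-3}) achieves smoothness by a different mechanism: a layer-$l$ variable is a \emph{tuple of $TL+L-l$ distinct $\mc{V}$-variables together with $l-1$ $\mc{U}$-variables}, and its label set is the full product of labels for all $TL+L-1$ components. Two labels in $R_l$ that differ in some $\mc{V}$-coordinate $q$ collide under $\pi_{v\to v'}$ only if $q$ is one of the $l'-l\leq L-1$ coordinates that get replaced by a $\mc{U}$-variable in $v'$. Over a random neighbor $v'$ this event has probability at most $(l'-l)/(TL+L-l)\leq (L-1)/(TL)<1/T$, which is the source of the smoothness bound. The point is that smoothness comes from making the tuple long (of length $\Theta(TL)$) and projecting only $O(L)$ coordinates---not from attaching $T$ neighbors to a single $v$. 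Once you replace your smoothing step with this one, the rest of your outline goes through essentially as in the paper.
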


\section{Independent Set in 
Almost $2$-Colorable $4$-Uniform Hypergraphs}\label{sec-main1}

This section presents a hardness reduction from Theorem \ref{thm-KP}
to an instance of {\sc ISAlmostColor}$_\eps(4, 2, Q)$. The reduction
employs an Inner Verifier based on Hadamard Codes. The Hadamard Code
of an element $v \in \Ft^m$ is a $\Ft$-valued code 
indexed by the elements of $\Ft^m$ and
its value at $x \in \Ft^m$ is the dot-product $x\cdot v \in \Ft$. 

\subsection{Hardness Reduction}
Let $\mc{A}$ be the {\sc Max-$3$Lin} instance given by Theorem
\ref{thm-KP}.  
The reduction begins with choosing a positive integer $r$ 
which we shall set later. In the first part of the reduction we shall
construct an Outer Verifier which shall be an $r$-round parallel
repetition of a verifier-prover game obtained from the instance
$\mc{A}$. 

\subsubsection{Outer Verifier}
Let $\Phi_r$ be the collection of all blocks
of $r$ variables each from $\mc{A}$, and $\Psi_r$ be the collection of all
blocks of $r$ equations each. 

Consider the following $2$-prover $1$-round game {\sc $2$P$1$R}$(\mc{A}, r)$:
\begin{enumerate}
\item The Verifier chooses one block $W$ uniformly at random from
$\Psi_r$. From each equation in $W$, the verifier chooses one out of
the three variables at random to construct a block $U$ of $\Phi_r$. 
\item The Verifier sends $U$ to Prover-1 and $W$ to 
Prover-2 and expects from each prover an assignment to all the
variables that it received.
\item The Verifier accepts if the assignment given by Prover-2
satisfies all equations of $W$ and is consistent with the assignment
given to the variables of $U$ by Prover-1.
\end{enumerate}
The Parallel Repetition Theorem of Raz~\cite{Raz} and its
subsequent strengthening by Holenstein~\cite{Hol} and Rao~\cite{Rao}
imply the following.
\begin{theorem}\label{thm-2P1R} The $2$ prover $1$ round game 
{\sc $2$P$1$R}$(\mc{A}, r)$, where $\mc{A}$ is an instance on
$n$ variables
 given by
Theorem \ref{thm-KP}, has the following properties:
\begin{itemize}
\item \textnormal{{\bf YES} Case.} If $\mc{A}$ is a YES instance then the 
Verifier accepts with
probability at least $(1 - c(n))^r$. 

\item \textnormal{{\bf NO} Case.} If $\mc{A}$ is a NO instance then the Verifier accepts with
probability at most $(1 - s(n)^\kappa)^{r/\kappa}$ for some universal
constant $\kappa > 1$.
\end{itemize}
\end{theorem}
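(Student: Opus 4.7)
\medskip
\noindent
\textbf{Proof proposal for Theorem \ref{thm-2P1R}.}

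\medskip
The plan is to handle the YES case directly via an honest strategy, reduce the NO case to a single-round soundness bound, and then invoke the parallel repetition theorems of Raz, Holenstein, and Rao to obtain the exponential decay.

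\medskip
For the YES case, fix an assignment $\sigma$ to the variables of $\mc{A}$ that satisfies at least $1 - c(n)$ fraction of the equations. Both provers adopt the honest strategy: upon receiving a block of variables (resp.\ a block of equations), Prover-$1$ (resp.\ Prover-$2$) answers with the restriction of $\sigma$ to those variables. By construction, Prover-$2$'s answers are automatically consistent with Prover-$1$'s; hence the Verifier accepts iff all $r$ equations in $W$ are satisfied by $\sigma$. Since the block $W$ is drawn as an $r$-tuple of equations sampled independently from the uniform distribution over equations of $\mc{A}$, the acceptance probability is at least $(1-c(n))^r$.

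\medskip
For the NO case, I would first establish the single-round soundness bound: the value of {\sc $2$P$1$R}$(\mc{A}, 1)$ is at most $1 - s(n)/3$ whenever $\mc{A}$ is a NO instance. To see this, consider any pair of strategies $(f, g)$, where $f : \textnormal{Variables} \to \Ft$ is Prover-$1$'s strategy and $g$ assigns to each equation a $3$-tuple of values. Viewing $f$ as an assignment, it satisfies at most a $1 - s(n)$ fraction of the equations of $\mc{A}$. For any equation $e$ unsatisfied by $f$, if $g(e)$ satisfies $e$ then $g(e)$ must differ from $f$ on an odd number of the three variables in $e$ (namely $1$ or $3$), so for a uniformly random variable chosen from $e$ the probability of agreement with $f$ is at most $2/3$; and if $g(e)$ does not satisfy $e$ the Verifier rejects outright. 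Averaging over equations $e$, the acceptance probability is at most $(1 - s(n)) \cdot 1 + s(n) \cdot \tfrac{2}{3} = 1 - s(n)/3$.

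\medskip
With the single-round bound $1 - s(n)/3$ in hand, the second step is to apply parallel repetition. The game {\sc $2$P$1$R}$(\mc{A}, r)$ is exactly the $r$-fold parallel repetition of {\sc $2$P$1$R}$(\mc{A}, 1)$ (the block $W$ being an $r$-tuple of equations and the block $U$ one random variable from each), so the theorems of Raz~\cite{Raz}, Holenstein~\cite{Hol}, and Rao~\cite{Rao} yield a universal constant $\kappa > 1$ such that the $r$-fold value is at most $\bigl(1 - (s(n)/3)^{\kappa}\bigr)^{r/\kappa}$. Absorbing the $1/3^\kappa$ factor by increasing $\kappa$ if necessary gives the stated bound $(1 - s(n)^{\kappa})^{r/\kappa}$. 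I do not expect any genuine obstacle: the YES case is a direct calculation, the single-round soundness bound is a short combinatorial argument using the $3$-arity of the equations, and the final exponential decay is a black-box invocation of known parallel repetition theorems.
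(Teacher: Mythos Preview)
Your proposal is correct and matches the paper's approach: the paper simply states that Theorem~\ref{thm-2P1R} follows from the parallel repetition theorems of Raz, Holenstein, and Rao, without spelling out the single-round soundness bound or the honest-strategy YES analysis. Your write-up fills in exactly those standard details (the $1 - s(n)/3$ single-round bound via the parity argument, and the black-box application of parallel repetition for projection games), so there is nothing to add or correct.
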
  
For the rest of the reduction we shall assume that none of the
blocks $W$ or $U$ contain a repeated variable. This omits only a tiny
fraction of blocks which does not change any parameter noticeably. 

\subsubsection{Inner Verifier}
Consider a block $W$ of $r$ equations. It contains $3r$ distinct
variables say $x_1, x_2 \dots, x_{3r-1}, x_{3r}$. 
We may assume without loss of generality 
that the $i$th equation consists of
the variables $x_{3i-2}, x_{3i-1}$, and $x_{3i}$, for $i=1, \dots, r$.
We shall now associate an element of $\Ft^{3r+1}$ with each of the
$r$ equations of $W$. Note that the $(3r+1)$th coordinate is extra and
added to help with ensuring consistency.

Suppose that the $i$th (for some $i \in [r]$) equation is of the form
$x_{3i-2} + x_{3i-1} + x_{3i} = 0$, then let $h_i \in \Ft^{3r+1}$ be
such that  the dot-product 
$h_i\cdot x =  x_{3i-2} + x_{3i-1} + x_{3i}$ for any $x \in
\Ft^{3r + 1}$. Otherwise, if the $i$th equation is of the form 
$x_{3i-2} + x_{3i-1} + x_{3i} = 1$, then let $h_i$ be such that
$h_i\cdot x = x_{3i-2} + x_{3i-1} + x_{3i} + x_{3r + 1}$. Our
assumption that the block $W$ does not contain a repeated variable
implies that the set of elements $\{h_i\}_{i=1}^r$ is linearly
independent. Let $H_W$ be the $r$ dimensional space spanned by 
$\{h_i\}_{i=1}^r$. For completing the reduction we also 
define an element $h_W \in \Ft^{3r+1}$ so that $h_W\cdot x = x_{3r+1}$
for any $x \in \Ft^{3r+1}$.  

Let $\ol{C}[W]$ be a $\{0,1\}$ \emph{code} indexed by the elements of
$\Ft^{3r+1}/H_W$, i.e. the set of cosets of the subspace $H_W$ in the space
$\Ft^{3r+1}$. Since $H_W$ is a $r$ dimensional subspace, the size of
the code $\ol{C}[W]$ is $2^{2r+1}$. We say that $\ol{C}[W]$ is 
\emph{folded} over $H_W$. 
It is easy to see that any $\ol{C}[W] :  \Ft^{3r+1}/H_W \mapsto \{0,1\}$ can
be \emph{unfolded} into $C[W] : \Ft^{3r+1} \mapsto \{0,1\}$ such that,
$C[W](x + y) = \ol{C}[W](x + H_W)$ for any $x \in \Ft^{3r+1}$ and $y
\in H_W$.
For notational convenience we shall represent the coset $x + H_W$ simply
by $x$, and this shall be clear from the context. 

Ideally, $C[W]$ is supposed to be the Hadamard Code of a satisfying 
assignment to the variables in $W$ with the $(3r+1)$th coordinate set to $1
\in \Ft$, so that the code $\ol{C}[W]$ is well defined and 
folded over the subspace $H_W$.

We are now ready to define the vertices and hyperedges of the instance
$G(V, E)$ of \IS$(2,4)$. 

\medskip
\noindent
{\bf Vertices.} The vertex set $V$ consists of all the locations of
$\ol{C}[W]$ for each $W \in \Psi_r$, i.e. each block $W$ of $r$ equations.

\medskip
\noindent
{\bf Hyperedges.} Consider any choice of $U \in \Phi_r$ and $W \in
\Psi_r$ by the verifier in the game {\sc $2$P$1$R}$(\mc{A}, r)$ in
Step 1. Let $U$ and $W' \in \Psi_r$ be another choice with the same
block of $r$ variables $U$. Let $\pi_W : \Ft^{3r+1} \mapsto \Ft^r$ be
a projection onto the coordinates of the $r$ variables of $U$ from
the block of $(3r+1)$ coordinates corresponding to the 
$3r$ variables of $W$ and the extra coordinate as defined above.
The extra coordinate plays no part in this projection. We
shall also use the notation $\pi^{-1} : \Ft^r \mapsto \Ft^{3r + 1}$,
which extends a vector by filling in zeros in the rest of the
coordinates.
Similarly, $\pi_{W'}$ be the projection for $W'$. Let $\ol{C}[W]$ and
$\ol{C}[W']$ be the codes of $W$ and $W'$.
For all such choices of $U$, $W$ and $W'$ do the following.
\begin{enumerate}
\item For all choices of elements $x, y \in \Ft^{3r + 1}$ and
$z \in \Ft^r$ such that $z \neq 0$, do step 2. 
\item Add a hyperedge between the vertices (or locations of the
codes): $\ol{C}[W](x), \ol{C}[W](x + \pi_W^{-1}(z) + h_W), 
\ol{C}[W'](y)$ and $\ol{C}[W'](y + \pi_{W'}^{-1}(z))$.
It is easy to see that since $z \neq 0$ the four vertices chosen above are
distinct.
\end{enumerate}
This completes the hardness reduction and we move to its analysis.

\subsection{YES Case}
In the YES Case the instance $\mc{A}$ has an assignment $\sigma^*$ to its
variables that satisfies $(1 - c(n))$ fraction of its equations. Call
the equations satisfied by $\sigma^*$ as \emph{good}. Similarly, call
a block of $r$ equations as \emph{good} if all of its equations are
good. Clearly, at least $(1 - c(n))^r$ fraction of the blocks are
good. 

For any good block $W$ let $C[W] : \Ft^{3r+1} \mapsto \Ft$ 
be the Hadamard Code of the
assignment $\sigma^*(W)\in \Ft^{3r}$ to the variables in $W$,
concatenated with a $1$ in the $(3r+1)$th coordinate. Let us denote
this concatenated vector as $(\sigma^*(W), 1)$.
In other words, $C[W](x) =
(\sigma^*(W), 1)\cdot x \in \Ft$, for $x \in \Ft^{3r+1}$. 
Since $\sigma^*$ satisfies all equations in $W$, it is easy to see that
it is invariant over the cosets of $H_W$, i.e. $C[W](x + y) = C[W](x)$
for $x \in \Ft^{3r+1}$ and $y \in H_W$. Thus this can be folded into
the code $\ol{C}[W]$ by defining $\ol{C}[W](x+H_W) = C[W](x)$. As
before, we shall use $\ol{C}[W](x)$ to represent the value over the
coset $x + H_W$.

The above defines a $2$-coloring of the locations of the codes of all
good blocks depending on its value in $\Ft$. We shall show that any 
hyperedge completely induced by these locations is non-monochromatic.  

Consider a choice of $U, W$ and $W'$ in the construction of the
hyperedges where $W$ and $W'$ are good blocks. Let $x, y$ and $z$ be
chosen as in Step 1. We shall show that,
\begin{equation}
\ol{C}[W](x)+ \ol{C}[W](x + \pi_W^{-1}(z) + h_W)+
\ol{C}[W'](y) + \ol{C}[W'](y + \pi_{W'}^{-1}(z)) = 1,
\end{equation}
which implies that the corresponding hyperedge is non-monochromatic.
To see this, observe that the LHS of the above equation is,
\begin{eqnarray}
& & (\sigma^*(W),1)\cdot x + (\sigma^*(W),1)\cdot(x + \pi_W^{-1}(z) + h_W) +
(\sigma^*(W'),1)\cdot y + (\sigma^*(W'),1)\cdot(y + \pi_{W'}^{-1}(z))
\nonumber \\
& = & (\sigma^*(W),1)\cdot (\pi_W^{-1}(z)) + (\sigma^*(W'),1)\cdot
(\pi_{W'}^{-1}(z)) + (\sigma^*(W),1)\cdot h_W \nonumber \\
& = & (\pi_W(\sigma^*(W)) + \pi_{W'}(\sigma^*(W'))\cdot z + 1
\nonumber \\
& = & 1, 
\end{eqnarray}
where the second last equation follows from the definition of
$h_W$ and
last equation follows from the fact that $\sigma^*$
is a global assignment so its projection onto $U$ from $W$ or
$W'$ is the same.  

Thus, after removing a $1 - (1-c(n))^r$ fraction of vertices
corresponding to the blocks which are not good and all
hyperedges incident on them, the rest of the hypergraph is
$2$-colorable.


\subsection{NO Case}\label{sec-multi-No}
Let $\mc{I}$ be an independent set in $G$. For every block $W$ of $r$
equations, let $\ol{C}[W]$ be the indicator of $\mc{I}$ restricted to
the locations of the code $\ol{C}[W]$. Here, $\ol{C}[W]$ is a thought
of as a $\{0,1\}$ real valued code.

Let $U, W$ and $W'$ be the choices in the construction of the
hyperedges. For all choices of $x, y$ and $z$ in Step 1 of the
construction, we have.
\begin{equation}
\ol{C}[W](x)\cdot \ol{C}[W](x+ \pi_W^{-1}(z) + h_W)\cdot
\ol{C}[W'](y) \cdot \ol{C}[W'](y + \pi_{W'}^{-1}(z)) = 0. 
\end{equation}
As mentioned earlier, we can unfold the codes into $C[W]$ and $C[W']$
to rewrite the above as,
\begin{equation}
C[W](x)\cdot C[W](x+ \pi_W^{-1}(z) + h_W)\cdot
C[W'](y)\cdot C[W'](y + \pi_{W'}^{-1}(z)) = 0. 
\end{equation}
For convenience of notation, we shall refer to $C[W]$ as $A$ and
$C[W']$ as $B$.
Doing the usual Fourier expansion and using standard tools from
Fourier Analysis over folded codes (refer to Section \ref{sec-Fourier}
for an overview) we get the following.
\begin{eqnarray}
& & \displaystyle\sum_{\substack{\alpha, \alpha', \beta, \beta' \in
\Ft^{3r+1}\\ \alpha,
\alpha'\perp H_W \\
\beta, \beta'\perp H_{W'}}}\wh{A}_\alpha\chi_\alpha(x)
\wh{A}_{\alpha'}\chi_{\alpha'}(x+ \pi_W^{-1}(z) + h_W)
\wh{B}_{\beta}\chi_\beta(y)
\wh{B}_{\beta'}\chi_{\beta'}(y + \pi_{W'}^{-1}(z)) = 0 \nonumber \\
&\Rightarrow & 
\displaystyle\sum_{\substack{\alpha, \alpha', \beta, \beta' \\ \alpha,
\alpha'\perp H_W \\
\beta, \beta'\perp H_{W'}}}\wh{A}_\alpha \wh{A}_{\alpha'}\wh{B}_{\beta}
\wh{B}_{\beta'}
\chi_{(\alpha+\alpha')}(x)\chi_{\alpha'}(h_W)\chi_{\pi_W(\alpha')}(z)
\chi_{(\beta+\beta')}(y)\chi_{\pi_{W'}(\beta')}(z) = 0 \label{eqn-zneq0} 
\end{eqnarray}
The above is true for all $x, y$ and $z$ such that $z \neq 0$ 
which are independent
of each other. Thus, for a fixed value of
$x$ and $y$, the expectation of the LHS of Equation
\eqref{eqn-zneq0} over all $z \in \Ft^r$ is equal to $2^{-r}$ times
its value at $z = 0$. Observing that in the expectation over 
all $z \in \Ft^r$ only
terms satisfying $\pi_W(\alpha') = \pi_{W'}(\beta')$ survive, we obtain,
\begin{eqnarray}
& & \displaystyle\sum_{\substack{ \alpha,
\alpha'\perp H_W \\
\beta, \beta'\perp H_{W'} \\ 
\pi_W(\alpha') = \pi_{W'}(\beta')}}\wh{A}_\alpha \wh{A}_{\alpha'}\wh{B}_{\beta}
\wh{B}_{\beta'}
\chi_{(\alpha+\alpha')}(x)\chi_{\alpha'}(h_W)
\chi_{(\beta+\beta')}(y) \nonumber \\
& = & 2^{-r}
\sum_{\substack{ \alpha,
\alpha'\perp H_W \\
\beta, \beta'\perp H_{W'}}}\wh{A}_\alpha \wh{A}_{\alpha'}\wh{B}_{\beta}
\wh{B}_{\beta'}
\chi_{(\alpha+\alpha')}(x)\chi_{\alpha'}(h_W)
\chi_{(\beta+\beta')}(y). 
\end{eqnarray}
Taking a further expectation over $x$ and $y$, we observe
that the only terms that survive on the LHS are those in which $\alpha
= \alpha'$, $\beta = \beta'$ and $\pi_W(\alpha) = \pi_{W'}(\beta)$,
while the terms that survive on the RHS have
$\alpha
= \alpha'$ and $\beta = \beta'$. 
Thus we obtain,
\begin{equation}
\displaystyle\sum_{\substack{\alpha\perp H_W,
\beta \perp H_{W'} \\ \pi_W(\alpha) = \pi_{W'}(\beta)}}
\wh{A}_\alpha^2\wh{B}_\beta^2\chi_\alpha(h_W) = 
2^{-r}\displaystyle\sum_{\substack{\alpha\perp H_W,
\beta \perp H_{W'}}}
\wh{A}_\alpha^2\wh{B}_\beta^2\chi_\alpha(h_W) \leq 2^{-r},
\end{equation}
where the last inequality is because the sum of squares of the Fourier
coefficients is at most $1$.
Now, $\chi_\alpha(h_W) = -1$ if $\alpha\cdot h_W = 1$ and $1$
otherwise. Thus,
\begin{eqnarray}
\displaystyle\sum_{\substack{\alpha\perp H_W,
\beta \perp H_{W'} \\ \pi_W(\alpha) = \pi_{W'}(\beta) \\ \alpha\cdot
h_W = 1}}
\wh{A}_\alpha^2\wh{B}_\beta^2 & \geq & 
\displaystyle\sum_{\substack{\alpha\perp H_W,
\beta \perp H_{W'} \\ \pi_W(\alpha) = \pi_{W'}(\beta) \\ \alpha\cdot
h_W = 0}}
\wh{A}_\alpha^2\wh{B}_\beta^2 - 2^{-r} \nonumber \\
& \geq & \wh{A}_\emptyset^2\wh{B}_\emptyset^2 - 2^{-r}.
\end{eqnarray}
The above gives a strategy for the provers of {\sc $2$P$1$R}$(\mc{A},
r)$. Suppose Prover-1 receives a block of variables $U$ and
Prover-2 receives a block of equations $W$. 

\noindent
Strategy of
Prover-2: It chooses a vector $\alpha \in \Ft^{3r+1}$ 
satisfying: (i) $\alpha \perp H_W$, and (ii) 
$\alpha\cdot h_W = 1$ with probability $\wh{A}_\alpha^2$, where $A =
C[W]$. Since
$\alpha$ satisfies (i) and (ii), the first $3r$ coordinates give a
satisfying assignment to the variables in $W$.
This assignment is returned to the Verifier by Prover-2.

\noindent
Strategy of
Prover-1: It chooses a block $W'$ from the choice of the
verifier of  
{\sc $2$P$1$R}$(\mc{A}, r)$ conditioned on the block of variables
picked being $U$. It then chooses
 $\beta \in \Ft^{3r+1}$ 
satisfying: $\beta \perp H_{W'}$ with probability $\wh{B}_\beta^2$
where $B = C[W']$. The assignment to the variables in $U$ contained in the
first $3r$ coordinates of $\beta$ is returned to the Verifier. 

\medskip
Suppose that the independent set $\mc{I}$ contains $\delta$ fraction
of the vertices of $G$, i.e. locations of the codes. 
Recall that we set the value of the code $\ol{C}[W]$ to be the
indicator of $\mc{I}$ restricted to its locations. Thus, for at least
$\delta/2$ fraction of the blocks $W$, $\E_x[\ol{C}[W](x)] \geq
\delta/2$. Call such blocks as \emph{heavy}.

Conditioned on the block of variables $U$, let $p_U$ be the fraction of choices
of block of equations $W$ by the verifier {\sc $2$P$1$R}$(\mc{A}, r)$ 
such that $W$ is heavy. From the above $\E_U[p_U] \geq \delta/2$, by
the regularity of $\mc{A}$. Thus, the probability that both $W$ and $W'$
are heavy -- where $W'$ is obtained from the strategy of
Prover-1 -- is $E_U[p_U^2] \geq \E_U[p_U]^2 \geq \delta^2/4$.
Noting that the weight of $\ol{C}[W]$ is same as that of $C[W]$ which
is given by the empty coefficient of the Fourier expansion, we obtain
that the verifier accepts with probability at least,
$$\frac{\delta^2}{4}\left(\frac{\delta^2}{4} - 2^{-r}\right)
\geq \left(\frac{\delta^2}{4} - 2^{-r}\right)^2.$$
From Theorem \ref{thm-2P1R} this implies that $\delta^2/4 \leq
(1-s(n)^\kappa)^{r/2\kappa} + 2^{-r}$. 

\medskip
{\bf Setting the Parameters.} We set $r = \log^\ell n$ for a large
enough constant $\ell$. The size of the hypergraph is 
$N = 2^{\poly(\log n)}$. In the YES case, the number of vertices to be
removed is at most $c(n)r \leq 2^{-(\log N)^\xi}$ for some positive
constant $\xi$ (depending on $\ell$). Further, we obtain that 
$(1-s(n)^\kappa)^{r/2\kappa} + 2^{-r} \leq 2^{-(\log N)^{1-\gamma}}$ for
an arbitrarily small $\gamma$ by an appropriately large choice of the
constant $\ell$. The above analysis yields a
bound of $2^{-(\log N)^{1-\gamma}}$ on the relative size of the
largest independent set in the NO case, for arbitrarily small $\gamma
> 0$. 

\section{Independent Set in Almost $2$-colorable
$3$-uniform hypergraphs}\label{sec-main2}

We first need a few useful definitions and results for our analysis
which follows a pattern similar to previous works~\cite{DKPS, KS12,
SS13} and we shall use their notation.

\subsection{Preliminaries}

A family $\mc{F} \subseteq \{*, 1, 2\}^m$ is called \emph{monotone} if
for any $F \in \mc{F}$ and $F'$ obtained by changing a $*$ to either
$1$ or $2$ in any coordinate, $F'\in \mc{F}$. 
For a parameter $p \in [0,1]$, define the measure $\mu_p$ on $\{1, 2,
*\}^m$ by $\mu_{p}(F) = p^{m-m'}(1-p)^{m'}$, where $m'$ is
the number of coordinates of $F$ with $*$ in them, for any $F \in \{1, 2,
*\}^m$. In other words $\mu_p$ is the product measure assigning in
each coordinate a measure $1-p$ to $*$ and $\frac{p}{2}$ to each
of $1$ and $2$. The measure of a family $\mc{F} \subseteq \{1,2,*\}^m$ is
$\mu_p(\mc{F}) = \sum_{F \in \mc{F}}\mu_p(F)$.

A set $C \subseteq [m]$ is a $(\delta, p)$-{\em core} for a family
$\mc{F}$, if there exists a family $\mc{F}'$ such that
$\mu_p(\mc{F}\triangle\mc{F}' )\leq \delta$ and ${\cal F}'$
depends only on the coordinates in $C$.
Let $t \in (0,1)$ be a given parameter 
and $C \subseteq [m]$. A \emph{core-family} $[\mc{F}]^t_C$ is a family
on the set of coordinates $C$ which resembles $\mc{F}$ restricted to $C$.
Formally,
$$[\mc{F}]^t_C \defeq \left\{ F\in \{*, 1, 2\}^C\ \middle|\ \Pr_{F' \in
    \mu_p^{[m]\setminus C}}\left[(F, F')\in {\cal F}\right] >
  t\right\},$$ where $(F, F')$ is an element in $\{*,1,2\}^m$ by
combining $F$ on coordinates in $C$ and
$F'$ on $[m]\setminus C$. The \emph{influence} of a coordinate $i \in
[m]$ for a family ${\cal F}$ under the measure $\mu_p$ 
is defined as follows:
$$\tn{Inf}^p_i({\cal F}) := \mu_p\left(\left\{F : 
F\mid_{i=*}\not\in {\cal
      F}\textnormal{ and }F\mid_{i=j}\in {\cal F}\textnormal{ for some
    } j\in \{1,2\}\right \}\right),$$ where $F\mid_{i=*}$ is an element
identical to $F$ except on the $i^\textrm{th}$ coordinate where it is
$*$, and $F\mid_{i=r},$ for $r \in \{1,2\}$ is similarly defined.  The 
  \emph{average sensitivity} of ${\cal F}$ at $p$ is the sum of
influence of all coordinates:
$\tn{as}_{p}({\cal F}) := \sum_{i=1}^m\tn{Inf}^p_i({\cal F})$.

Let $D^p$ be a distribution on $\{*, 1, 2\}^2$ defined by
first sampling $(1, 2)$ and $(2,1)$ uniformly with probability
$\slfrac{1}{2}$ each and then changing each
coordinate to $*$ independently with probability $1 - p$. It is easy to
see that both the marginals of $D^p$ are identical to
$\mu_p$. 

\subsubsection{Useful Results}
The following variant of Russo's Lemma was proved in
\cite{DKPS} (as Lemma 1).
\begin{lemma}
[Russo's Lemma \cite{Russo}]
\label{lem-russo}
Let ${\cal F}\subseteq \{*, 1, 2\}^m$ be monotone, then
$\mu_p({\cal F})$ is increasing with $p$. In fact,
$$\frac{1}{2}\cdot \tn{as}_p({\cal F}) \leq \frac{d\mu_p({\cal
F})}{dp} \leq \tn{as}_p({\cal F}).$$
\end{lemma}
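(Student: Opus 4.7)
The plan is to exploit the product structure of $\mu_p$ and differentiate coordinate-by-coordinate, using monotonicity of $\mathcal{F}$ to control the sign and size of each partial contribution. Concretely, introduce an auxiliary product measure $\mu_{p_1,\dots,p_m}$ in which each coordinate $i$ independently is $*$ with probability $1-p_i$ and takes each of $1,2$ with probability $p_i/2$; then $\mu_p(\mathcal{F})$ is the restriction of this function to the diagonal $p_1=\cdots=p_m=p$, and by the chain rule
$$\frac{d\mu_p(\mathcal{F})}{dp} \;=\; \sum_{i=1}^{m}\frac{\partial \mu_{p_1,\dots,p_m}(\mathcal{F})}{\partial p_i}\bigg|_{p_1=\cdots=p_m=p}.$$
So the whole argument reduces to bounding each partial derivative by the corresponding influence $\tn{Inf}^p_i(\mathcal{F})$.

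For the single-coordinate analysis, I would fix $i$ and condition on the restriction $F^{-i}$ of $F$ to coordinates $[m]\setminus\{i\}$. Writing $a=\Ind_\mathcal{F}(F^{-i}|_{i=*})$, $b=\Ind_\mathcal{F}(F^{-i}|_{i=1})$, $c=\Ind_\mathcal{F}(F^{-i}|_{i=2})$, the partial derivative contributed by a given $F^{-i}$ equals $-a+\tfrac{1}{2}b+\tfrac{1}{2}c$ (since $\tfrac{d(1-p_i)}{dp_i}=-1$ and $\tfrac{d(p_i/2)}{dp_i}=\tfrac{1}{2}$). Monotonicity of $\mathcal{F}$ gives the crucial implication $a=1\Rightarrow b=c=1$, so the only nonzero cases are those in which $a=0$ and at least one of $b,c$ equals $1$—which is exactly the indicator of the event defining $\tn{Inf}^p_i(\mathcal{F})$. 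In these cases the contribution $-a+\tfrac{1}{2}b+\tfrac{1}{2}c$ lies in $\{\tfrac{1}{2},\tfrac{1}{2},1\}$, yielding the pointwise sandwich
$$\tfrac{1}{2}\Ind_{E_i}\big(F^{-i}\big) \;\leq\; -a+\tfrac{1}{2}b+\tfrac{1}{2}c \;\leq\; \Ind_{E_i}\big(F^{-i}\big),$$
where $E_i$ is the influential event for coordinate $i$. Taking expectations over $F^{-i}\sim\mu_p^{[m]\setminus\{i\}}$ gives $\tfrac{1}{2}\tn{Inf}^p_i(\mathcal{F})\leq \partial_{p_i}\mu_p(\mathcal{F})\leq \tn{Inf}^p_i(\mathcal{F})$.

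Summing over $i$ and combining with the chain-rule identity gives the two claimed inequalities for $\frac{d\mu_p(\mathcal{F})}{dp}$, and since each partial derivative is nonnegative the monotonicity of $p\mapsto \mu_p(\mathcal{F})$ follows. I do not expect any serious obstacle: the only subtle point is the verification that monotonicity rules out the case $a=1,b=0$ or $a=1,c=0$, which is what makes the derivative nonnegative and forces the contribution to be an indicator of the influence event (up to the factor of $2$); once this is noted, the rest is a direct computation on the product measure.
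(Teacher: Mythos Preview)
Your proof is correct and is the standard Russo-Margulis argument adapted to the $\{*,1,2\}$ setting: the chain-rule reduction to single-coordinate derivatives, the identity $\partial_{p_i}\mu_p(\mathcal{F})=\E_{F^{-i}}\!\big[-a+\tfrac12 b+\tfrac12 c\big]$, and the case analysis using monotonicity ($a=1\Rightarrow b=c=1$) are all right, yielding the pointwise sandwich $\tfrac12\Ind_{E_i}\le -a+\tfrac12 b+\tfrac12 c\le \Ind_{E_i}$ and hence the claimed bounds after summation. (The list ``$\{\tfrac12,\tfrac12,1\}$'' should just be $\{\tfrac12,1\}$, but that is cosmetic.)

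As for comparison with the paper: the paper does not give its own proof of this lemma---it simply quotes it as Lemma~1 of \cite{DKPS}. Your argument is exactly the proof one finds there (and is the natural one), so there is no substantive methodological difference to discuss.
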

The following corollary follows from the above and is proved in
\cite{SS13}.
\begin{corollary}\label{cor-follow}
For a monotone family $\mc{F}\subseteq \{*, 1,2\}^m$,
\begin{enumerate}
\item For any $p' \geq p$, $\mu_{p'}({\cal F}) \geq \mu_{p}({\cal
    F})$.
\item For any $\eps  > 0$, there is a $p' \in [1 - \eps, 1 - \eps/2]$
such that $\tn{as}_{p'}({\cal F}) \leq \frac{4}{\eps}$.  
\end{enumerate}
\end{corollary}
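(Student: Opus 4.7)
The plan is to derive both parts of the corollary as almost immediate consequences of Russo's Lemma (Lemma \ref{lem-russo}), which gives the sandwich $\tfrac{1}{2}\tn{as}_p(\mc{F}) \le \tfrac{d\mu_p(\mc{F})}{dp} \le \tn{as}_p(\mc{F})$ for monotone $\mc{F}$.

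For part (1), I would observe that the influence $\tn{Inf}^p_i(\mc{F})$ is a nonnegative quantity by definition, hence $\tn{as}_p(\mc{F}) \geq 0$. Russo's Lemma then forces $\tfrac{d\mu_p(\mc{F})}{dp} \geq 0$ on the interval $p \in [0,1]$, so $\mu_p(\mc{F})$ is a nondecreasing function of $p$, which is exactly the claim $\mu_{p'}(\mc{F}) \geq \mu_p(\mc{F})$ whenever $p' \geq p$.

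For part (2), the plan is a standard averaging argument on the derivative. Using the upper half of Russo's Lemma, it suffices to exhibit $p' \in [1-\eps,1-\eps/2]$ with $\tfrac{d\mu_{p'}(\mc{F})}{dp} \leq 2/\eps$, since this yields $\tn{as}_{p'}(\mc{F}) \leq 2\cdot \tfrac{d\mu_{p'}(\mc{F})}{dp} \leq 4/\eps$. To find such a $p'$, I would integrate the derivative across the interval of length $\eps/2$:
\begin{equation*}
\int_{1-\eps}^{1-\eps/2} \frac{d\mu_p(\mc{F})}{dp}\, dp \;=\; \mu_{1-\eps/2}(\mc{F}) - \mu_{1-\eps}(\mc{F}) \;\leq\; 1,
\end{equation*}
since $\mu_p(\mc{F}) \in [0,1]$ for all $p$. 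By the mean value theorem (or simple averaging), the integrand must attain a value at most $\tfrac{1}{\eps/2} = 2/\eps$ somewhere in $[1-\eps, 1-\eps/2]$, producing the desired $p'$.

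There is no serious obstacle here; the entire content is Russo's Lemma plus a one-line averaging over a subinterval of length $\eps/2$. The only minor care is to use monotonicity to ensure the derivative exists almost everywhere and that $\mu_p(\mc{F})$ is bounded, both of which are built into the hypotheses.
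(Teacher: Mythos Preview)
Your argument is correct and is exactly the intended derivation: the paper does not spell out a proof but simply cites \cite{SS13}, where the corollary is obtained from Russo's Lemma in precisely this way (nonnegativity of the derivative for part (1), and averaging the derivative over the length-$\eps/2$ interval for part (2)). One cosmetic slip: in part (2) you invoke the ``upper half'' of Russo's Lemma, but the inequality you actually use, $\tn{as}_{p'}(\mc{F}) \le 2\,\tfrac{d\mu_{p'}(\mc{F})}{dp}$, is the rearrangement of the \emph{lower} bound $\tfrac{1}{2}\tn{as}_p(\mc{F}) \le \tfrac{d\mu_p(\mc{F})}{dp}$; the mathematics is fine, only the label is off.
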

The following is a generalization of Friedgut's Junta Theorem which is
proved in \cite{ST11}.
\begin{theorem}[Friedgut's Theorem \cite{Friedgut, ST11}]
\label{thm-Friedgut}
Fix $\delta > 0$. Let $\mc{F} \subseteq \{*, 1, 2\}^m$ be
monotone with $a = \tn{as}_p({\cal F})$, for $p \in [0,1]$. 
There exists a function
$C_{Friedgut}(p, \delta, a) \leq c_p^{a/\delta}$, for a constant
$c_p$
depending only on $p$, so that $\mc{F}$ has a $(\delta, p)$-core $C$
of size $|C| \leq C_{Friedgut}(p, \delta, a)$.
\end{theorem}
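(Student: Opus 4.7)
The plan is to adapt Friedgut's junta argument to the biased product space $(\{*, 1, 2\}^m, \mu_p)$. The indicator $f = \mathbf{1}_{\mc{F}}$ will be expanded in a Fourier basis with respect to $\mu_p$; the total influence bound $a$ will be used to localize the Fourier mass onto a small set of ``important'' coordinates; and averaging out the remaining coordinates will produce the desired core family $\mc{F}'$.

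Concretely, I would first fix an orthonormal basis $\chi_0 \equiv 1, \chi_1, \chi_2$ for $L^2(\{*,1,2\}, \mu_p)$ (for instance, $\chi_1$ separating $*$ from $\{1,2\}$ and $\chi_2$ separating $1$ from $2$) and write $f = \sum_{S \in \{0,1,2\}^m} \wh{f}(S)\, \chi_S$, with degree $|S|$ equal to the number of nonzero entries of $S$. Using monotonicity of $\mc{F}$, one checks that the combinatorial influence $\tn{Inf}_i^p(\mc{F})$ coincides, up to a factor depending only on $p$, with the Fourier influence $\sum_{S : S_i \neq 0} \wh{f}(S)^2$, so that $a = \tn{as}_p(\mc{F})$ is comparable to $\sum_S |S|\, \wh{f}(S)^2$. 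Invoking the Bonami--Beckner hypercontractive inequality for $\mu_p$ then yields a high-degree tail bound $\sum_{|S| > k} \wh{f}(S)^2 \leq \delta/4$ for some degree threshold $k = O_p(a/\delta)$. Finally, choosing a small influence threshold $\tau$ and declaring a coordinate with $\tn{Inf}_i^p(\mc{F}) \geq \tau$ to be ``important'' gives a set $C$ of size at most $a/\tau$; a further level-by-level hypercontractivity estimate bounds the low-degree Fourier mass off $C$ by $\delta/4$. The truncation $g = \sum_{S \subseteq C,\, |S|\leq k} \wh{f}(S)\, \chi_S$ therefore satisfies $\|f-g\|_2^2 \leq \delta/2$; rounding $g$ to a $\{0,1\}$-valued function depending only on $C$ produces the required core family $\mc{F}'$, and combining the choices of $k$ and $1/\tau$ yields the claimed size bound $|C| \leq c_p^{a/\delta}$.

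The main obstacle is tracking the $p$-dependence through the hypercontractive estimates: the hypercontractive constant for $\mu_p$ degrades as $p \to 1$, precisely the regime of interest in the overall reduction, so one must verify that the constant $c_p$ remains finite for each fixed $p \in (0,1)$. A secondary subtlety lies in bounding the low-degree Fourier mass off $C$: the naive bound $\sum_{i \notin C} \sum_{S \ni i,\, |S|\leq k} \wh{f}(S)^2 \leq \sum_{i \notin C} \tn{Inf}_i^p(\mc{F}) \leq a$ is not by itself small enough, and one must instead apply hypercontractivity separately on each level $|S| = j$ to extract geometric decay in $j$. The final step of converting the $L^2$-close approximant $g$ to an honest monotone subfamily $\mc{F}'$ depending only on $C$ is a standard rounding step which does not affect the order of the bound.
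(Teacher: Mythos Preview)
The paper does not prove this theorem. Theorem~\ref{thm-Friedgut} is stated with a citation to \cite{Friedgut, ST11} and used as a black box in the proof of Lemma~\ref{lem-2element}; no argument for it appears anywhere in the paper. So there is no ``paper's own proof'' to compare your proposal against.

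That said, your outline is a faithful sketch of the standard Friedgut argument as adapted to a general product space, and is essentially the approach taken in the cited references. The ingredients you list --- Fourier expansion over $L^2(\{*,1,2\}^m,\mu_p)$, the equivalence (up to $p$-dependent constants) between combinatorial and Fourier influence for monotone $\mc{F}$, a hypercontractive tail bound to truncate to degree $k = O_p(a/\delta)$, an influence threshold $\tau$ to define $C$, and a level-by-level hypercontractive estimate to control the low-degree mass supported off $C$ --- are exactly the steps in Friedgut's original proof and its generalization in \cite{ST11}. You have also correctly flagged the two genuine technical points: the $p$-dependence of the hypercontractive constant (which indeed blows up as $p\to 1$, and is precisely why the paper later takes $\bar{c}_p = \max\{c_{p'} : p' \in [1-\eps, 1-\eps/2]\}$ in the proof of Lemma~\ref{lem-2element}), and the need for the refined level-$j$ estimate rather than the crude bound $\sum_{i\notin C}\tn{Inf}_i^p(\mc{F}) \leq a$. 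For the purposes of this paper nothing more is required, since the result is simply imported.
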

The above theorem shall be used along with the
following generalization of Lemma
3.1 in \cite{DS} proved in \cite{SS13}.
\begin{proposition}\label{prop-coremass} 
If $C$ is a $(\delta, p)$-core of $\mc{F}$, then
$\mu_p^C\left([\mc{F}]_C^{\slfrac{3}{4}}\right) \geq \mu_p({\cal F}) - 
3\delta.$
\end{proposition}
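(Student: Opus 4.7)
The plan is to work with the family $\mc{F}' \subseteq \{*, 1, 2\}^m$ that witnesses $C$ being a $(\delta, p)$-core of $\mc{F}$: by definition $\mc{F}'$ depends only on the coordinates in $C$ and $\mu_p(\mc{F} \triangle \mc{F}') \leq \delta$. Since $\mc{F}'$ is determined by its $C$-restriction $\mc{A} \subseteq \{*,1,2\}^C$ with $\mu_p(\mc{F}') = \mu_p^C(\mc{A})$, I would first rewrite the hypothesis in terms of the conditional probability $\phi(F) := \Pr_{F' \sim \mu_p^{[m] \setminus C}}\!\left[(F, F') \in \mc{F}\right]$. With this notation $\mu_p(\mc{F}) = \int \phi \, d\mu_p^C$, the core-family $[\mc{F}]_C^{3/4}$ equals $\mc{B} := \{F : \phi(F) > 3/4\}$, and the $(\delta,p)$-core condition rewrites as the single integral inequality
$$\int_{\mc{A}}(1 - \phi)\, d\mu_p^C \;+\; \int_{\bar{\mc{A}}} \phi\, d\mu_p^C \;\leq\; \delta.$$

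Next I would partition $\{*,1,2\}^C$ into the four cells determined by $\mc{A}$ and $\mc{B}$, and argue that the contributions from $\mc{A} \cap \mc{B}$ and $\bar{\mc{A}} \cap \mc{B}$ are already absorbed by $\mu_p^C(\mc{B})$ (since $\phi \leq 1$), so that
$$\mu_p(\mc{F}) - \mu_p^C(\mc{B}) \;\leq\; \int_{\mc{A} \cap \bar{\mc{B}}} \phi\, d\mu_p^C \;+\; \int_{\bar{\mc{A}} \cap \bar{\mc{B}}} \phi\, d\mu_p^C.$$
On the cell $\mc{A} \cap \bar{\mc{B}}$ one has $\phi \leq 3/4$, which gives both the bound $\int_{\mc{A} \cap \bar{\mc{B}}}\phi \leq \tfrac{3}{4}\mu_p^C(\mc{A} \cap \bar{\mc{B}})$ and, via $1-\phi \geq 1/4$, the bound $\mu_p^C(\mc{A} \cap \bar{\mc{B}}) \leq 4\delta$ extracted from the global constraint.

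The step requiring care, and the main obstacle, is that bounding the two remaining integrals independently by $\delta$ each would only yield $5\delta$ rather than the target $3\delta$. To sharpen this, I would use the core condition \emph{jointly}: dropping all but two terms gives $\tfrac{1}{4}\mu_p^C(\mc{A} \cap \bar{\mc{B}}) + \int_{\bar{\mc{A}} \cap \bar{\mc{B}}}\phi \leq \delta$, which when added to $\int_{\mc{A} \cap \bar{\mc{B}}}\phi \leq \tfrac{3}{4}\mu_p^C(\mc{A} \cap \bar{\mc{B}})$ yields
$$\int_{\mc{A} \cap \bar{\mc{B}}} \phi \;+\; \int_{\bar{\mc{A}} \cap \bar{\mc{B}}} \phi \;\leq\; \tfrac{1}{2}\mu_p^C(\mc{A} \cap \bar{\mc{B}}) + \delta \;\leq\; 2\delta + \delta \;=\; 3\delta,$$
where the last step re-applies $\mu_p^C(\mc{A}\cap \bar{\mc{B}}) \leq 4\delta$. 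This joint exploitation of the core condition, rather than splitting it into independent pieces, is the only subtle point; everything else is routine accounting once the four-cell partition is set up.
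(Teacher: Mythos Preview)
Your proof is correct. The setup via the conditional probability $\phi$ and the four-cell partition is the natural approach, and your ``joint'' use of the core inequality---keeping the two terms $\tfrac{1}{4}\mu_p^C(\mc{A}\cap\bar{\mc{B}})$ and $\int_{\bar{\mc{A}}\cap\bar{\mc{B}}}\phi$ together rather than bounding them separately---is exactly what is needed to reach the constant $3$ instead of a looser one. Note that the paper itself does not prove this proposition: it is stated with a citation to \cite{SS13} (as a generalization of Lemma~3.1 in \cite{DS}), so there is no in-paper argument to compare against.
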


Using the above one can prove the following lemma.
\begin{lemma} \label{lem-2element}
For a fixed parameter $p \in (0,1)$ and a positive constant $\delta$, 
given a monotone family $\mc{F}\subseteq
\{*,1,2\}^m$ such that
$\mu_p(\mc{F}) \geq \delta$, there exists a subset $S \subseteq [m]$
such that $|S|\leq \bar{c}_p^{\slfrac{16}{(1-p)\delta}}$, for some
constant $\ol{c}_p$ depending only on $p$, and two elements $F, F' \in
\mc{F}$ such that for all $j \not\in
S$, $(F(j), F'(j))$ is not $(1,1)$ or $(2,2)$.
\end{lemma}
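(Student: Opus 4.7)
The plan is to combine Russo's lemma, Friedgut's theorem, and the core-mass proposition to first extract a small coordinate set $S$ outside of which $\mc{F}$ is insensitive in a strong sense, and then use the distribution $D^{p'}$ to produce the two desired elements by a probabilistic/union-bound argument.

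First I would apply Corollary \ref{cor-follow}(2) with $\eps = 1-p$ to get a $p' \in [p, (1+p)/2]$ with average sensitivity $\tn{as}_{p'}(\mc{F}) \leq 4/(1-p)$. Since $p' \geq p$, monotonicity (Corollary \ref{cor-follow}(1)) gives $\mu_{p'}(\mc{F}) \geq \mu_p(\mc{F}) \geq \delta$. Next I would apply Friedgut's Theorem (Theorem \ref{thm-Friedgut}) at the measure $p'$ with error parameter $\delta/4$: this yields a $(\delta/4, p')$-core $C \subseteq [m]$ of size at most $c_{p'}^{(4/(1-p))/(\delta/4)} = c_{p'}^{16/((1-p)\delta)}$. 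Setting $\bar{c}_p$ to be a $p$-dependent upper bound on $c_{p'}$ for the interval of possible $p'$, this gives the required size bound on $S := C$.

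Now I would invoke Proposition \ref{prop-coremass}: the core-family $[\mc{F}]_C^{3/4}$ has measure at least $\mu_{p'}(\mc{F}) - 3\delta/4 \geq \delta/4 > 0$, so in particular it is nonempty. Pick any $F_C \in [\mc{F}]_C^{3/4}$, so by definition $\Pr_{G \sim \mu_{p'}^{[m]\setminus C}}[(F_C, G) \in \mc{F}] > 3/4$.

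The final step is to produce the two elements $F, F'$. I would fix their common restriction to $C$ to be $F_C$, and sample their restrictions to $[m]\setminus C$ jointly and coordinatewise from $D^{p'}$. Since $D^{p'}$ has both marginals equal to $\mu_{p'}$, the marginal laws of the two outside restrictions $F|_{[m]\setminus C}$ and $F'|_{[m]\setminus C}$ are each $\mu_{p'}^{[m]\setminus C}$. Hence each of them individually extends $F_C$ into $\mc{F}$ with probability strictly greater than $3/4$, and by the union bound both extensions lie in $\mc{F}$ with probability strictly greater than $1/2 > 0$. In particular such a pair $F, F' \in \mc{F}$ exists; and by construction of $D^{p'}$, for every $j \notin C$ the pair $(F(j), F'(j))$ is drawn from a distribution supported on $\{(1,2),(2,1),(*,1),(1,*),(*,2),(2,*),(*,*)\}$, so it is never $(1,1)$ or $(2,2)$, as required. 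The only nonroutine point is verifying that the marginals of $D^{p'}$ match $\mu_{p'}$ so that the core-family bound can be applied independently to the two samples — everything else is bookkeeping.
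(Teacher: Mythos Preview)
Your proposal is correct and follows essentially the same approach as the paper: pass to a nearby $p'$ with bounded average sensitivity via Corollary~\ref{cor-follow}, extract a small $(\delta/4,p')$-core $S$ via Friedgut, use Proposition~\ref{prop-coremass} to find an element of $[\mc{F}]_S^{3/4}$, and then couple the outside coordinates via $D^{p'}$ and apply a union bound. The paper's proof is identical in structure and in all the key choices (the same error parameter $\delta/4$, the same $3/4$ threshold, and the same observation that the marginals of $D^{p'}$ are $\mu_{p'}$).
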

\begin{proof}
We first choose $\bar{c}_p = \max\{c_{p'} \mid p' \in [1-\eps,
1-\slfrac{\eps}{2}]\}$ where $p := 1 - \eps$. By Corollary
\ref{cor-follow} there is a $p' \in [1-\eps, 1-\slfrac{\eps}{2}]$ such
that $a := \tn{as}_{p'}(\mc{F}) \leq \slfrac{4}{\eps} =
\slfrac{4}{(1-p)}$. Using Theorem
\ref{thm-Friedgut} one can obtain a $(\slfrac{\delta}{4}, p')$-core
$S$ of $\mc{F}$ of size $|S|\leq \bar{c}_p^{\slfrac{16}{\delta(1-p)}}$. By
Proposition \ref{prop-coremass} and using the fact that
$\mu_{p'}(\mc{F}) \geq \mu_p(\mc{F}) \geq \delta$, we get that,
$$\mu_{p'}^S\left([\mc{F}]_S^{\slfrac{3}{4}}\right)\geq
\delta - \slfrac{3\delta}{4} = \slfrac{\delta}{4} > 0,$$ where 
$[\mc{F}]_S^{\slfrac{3}{4}}$ is the
core-family with respect to the measure $\mu_{p'}$.
Choose an element $\ol{F} \in [\mc{F}]_S^{\slfrac{3}{4}}$.
Probabilistically construct
$F, F' \in \mc{F}$ as follows. For $j \in S$, set $F(j)$ and $F'(j)$
to the corresponding value $\ol{F}(j)$. For $j \not\in S$
independently sample
$(F(j), F'(j))$ from $D^{p'}$. Since the marginals of $D^{p'}$ are
distributed as $\mu_{p'}$, by the definition of a core-family, we
have,
$$\Pr[F \in \mc{F}\textnormal{ and } F' \in\mc{F}] \geq 1 -
2\left(\frac{1}{4}\right) \geq \frac{1}{2}.$$
Moreover, since $(1,1)$ and $(2,2)$ do not lie in the support of
$D^{p'}$, the elements $F, F'$ satisfy the condition of the lemma.
\end{proof}

\subsection{Hardness Reduction}
Let $\delta, \eps > 0$ be parameters that we shall set later. We begin
with an instance $\Phi$ of the Multi-Layered PCP from Theorem
\ref{thm-multi}.
The number of layers $L$ of $\Phi$ is chosen to be $\lceil
32\delta^{-2}\rceil$. The parameter $R$ shall be set later to be large
enough. In the following paragraphs we describe the construction of
a weighted $3$-uniform hypergraph $G$ with vertex
set $\mc{H}$ a hyperedge set $\mc{E}$ and a weight function $\wt$ 
on the vertices,  as an instance of {\sc ISAlmostColor}$_\eps(3, 2,
\slfrac{1}{\delta})$. 

\medskip
\noindent
{\bf Vertices.} Consider a variable $v \in V_l$, i.e. in the 
$l$th layer of $\Phi$. Let a \emph{Long Code} 
$\mc{H}^v$ be a copy of the set
$\{1,2,*\}^{R_l}$ equipped with the measure $\mu_p$ where $p:=1-\eps$. 
The set of
vertices $\mc{H} := \cup_{1\leq l\leq L}\cup_{v\in V_l}\mc{H}^v$. 
The weight of any $x \in \mc{H}^v$ is,
$$\wt(x) = \frac{\mu_p(x)}{L|V_l|}.$$
Thus, the total weight of the vertices corresponding to any layer of
the PCP is $1/L$, which is equally distributed over the Long Codes of
all the variables in that layer.

\medskip
\noindent
{\bf Hyperedges.} For all variables $v \in V_l$ and $u \in V_{l'}$  ($l
< l'$) such that there is a constraint
$\pi_{v\rightarrow u}$ between them, 
add a hyperedge between all $x \in
\mc{H}^{u}$ and $y, z \in \mc{H}^v$ which satisfy the following
condition: 
For any $j \in R_l$ and $i = \pi_{v\rightarrow u}(j) \in R_{l'}$,
the tuple $(x_i, y_j, z_j)$ is not $(1,1,1)$ or $(2,2,2)$.

\subsection{YES Case}
In the YES case, there is an assignment $\sigma$ 
of labels to the variables of
$\Phi$ that satisfies all the constraints. Construct a partition of
$\mc{H}$ into disjoint subsets $\mc{H}_1, \mc{H}_2$ and $\mc{H}_*$ as
follows. For any variable $v$ of $\Phi$, add $x \in \mc{H}^v$ to
$\mc{H}_{x_{\sigma(v)}}$. It is easy to see that $\wt(\mc{H}_*) =
\eps$ and $\wt(\mc{H}_1) = \wt(\mc{H}_2) = \left(\frac{1-\eps}{2}\right)$. 

Furthermore, Let  $v, u$ be variables such that there is a constraint
$\pi_{v\rightarrow u}$ between them. Suppose there is a hyperedge
between $x \in \mc{H}^u$ and $y, z \in \mc{H}^v$. Since $\sigma$ is a
satisfying assignment, $\pi_{v\rightarrow u}(\sigma(v)) = \sigma(u)$.
By the construction of the hyperedges, this implies that the tuple
$(x_{\sigma(u)}, y_{\sigma(v)}, z_{\sigma(v)})$ is not $(1,1,1)$ or
$(2,2,2)$, and thus the hyperedge $(x,y,z)$ is not contained in
$\mc{H}_1$ or in $\mc{H}_2$. Therefore, removing the set of vertices
$\mc{H}_*$ of weight $\eps$ and the hyperedges incident on it makes
the hypergraph $2$-colorable. 

\subsection{NO Case}
In the NO Case assume that there is a maximal independent set
$\mc{I}\subseteq \mc{H}$ of weight $\wt(\mc{I}) \geq \delta$. From the
construction of the hyperedges, it is easy to see that any maximal
independent set is monotone. 
Let $\mc{I}^v := \mc{I}\cap \mc{H}^v$ for any variable
$v$ of $\Phi$. Thus, each $\mc{I}^v$ is a monotone family.

Consider the set of variables 
$$U := \left\{ u \in V \mid \mu_p(\mc{I}^u) = 
\frac{\wt(\mc{I}^u)}{\wt(\mc{H}^u)} \geq
\frac{\delta}{2}\right\}.$$ By averaging, it is easy to see that,
$$\sum_{u\in U}\wt(\mc{H}^u)\geq \frac{\delta}{2}.$$
Another averaging shows that for at least $\frac{\delta}{4}L \geq
\frac{8}{\delta}$ layers $l$, at least $\frac{\delta}{4}$ fraction of
variables in layer $l$ belong to $U$. Applying the weak density
property we obtain two layers $l < l'$ such that at least
$\frac{\delta^2}{64}$ fraction of the constraints between $V_l$ and
$V_{l'}$ are between the variables in $U_l := U\cap V_l$ and $U_{l'} :=
U\cap V_{l'}$. The following key lemma follows from Lemma
\ref{lem-2element}. 
\begin{lemma}\label{lem-key} 
For any variable $v \in U_l$ there is a subset 
$S^v \subseteq R_l$ of size $|S^v| \leq 
t(\eps,\delta) := c_\eps^{\slfrac{1}{\delta}} 
$ for some 
constant $c_\eps
> 0$ depending on $\eps$, 
and elements $y^v, z^v \in \mc{I}^v$ such that for all $j \in
R_l\setminus S^v$, the tuple $(y^v_j, z^v_j)$ is not $(1,1)$ or
$(2,2)$. 
\end{lemma}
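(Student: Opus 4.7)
The plan is to obtain Lemma \ref{lem-key} as a direct application of Lemma \ref{lem-2element} to the family $\mc{I}^v$, viewed as a monotone subset of $\{*,1,2\}^{R_l}$. First I would verify that the hypotheses of Lemma \ref{lem-2element} are met: the family $\mc{I}^v := \mc{I} \cap \mc{H}^v$ is monotone because, as noted in the NO-case setup, any maximal independent set $\mc{I}$ is monotone (changing a coordinate from $*$ to $1$ or $2$ only removes hyperedges through a vertex, hence cannot destroy independence), and this monotonicity descends coordinatewise to each Long Code $\mc{H}^v$. Moreover, the condition $v \in U_l$ built into the hypothesis gives the mass lower bound $\mu_p(\mc{I}^v) \geq \delta/2$ under $p := 1-\eps$.

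With these two ingredients in hand, I would invoke Lemma \ref{lem-2element} on $\mc{I}^v$ with mass parameter $\delta/2$ and bias parameter $p = 1-\eps$, producing a subset $S^v \subseteq R_l$ of size
\[
|S^v| \;\leq\; \bar{c}_p^{\,16/((1-p)\cdot \delta/2)} \;=\; \bar{c}_p^{\,32/(\eps\delta)},
\]
together with two elements $F, F' \in \mc{I}^v$ for which $(F(j), F'(j))$ avoids both $(1,1)$ and $(2,2)$ at every $j \notin S^v$. I would then set $y^v := F$ and $z^v := F'$, which directly matches the conclusion of the lemma.

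Finally I would absorb the constants: defining $c_\eps := \bar{c}_p^{\,32/\eps}$, the size bound becomes $|S^v| \leq c_\eps^{1/\delta}$, exactly the claimed $t(\eps,\delta)$. I do not anticipate a serious obstacle, since all of the non-trivial machinery — Russo's Lemma, Friedgut's Junta Theorem with the $(\delta,p)$-core bound, the core-family mass estimate of Proposition \ref{prop-coremass}, and the two-sample coupling from $D^{p'}$ — has already been packaged inside Lemma \ref{lem-2element}. The only point warranting care is the monotonicity of $\mc{I}^v$ and the precise bookkeeping of the $\delta/2$ mass (as opposed to $\delta$) that falls out of the definition of $U_l$, both of which are straightforward.
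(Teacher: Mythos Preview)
Your proposal is correct and matches the paper's approach exactly: the paper simply states that Lemma~\ref{lem-key} ``follows from Lemma~\ref{lem-2element},'' and your write-up supplies precisely the verification of hypotheses (monotonicity of $\mc{I}^v$ and the $\delta/2$ mass bound from $v \in U_l$) and the constant-absorption needed to instantiate that lemma. There is nothing to add.
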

Note that in the above, if $S^v$ is empty then $y^v$ and $z^v$ will
trivially ensure a hyperedge in $\mc{I}$, so we may assume it is
non-empty.

Using the above lemma we can now define the labeling for each of the
variables in $U_{l}$ and $U_{l'}$.

\medskip
\noindent 
{\bf Labeling for $v \in U_{l}$:} Choose a label $\rho(v) \in R_l$
uniformly at random from $S^v$. 

\medskip
\noindent
{\bf Labeling for $u \in U_{l'}$:} This choice is made depending on
the labeling of variables in $U_l$. Let $N(u) \subseteq V_l$ be all the
variables in $V_l$ which have a constraint with $u$. Choose a label
$\lambda(u)$ defined below,
$$\lambda(u) := \textnormal{argmax}_{a \in R_{l'}} \left|\{v 
\in N(u)\cap U_l \mid
\pi_{u\rightarrow v}(\rho(v)) = a\}\right|.$$
In other words, $\lambda(u)$ is the label in $R_{l'}$ 
which is the projection of
the maximum number of labels of the neighbors of $u$ in $U_l$. 

\medskip

For the rest of the analysis we shall focus on one variable $u \in
U_{l'}$ and its neighborhood in $U_l$, $N(u)\cap U_l$.
To complete the analysis we need the following lemma proved in
\cite{DGKR03}. 
\begin{lemma}\label{lem-inter} Let $A_1, A_2, \dots, A_N$ be a
collection of $N$ sets, each of size at most $T\geq 1$. If there are
not more than $D$ pairwise disjoint subsets in the collection then
there must exist an element which belongs to at least $\frac{N}{TD}$
sets.
\end{lemma}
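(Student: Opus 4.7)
The plan is to prove Lemma~\ref{lem-inter} by a standard maximal-disjoint-family plus pigeonhole argument. First I would select a maximal subcollection $\mathcal{C} = \{A_{i_1},\dots,A_{i_k}\}$ of pairwise disjoint sets from $\{A_1,\dots,A_N\}$. By the hypothesis of the lemma we have $k \leq D$, so the union $U := A_{i_1}\cup\cdots\cup A_{i_k}$ has cardinality at most $kT \leq TD$.

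The key observation is then maximality: for every index $j \in [N]$, the set $A_j$ must have nonempty intersection with $U$. Indeed, if $A_j$ were disjoint from every $A_{i_m}$, then adjoining $A_j$ to $\mathcal{C}$ would produce a strictly larger pairwise disjoint subcollection, contradicting the maximality of $\mathcal{C}$. (If $A_j$ itself lies in $\mathcal{C}$, it trivially meets $U$.)

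Once every $A_j$ is known to contain at least one element of $U$, a counting/pigeonhole step finishes the proof. Summing the indicators $\mathbbm{1}[x\in A_j]$ over $x\in U$ and $j\in [N]$, the total is at least $N$ (one contribution per $j$), while the sum ranges over at most $|U|\leq TD$ elements. Hence some $x\in U$ lies in at least $N/|U|\geq N/(TD)$ of the sets $A_j$, which is exactly the conclusion sought.

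I do not anticipate any genuine obstacle here: the lemma is a clean combinatorial statement, and the only thing to be careful about is that the bound $|U|\le TD$ uses both the size cap $|A_{i_m}|\leq T$ and the cap $k\leq D$ on the maximal disjoint subcollection, and that the pigeonhole is applied only to elements of $U$ rather than to the entire ground set. Both points are immediate from the construction above.
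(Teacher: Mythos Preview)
Your argument is correct and is exactly the standard maximal-disjoint-subcollection plus pigeonhole proof of this fact. The paper does not actually supply its own proof of Lemma~\ref{lem-inter}; it simply quotes the statement and attributes it to \cite{DGKR03}, so there is nothing further to compare against.
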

Consider the collection $\{\pi_{v\rightarrow u}(S^v) 
\mid v \in N(u)\cap U_l\}$. Each
subset in this collection is of size at most $t(\eps, \delta)$.
Each such subset $\pi_{v \rightarrow u}(S^v)$ 
rules out $\mc{I}^u$ containing any element
$x^u$
with $*$ in all coordinates corresponding to $\pi_{v\rightarrow
u}(S^v)$. Otherwise, $(x^u, y^v, z^v)$ would be a hyperedge in
$\mc{I}$. 
Suppose there are $r$ pairwise disjoint subsets in this collection.
This would reduce the measure $\mu_p(\mc{I}^u)$ by a factor of
$\left(1 - \eps^{t(\eps, \delta)}\right)^r$. However,
$\mu_p(\mc{I}^u) \geq \frac{\delta}{2}$. Thus, $r$ is at most $\log
\left(\frac{\delta}{2}\right)/\log \left(1 - \eps^{t(\eps,
\delta)}\right)$. Applying Lemma \ref{lem-inter} there is an element
$a$ contained in at least 
$$\frac{\log \left(1 - \eps^{t(\eps,
\delta)}\right)}{\left(t(\eps,
\delta)\log\left(\frac{\delta}{2}\right)\right)},$$
fraction of the subsets in the collection $\{\pi_{v\rightarrow
u}(S^v) \mid v \in N(u)\cap
U_l\}$. This implies that in expectation, over the choice of
$\{\rho(v) \mid v \in N(u)\cap U_l\}$, $\pi_{v\rightarrow
u}(\rho(v)) = a$ for at least,
$$\xi(\eps, \delta): = \frac{\log \left(1 - \eps^{t(\eps,
\delta)}\right)}{\left(t(\eps,
\delta)^2\log\left(\frac{\delta}{2}\right)\right)},$$
fraction of $N(u)\cap U_l$. Thus, in expectation the labelings $\rho$
and $\lambda$ satisfy $\xi(\eps,
\delta)\left(\frac{\delta^2}{64}\right)$ fraction of the constraints
between the layers $l$ and $l'$. Choosing the parameter $R$ of $\Phi$
to be small enough gives a contradiction.

\section{Independent Set in $2$-Colorable
$3$-Uniform Hypergraphs}\label{sec-main3}

We begin with a few useful definitions and results, which can also
be found in greater detail in \cite{Mossel}. The correlation between
two correlated probability spaces is defined as follows.
\begin{definition}\label{def-corr}
Suppose $(\Omega^{(1)} \times \Omega^{(2)}, \mu)$ is a finite
correlated probability space with the marginal probability spaces
$(\Omega^{(1)}, \mu)$ and  $(\Omega^{(2)}, \mu)$. The
\emph{correlation} between these spaces is,
$$\rho(\Omega^{(1)}, \Omega^{(2)}; \mu) = \tn{sup}
\left\{\left|\E_{\mu}[fg]\right|
\mid f\in L^2(\Omega^{(1)}, \mu), g\in L^2(\Omega^{(2)}, \mu), 
\E[f]= \E[g] = 0; \E[f^2], \E[g^2] \leq 1\right\}.$$
Let $(\Omega^{(1)}_i\times \Omega^{(2)}_i, \mu_i)_{i=1}^n$ be a sequence
of correlated spaces. Then,
$$\rho(\prod_{i=1}^n\Omega^{(1)}_i, \prod_{i=1}^n\Omega^{(2)}_i; 
\prod_{i=1}^n\mu_i) \leq \max_{i} \rho(\Omega^{(1)}_i, \Omega^{(2)}_i;
\mu_i).$$
Further, the correlation of $k$ correlated spaces
$(\prod_{j=1}^k\Omega^{(j)}, \mu)$ is defined as
follows: 
$$\rho(\Omega^{(1)},\Omega^{(2)}, \dots, \Omega^{(k)}; \mu) := 
\max_{1\leq i\leq k} \rho\left(\prod_{j=1}^{i-1}\Omega^{(j)}\times
\prod_{j=i+1}^k\Omega^{(j)}, \Omega^{(i)}; \mu\right).$$ 
\end{definition}
\begin{lemma}\label{lem-corbd} Let $(\Omega^{(1)}\times\Omega^{(2)},
\mu)$ be two correlated spaces such that the probability of the
smallest atom in $(\Omega^{(1)}\times\Omega^{(2)},
\mu)$ is at least $\alpha \in (0,\slfrac{1}{2}]$. Define a bipartite
graph between $\Omega^{(1)}$ and $\Omega^{(2)}$ with an edge between
$(a,b) \in \Omega^{(1)}\times\Omega^{(2)}$ if $\mu(a,b) > 0$. If this
graph is connected then,
$$\rho(\Omega^{(1)}, \Omega^{(2)};
\mu) \leq 1 - \slfrac{\alpha^2}{2}.$$
\end{lemma}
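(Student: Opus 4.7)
My plan is to reduce the lemma to the easy (Mihail) direction of Cheeger's inequality applied to the natural reversible random walk on the bipartite graph $G$ underlying $(\Omega^{(1)} \times \Omega^{(2)}, \mu)$. Set $V := \Omega^{(1)} \sqcup \Omega^{(2)}$ and define a Markov chain on $V$ with transition matrix $P(u,v) := \mu(u,v)/\mu(u)$, which is nonzero only when $u$ and $v$ lie on opposite sides and $(u,v)$ is an edge of $G$. Writing $\mu(v) := \sum_u \mu(u,v)$, the distribution $\pi(v) := \mu(v)/2$ is a probability measure on $V$ and satisfies the detailed balance condition $\pi(u) P(u,v) = \mu(u,v)/2 = \pi(v) P(v,u)$, so $P$ is reversible with respect to $\pi$. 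Connectedness of $G$ implies irreducibility of $P$.

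The second step is to identify $\rho$ with the second-largest eigenvalue $\lambda_2(P)$. Writing any $h \in L^2(V, \pi)$ as $h = (f, g)$ with $f$ supported on $\Omega^{(1)}$ and $g$ supported on $\Omega^{(2)}$, the operator $P$ takes the anti-diagonal form $(f, g) \mapsto (Tg, T^*f)$, where $T$ is the conditional-expectation operator $(Tg)(a) = \E[g(B) \mid A = a]$ and, by a direct computation using detailed balance, $(T^*f)(b) = \E[f(A) \mid B = b]$ is indeed its adjoint. Hence the spectrum of $P$ consists of the values $\pm \sigma_i$ where the $\sigma_i$ are the singular values of $T$. The top singular value $\sigma_1 = 1$ is attained by constants, and by Definition \ref{def-corr} combined with the duality $\langle f, Tg \rangle_{\mu^{(1)}} = \E_\mu[fg]$, the next singular value $\sigma_2$ equals exactly $\rho$. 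Thus $\lambda_2(P) = \rho$.

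Third, I bound the conductance of $P$ from below by $\alpha$. For any $S \subsetneq V$ with $\pi(S) \le \tfrac{1}{2}$, the connectedness of $G$ guarantees at least one edge $(u,v)$ of $G$ with $u \in S$ and $v \notin S$, and the hypothesis on the smallest atom gives $\mu(u,v) \ge \alpha$. Consequently
\[
\phi(S) \;=\; \frac{\sum_{u \in S,\, v \notin S} \pi(u) P(u,v)}{\pi(S)} \;=\; \frac{\tfrac{1}{2} \sum_{u \in S,\, v \notin S} \mu(u,v)}{\pi(S)} \;\geq\; \frac{\alpha/2}{1/2} \;=\; \alpha,
\]
so the chain's conductance satisfies $\phi(P) \ge \alpha$.

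Combining these ingredients, the easy direction of Cheeger's inequality yields $1 - \lambda_2(P) \ge \phi(P)^2/2$, hence $1 - \rho \ge \alpha^2/2$, which is the desired bound. The chief technical point to watch is the alignment of inner products in the second step: one must verify carefully that the off-diagonal blocks of $P$ acting on $L^2(V, \pi)$ really do coincide with $T$ and $T^*$ in the normalization used in Definition \ref{def-corr}, so that the singular-value identification $\sigma_2 = \rho$ is legitimate. Once that is settled, the rest is bookkeeping, because the smallest-atom hypothesis translates directly into a uniform lower bound on the weights of all edges of $G$, which is exactly what the conductance calculation requires.
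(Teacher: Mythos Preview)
Your proof is correct. The paper does not actually supply its own proof of this lemma; it is stated as a known preliminary (it is Lemma~2.9 in Mossel~\cite{Mossel}), so there is nothing in the paper to compare against line by line.

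Your route---identifying $\rho$ with the second singular value of the Markov operator $T$, realising the bipartite random walk $P=\begin{pmatrix}0&T\\T^{*}&0\end{pmatrix}$ on $L^{2}(V,\pi)$ so that $\lambda_{2}(P)=\sigma_{2}(T)=\rho$, lower-bounding the conductance by $\alpha$ using the smallest-atom hypothesis, and then invoking the discrete Cheeger bound $1-\lambda_{2}\ge \phi^{2}/2$---is one of the standard ways this fact is established and is essentially how Mossel's argument proceeds. The normalization check you flag (that the $1/2$ in $\pi=\mu/2$ cancels so that $\sigma_{2}(T)$ computed in $L^{2}(V,\pi)$ agrees with the $\rho$ of Definition~\ref{def-corr}) is indeed the only place where care is needed, and it goes through. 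One terminological quibble: the direction $1-\lambda_{2}\ge\phi^{2}/2$ is more commonly called the Jerrum--Sinclair (or ``hard'') direction of Cheeger rather than the easy one, though the name is not universal and the inequality you use is the right one. Also note that the bipartite walk $P$ is periodic, but Cheeger's inequality for $\lambda_{2}$ requires only reversibility and irreducibility, both of which you have, so this causes no trouble.
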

\noindent 
We shall also refer to the following Gaussian stability measures in our
analysis.
\begin{definition}
Let $\Phi : \R\mapsto [0,1]$ be the cumulative distribution function
of the standard Gaussian. For a parameter $\rho$, define,
$$\underline{\Gamma}_\rho(\mu, \nu) = \Pr[X \leq \Phi^{-1}(\mu), Y \geq
\Phi^{-1}(1- \nu)],$$ 
$$\ol{\Gamma}_\rho(\mu, \nu) = \Pr[X \leq \Phi^{-1}(\mu), Y \leq
\Phi^{-1}(\nu)],$$ 
where $X$ and $Y$ are two standard Gaussian variables with covariance
$\rho$.
\end{definition}
The Bonami-Beckner
operator is defined as follows.
\begin{definition} Given a probability space $(\Omega, \mu)$ and 
$\rho \geq 0$, consider the space $(\Omega\times \Omega, \mu')$ where
$\mu'(x,y) = (1-\rho)\mu(x)\mu(y) + \rho\mathbf{1}_{\{x=y\}}\mu(x)$,
where $\mathbf{1}_{\{x=y\}} = 1$ if $x=y$ and $0$ otherwise. The
Bonami-Beckner operator $T_\rho$ is defined by,
$$ (T_\rho f)(x) = \E_{(X,Y)\leftarrow \mu'}\left[f(Y) \mid X =
x\right].$$
For product spaces $(\prod_{i=1}^n \Omega_i, \prod_{i=1}^n \mu_i)$, the
Bonami-Beckner operator $T_\rho = \otimes_{i=1}^n T^i_\rho$, where
$T^i_\rho$ is the operator for the $i$th space $(\Omega_i, \mu_i)$. 
\end{definition}
\noindent
By Proposition 2.12 and 2.13 of \cite{Mossel} and 
using Lemma 2.4 of \cite{Hastad12} we have the following
lemma. 
\begin{lemma}\label{lem-Efron-damp} Let 
 $(\Omega^{(1)}_i\times \Omega^{(2)}_i, \mu_i)_{i=1}^n$ be a sequence
of correlated spaces with $\rho_i = \rho(\Omega^{(1)}_i, \Omega^{(2)}_i;
\mu_i)$. Let $f : \prod_{i=1}^n\Omega^{(1)}_i \mapsto \R$ and 
$g : \prod_{i=1}^n\Omega^{(2)}_i \mapsto \R$, and let $g = \sum_{S}
g_S$ be the Efron-Stein decomposition of $g$ (refer to \cite{Mossel}
for a definition). Then,
$$ \E[f(x)g_S(y)] \leq \|f\|_2\|g\|_2\prod_{i\in S}\rho_i.$$
If the Efron-Stein decomposition of $g$ contains only functions of
weight at
least $s$ and $\rho = \max_i\rho_i$, then,
$$ \E[f(x)g(y)] \leq \rho^s\|f\|_2\|g\|_2.$$
The above also implies for the Bonami-Beckner operator $T_\rho$ that,
$$\|T_\rho f\|_2 \leq \rho^s\|f\|_2,$$
if the Efron-Stein decomposition of $f$ contains functions of weight
at least $s$.
\end{lemma}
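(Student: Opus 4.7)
The plan is to reduce all three inequalities to two building blocks: first, the orthogonality of the Efron-Stein decomposition across correlated product spaces; and second, the single-coordinate contraction that is built into the definition of the maximal correlation $\rho_i$. Write $f = \sum_{T} f_T$ and $g = \sum_{S} g_S$ for the Efron-Stein decompositions on $\prod_i \Omega^{(1)}_i$ and $\prod_i \Omega^{(2)}_i$ respectively. Each component $g_S$ depends only on coordinates in $S$ and is conditionally mean-zero in every $i \in S$. Since the joint law $\mu = \prod_i \mu_i$ factorizes across coordinates, I would verify, by induction on $n$, the key orthogonality $\E[f_T(x)\, g_S(y)] = 0$ whenever $T \neq S$: picking $i \in T \triangle S$, one of the two factors has zero conditional mean in the $i$-th variable, and conditioning on all other coordinates reduces the expectation over $\mu_i$ to zero.

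For the first inequality, introduce the pull-back operator $(Ug)(x) := \E[g(y)\mid x]$. By the definition of $\rho_i$, for any mean-zero $h_i \in L^2(\Omega^{(2)}_i,\mu_i)$ one has $\|U^i h_i\|_2 \leq \rho_i \|h_i\|_2$. Since $U = \otimes_i U^i$ tensorizes over the product measure and $g_S$ is mean-zero in each $i \in S$ while depending only on coordinates in $S$, iterating the single-coordinate contraction for each $i \in S$ yields $\|U g_S\|_2 \leq \bigl(\prod_{i \in S} \rho_i\bigr) \|g_S\|_2$. Combined with Cauchy-Schwarz and $\|g_S\|_2 \leq \|g\|_2$ (which follows from the orthogonality of the Efron-Stein decomposition), this gives
$$\E[f(x)\, g_S(y)] = \langle f,\, U g_S\rangle \leq \|f\|_2\, \|g\|_2 \prod_{i \in S}\rho_i.$$

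For the second inequality, the orthogonality claim reduces the quantity to a diagonal sum $\E[f(x)g(y)] = \sum_{T} \E[f_T(x)\, g_T(y)]$; applying the per-term bound with $\prod_{i \in T}\rho_i \leq \rho^{|T|} \leq \rho^s$ (using the hypothesis that every surviving $g_T$ has $|T| \geq s$), followed by Cauchy-Schwarz in the form $\sum_T \|f_T\|_2 \|g_T\|_2 \leq \|f\|_2 \|g\|_2$, closes the argument. For the Bonami-Beckner statement, I would invoke the standard diagonalization $T_\rho f_T = \rho^{|T|} f_T$, which is immediate from the definition of the $\rho$-coupled measure $\mu'$ by applying the one-coordinate identity in each coordinate; this gives $\|T_\rho f\|_2^2 = \sum_T \rho^{2|T|}\|f_T\|_2^2 \leq \rho^{2s}\|f\|_2^2$ under the weight-$\geq s$ hypothesis. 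The main technical step, and the only nontrivial one, is the orthogonality of Efron-Stein components across correlated product spaces, which is essentially Proposition 2.12 of \cite{Mossel}; everything else is linear-algebraic assembly.
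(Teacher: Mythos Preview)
Your proposal is correct and is essentially the same route the paper has in mind: the paper does not give its own argument but simply cites Propositions~2.12 and~2.13 of \cite{Mossel} together with Lemma~2.4 of \cite{Hastad12}, and what you have written is exactly the standard unpacking of those results (Efron--Stein orthogonality across correlated products, the per-coordinate contraction by $\rho_i$ tensorized through the Markov operator, and the explicit diagonal action $T_\rho f_T = \rho^{|T|} f_T$ of the noise operator).
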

The influence of a function on a product space is defined as follows.
\begin{definition}
Let $f$ be a measurable function on $(\prod_{i=1}^n\Omega_i,
\prod_{i=1}^n\mu_i)$. The influence of the $i$th coordinate on $f$ is:
$$\tn{Inf}_i(f) = \displaystyle \E_{\{x_j | j \neq
i\}}\left[\tn{Var}_{x_i}\left[f(x_1, x_2, \dots, x_i, \dots,
x_n)\right]\right].$$
\end{definition}
In particular, if $f : \{-1,1\}^n \mapsto \R$, and $f =
\sum_{\alpha\subseteq [n]}\wh{f}_\alpha\chi_\alpha$ is its Fourier
decomposition, then $\tn{Inf}_i(f) = \sum_{\alpha: i \in
\alpha}\wh{f}_\alpha^2$.

The following key results in Mossel's work~\cite{Mossel} shall be used
in the analysis of our reduction. We first restate Lemma 6.2 of
\cite{Mossel}. 
\begin{lemma} \label{lem-Mossel-big} Let $(\Omega_1^{(j)}, \dots,
\Omega_n^{(j)})_{j=1}^k$ be $k$ collections of finite probability
spaces such that $\{\prod_{j=1}^k\Omega^{(j)}_i \mid i=1,\dots, n\}$ are
independent. Suppose further that it holds for all
$i = 1, \dots, n$ that
$\rho(\Omega^{(j)}_i : 1\leq j\leq k) \leq \rho$. Then there exists an
absolute constant $C$ such that for,
$$\gamma = C\frac{(1- \rho)\nu}{\log \left(\slfrac{1}{\nu}\right)},$$
and $k$ functions $\left\{f_j \in L^2(\prod_{i=1}^n\Omega^{(j)}_i)
\right\}_{j=1}^k$,  the following holds,
$$\displaystyle \left|\E\left[\prod_{j=1}^kf_j\right] - 
\E\left[\prod_{j=1}^kT_{1-\gamma}f_j\right]\right| \leq
\nu\sum_{j=1}^k\sqrt{\tn{Var}[f_j]}\sqrt{\tn{Var}\left[\prod_{j'<j}
T_{1-\gamma} f_{j'}\prod_{j'>j}f_{j'}\right]}.$$ 
\end{lemma}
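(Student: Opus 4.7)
The plan is to prove this lemma via a telescoping hybrid argument, replacing each $f_j$ with $T_{1-\gamma}f_j$ one at a time, combined with an Efron-Stein splitting into low-degree and high-degree parts. Since the result appears as Lemma 6.2 in \cite{Mossel}, the proof strategy will closely follow that one, adapted to the multi-function product setting.

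First, I would write the telescoping identity
\begin{equation*}
\E\Bigl[\prod_{j=1}^k f_j\Bigr]-\E\Bigl[\prod_{j=1}^k T_{1-\gamma}f_j\Bigr]=\sum_{j=1}^k \E\Bigl[\prod_{j'<j}T_{1-\gamma}f_{j'}\cdot(f_j-T_{1-\gamma}f_j)\cdot\prod_{j'>j}f_{j'}\Bigr],
\end{equation*}
so it suffices to bound each term by $\nu\sqrt{\tn{Var}[f_j]}\sqrt{\tn{Var}[h_j]}$, where $h_j=\prod_{j'<j}T_{1-\gamma}f_{j'}\prod_{j'>j}f_{j'}$. Since the constant part of $f_j$ is annihilated by $f_j-T_{1-\gamma}f_j$, one may replace $\|f_j\|_2$ by $\sqrt{\tn{Var}[f_j]}$ (and similarly for $h_j$, after observing that its constant term cancels when paired with the mean-zero $f_j-T_{1-\gamma}f_j$).

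The core of the argument handles a single term $\E[(f_j-T_{1-\gamma}f_j)\,h_j]$. I would use the Efron-Stein decomposition $f_j=\sum_S (f_j)_S$, so that $f_j-T_{1-\gamma}f_j=\sum_S (1-(1-\gamma)^{|S|})(f_j)_S$. Split this as a low-degree piece $|S|\le d$ and a high-degree piece $|S|>d$, where $d$ will be chosen shortly. For the low-degree piece, bound $1-(1-\gamma)^{|S|}\le \gamma d$ uniformly, and use Cauchy--Schwarz and orthogonality of the Efron-Stein components to obtain a contribution of at most $\gamma d\,\sqrt{\tn{Var}[f_j]}\,\sqrt{\tn{Var}[h_j]}$. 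For the high-degree piece, apply Lemma \ref{lem-Efron-damp}: the correlation between an Efron-Stein component of weight $|S|>d$ and any function on the companion space is bounded by $\rho^{|S|}\le\rho^d$, yielding a contribution of at most $\rho^d\,\sqrt{\tn{Var}[f_j]}\,\sqrt{\tn{Var}[h_j]}$ after another Cauchy--Schwarz.

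The parameter choice balances these two estimates. Taking $d\approx\log(1/\nu)/(1-\rho)$ makes $\rho^d\le\nu$, and then requiring $\gamma d\le\nu$ forces $\gamma\le C(1-\rho)\nu/\log(1/\nu)$, exactly matching the lemma's hypothesis. Summing the $k$ telescoping contributions then yields the stated bound. The main obstacle, and the reason this result is nontrivial, is the high-degree estimate: it relies on Lemma \ref{lem-Efron-damp}'s strengthening that bounded pairwise correlation $\rho$ of the coordinate spaces, together with the Efron-Stein decomposition across the product, produces a multiplicative decay $\rho^{|S|}$ when one side has all components of weight $\ge|S|$. Making this reasoning rigorous across $k$ simultaneously correlated spaces (rather than just two) is the technical heart, and it is precisely where the assumption $\rho(\Omega^{(j)}_i:1\le j\le k)\le\rho$ is used, via Definition \ref{def-corr} reducing multi-space correlation to pairwise between one coordinate and the product of the rest.
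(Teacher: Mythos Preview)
The paper does not prove this lemma; it is explicitly introduced as a restatement of Lemma~6.2 of \cite{Mossel} and is quoted without proof. Your proposal correctly outlines the standard argument from Mossel's paper: the telescoping hybrid, the Efron--Stein split at degree $d$, the low-degree bound $1-(1-\gamma)^{|S|}\le\gamma d$, the high-degree bound via Lemma~\ref{lem-Efron-damp} (applied between $\Omega^{(j)}$ and $\prod_{j'\ne j}\Omega^{(j')}$, which is exactly what Definition~\ref{def-corr} licenses), and the balancing choice of $d$ and $\gamma$. There is nothing in the paper itself to compare against, and your sketch is sound.
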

Our analysis shall also utilize the following bi-linear Gaussian
stability bound from \cite{Mossel} to locate influential coordinates.
\begin{theorem}\label{thm-Mossel-bilinear}
Let $(\Omega^{(1)}_i\times \Omega^{(2)}_i, \mu_i)$ be a sequence of
correlated spaces such that for each $i$, the probability of any atom
in $(\Omega^{(1)}_i\times \Omega^{(2)}_i, \mu_i)$ is at least $\alpha
\leq \slfrac{1}{2}$ and such that $\rho(\Omega^{(1)}_i,
\Omega^{(2)}_i; \mu_i) \leq \rho$ for all $i$. Then there exists a
universal constant $C$ such that, for every $\nu > 0$, taking 
$$ \tau =
\tn{exp}\left(C\frac{\log(\slfrac{1}{\alpha}) 
\log(\slfrac{1}{\nu})}{\nu(1-\rho)}\right),$$
for functions $f: \prod_{i=1}^n\Omega^{(1)}_i\mapsto [0,1]$ and $
g: \prod_{i=1}^n\Omega^{(2)}_i\mapsto [0,1]$ that satisfy,
$$\max \min_i(\tn{Inf}_i(f), \tn{Inf}_i(g)) \leq \tau,$$
for all $i$, we have,
$$\underline{\Gamma}_\rho(\E[f], \E[g]) - \nu \leq \E[fg] \leq
\ol{\Gamma}_\rho(\E[f], \E[g]) + \nu.$$
\end{theorem}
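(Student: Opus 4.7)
The plan is to prove this by combining a smoothing step, the Invariance Principle, and Borell's theorem on Gaussian noise stability. First I would apply the Bonami--Beckner operator $T_{1-\gamma}$ to both $f$ and $g$ for an appropriately chosen small $\gamma$ (depending on $\nu$ and $1-\rho$), reducing to essentially low-degree functions $\tilde f = T_{1-\gamma}f$ and $\tilde g = T_{1-\gamma}g$. The first sub-step is to show that $|\E[fg] - \E[\tilde f \tilde g]|$ is small. This follows by a hybrid argument very similar to Lemma \ref{lem-Mossel-big}: one interpolates between $fg$ and $\tilde f \tilde g$ and uses the fact that the Efron--Stein tails above level $s \approx \log(1/\nu)/\gamma$ contribute at most $\nu$ in $L^2$ because of the damping bound $\|T_{1-\gamma}h\|_2 \le (1-\gamma)^s \|h\|_2$ from Lemma \ref{lem-Efron-damp}.

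Second, I would invoke the Invariance Principle of \cite{Mossel} to replace the discrete product space with a Gaussian one preserving the covariance structure. The key hypothesis needed for the Invariance Principle is precisely the low-influence condition: for each $i$, at least one of $\tn{Inf}_i(\tilde f), \tn{Inf}_i(\tilde g)$ is at most $\tau$, where the atom lower bound $\alpha$ and the truncation level $s \sim \log(1/\nu)/(\gamma)$ determine how small $\tau$ must be. This justifies the doubly exponential expression $\tau = \exp\!\bigl(C\log(1/\alpha)\log(1/\nu)/(\nu(1-\rho))\bigr)$: the $\log(1/\alpha)$ comes from the per-coordinate hypercontractivity constants that appear in the error term of the Invariance Principle, while $1/(\nu(1-\rho))$ comes from the truncation degree $s$.

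Third, once on the Gaussian side, I have two jointly Gaussian random vectors $X$ and $Y$ whose per-coordinate covariance is at most $\rho$ (by Definition \ref{def-corr} and the tensorization bound stated there), applied to Gaussian analogs $\tilde f_G, \tilde g_G$ of the smoothed functions. Rounding these bounded functions against the thresholds $\Phi^{-1}(\E[f])$ and $\Phi^{-1}(\E[g])$ (respectively $\Phi^{-1}(1-\E[g])$) and invoking Borell's rearrangement theorem gives exactly the two-sided sandwich $\underline{\Gamma}_\rho(\E f, \E g) \le \E[\tilde f_G \tilde g_G] \le \overline{\Gamma}_\rho(\E f, \E g)$; extremality is achieved by halfspaces. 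Collecting the $O(\nu)$ errors from the three steps (smoothing, invariance, rounding/monotonicity in the correlation parameter) and absorbing the constants into $C$ yields the claimed inequalities.

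The main obstacle is the second step: establishing the quantitative Invariance Principle with explicit dependence of $\tau$ on $\alpha, \nu, \rho$. The technical heart is a Lindeberg-style swapping argument on each coordinate combined with the hypercontractive estimate $\|T_\eta h_i\|_4 \le \|h_i\|_2$ for an appropriate $\eta$ depending on $\alpha$, which is where the $\log(1/\alpha)$ factor enters. Once that principle is available, the reduction to Borell's theorem is standard and the stated bound follows. The only other delicate point is checking that replacing $f,g$ by their smoothed versions does not move $\E f, \E g$ significantly, which is immediate since $T_{1-\gamma}$ preserves expectations.
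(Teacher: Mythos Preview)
Your sketch is essentially the correct outline of how this result is established in Mossel's work: smooth with $T_{1-\gamma}$, invoke the multilinear Invariance Principle via a Lindeberg swap (with the hypercontractive constant contributing the $\log(1/\alpha)$), and finish with Borell's isoperimetric theorem in Gaussian space. There is nothing to compare against in the paper, however: Theorem~\ref{thm-Mossel-bilinear} is not proved here but is quoted verbatim as a black box from \cite{Mossel} (``the following bi-linear Gaussian stability bound from \cite{Mossel}''). So your proposal is not an alternative to the paper's proof; it is a reconstruction of the cited proof, and the paper itself offers none.
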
 
 
Before describing the hardness reduction we define the following
useful distribution and state its properties.

\subsubsection*{Distribution $\mc{D}_{\delta, r}$}
We define the probability
measure $\mc{D}_{\delta,r}$ over the random variables $(X, 
Y = \{Y_i\}_{i=1}^r, Z = \{Z_i\}_{i=1}^r)$, where $X, Y_i,
Z_i \in \{-1,1\}$. A tuple $(X, Y, Z)$
is sampled from $\mc{D}_{\delta, r}$  by first choosing $X,
Y_1, \dots, Y_r \in \{-1,1\}$ independently and uniformly at random,
and setting each $Z_i = -Y_i$. Finally, with probability $\delta$, $j
\in [r]$ is chosen u.a.r and $Y_j$ and $Z_j$ are both set to $-X$. Let
 $X$, $Y$ and $Z$ define the correlated probability
spaces $\Omega^{(1)}$, $\Omega^{(2)}$
and $\Omega^{(3)}$ respectively with the joint probability measure
$\mc{D}_{\delta, r}$. Note that the marginal probability spaces
$(\Omega^{(2)}, \mc{D}_{\delta, r})$ and $(\Omega^{(3)}, \mc{D}_{\delta, r})$
are identical. Also, for $i \neq j \in [r]$, $Y_i$ is independent of
$Y_j$ and $Z_j$. 
It is easy to see the following lemma. 
\begin{lemma}\label{lem-correlation} For any probability $\delta$ and
integer $r > 0$,\\ 
(i) The minimum probability of an atom in $\mc{D}_{\delta, r}$ is at
least $\xi := \frac{\delta}{r2^r}$. \\
(ii) $\rho(\Omega^{(1)}, \Omega^{(2)}\times
\Omega^{(3)}; \mc{D}_{\delta, r}) \leq
\delta$. \\
 (iii) $\rho(\Omega^{(1)} \times \Omega^{(2)}, 
\Omega^{(3)}; \mc{D}_{\delta, r}) \leq 1 - \slfrac{\xi^2}{2} = 
\frac{\delta^2}{r^22^{2r+1}}$. \\ (iv) $\rho(\Omega^{(2)},
\Omega^{(3)}; \mc{D}_{\delta, r}) \leq 
1 - \xi^2/2$. \\
(v)  $\rho(\Omega^{(1)}, \Omega^{(2)}, 
\Omega^{(3)}; \mc{D}_{\delta, r}) \leq 1 - \slfrac{\xi^2}{2}$.
\end{lemma}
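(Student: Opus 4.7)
The lemma bundles five numerical facts about the coupled distribution $\mc{D}_{\delta,r}$. My plan is to dispatch (i) directly by counting atoms, use that count to drive (iii), (iv), and (v) through Lemma~\ref{lem-corbd}, and handle (ii) separately by exploiting the independence of $X$ from $(Y,Z)$ in the non-jump branch of the sampling procedure.

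For (i), I would decompose the sampling into a \emph{base} branch (probability $1-\delta$, $Z_i = -Y_i$ for every $i$) and a \emph{jump} branch (probability $\delta$, pick a uniform $j \in [r]$ and set $Y_j = Z_j = -X$). The two branches contribute disjoint atoms because the base branch has $Y_i \neq Z_i$ for every $i$ while the jump branch has $Y_j = Z_j$ in exactly one coordinate. A direct degree-of-freedom count gives each base atom mass $(1-\delta)/2^{r+1}$ and each jump atom mass $\delta/(r\,2^r)$; the latter is the smaller for $\delta$ small, yielding $\xi = \delta/(r\,2^r)$.

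For (iii), (iv), and (v), I would invoke Lemma~\ref{lem-corbd} with $\alpha = \xi$ after verifying that the relevant support bipartite graph is connected. For (iv) the two sides are the $Y$- and $Z$-worlds: a base edge realizes the relation $Z = -Y$, while a jump edge flips that relation in exactly one coordinate, so alternating the two move-types joins any two $Y$-vertices along a walk of length at most the Hamming distance between them. The connectedness checks for (iii) and (v), which merely attach $X$ to one or both sides, are structurally identical, with $X$ propagating along the jump step.

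The main obstacle is (ii). Writing $\mc{D}_{\delta,r} = (1-\delta)\mu_0 + \delta\,\mu_1$ with $\mu_0$ the base and $\mu_1$ the jump law, the marginal of $X$ is uniform under all three measures. Hence any $f(X)$ with $\E_{\mc{D}_{\delta,r}}[f]=0$ also satisfies $\E_{\mu_0}[f]=0$, and the independence of $X$ from $(Y,Z)$ under $\mu_0$ gives $\E_{\mu_0}[fg] = 0$, so that $\E_{\mc{D}_{\delta,r}}[fg] = \delta\,\E_{\mu_1}[fg]$. The remaining step is to bound $\E_{\mu_1}[fg]$ while absorbing the loss factor of $1/\delta$ that appears when pushing the $\mu$-second-moment bound on $g$ down to $\mu_1$; this is the delicate part, and I expect to need the additional symmetry $X \mapsto -X$ present in the jump law, together with the mean-zero condition on $g$ under $\mu$, to land at the stated factor of $\delta$.
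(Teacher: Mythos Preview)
Your treatment of (i), (iii), and (iv) matches the paper's: the paper also computes the minimum atom directly and then appeals to Lemma~\ref{lem-corbd}, citing \cite{OW} for connectedness where you sketch it explicitly. For (v), the paper takes a shortcut you are missing: by Definition~\ref{def-corr} the three-way correlation is simply the maximum of the three bipartite correlations obtained by splitting off one factor, so (v) follows immediately from (ii), (iii), and the $(\Omega^{(2)}\leftrightarrow\Omega^{(3)})$-symmetric copy of (iii). You can skip the separate connectedness check.

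For (ii) your instinct that something is delicate is exactly right, and the paper's one-line justification (``immediate since $X$ is independent of $(Y,Z)$ with probability $1-\delta$'') glosses over the issue you identify. Your decomposition $\E_\mu[fg]=\delta\,\E_{\mu_1}[fg]$ is correct, and Cauchy--Schwarz together with $\E_{\mu_1}[g^2]\le \delta^{-1}\E_\mu[g^2]$ yields $|\E_\mu[fg]|\le\sqrt{\delta}$. This bound is \emph{tight}, so the symmetry argument you are hoping for cannot recover the extra factor. To see this for $r=1$: take $f(x)=x$ and set $g(1,-1)=g(-1,1)=0$, $g(-1,-1)=\delta^{-1/2}$, $g(1,1)=-\delta^{-1/2}$. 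Then $\E_\mu[g]=0$, $\E_\mu[g^2]=1$, and a direct computation gives $\E_\mu[fg]=\sqrt{\delta}$, so $\rho(\Omega^{(1)},\Omega^{(2)}\times\Omega^{(3)};\mc{D}_{\delta,1})\ge\sqrt{\delta}$.

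Thus the inequality in (ii) as stated is off by a square root; the correct bound is $\sqrt{\delta}$. This does not damage anything downstream: the only use of (ii) is in the application of Theorem~\ref{thm-Mossel-bilinear} inside Lemma~\ref{lem-main-inf}, where one needs merely that the correlation is bounded away from~$1$ by a quantity that can be made small, and $\sqrt{\delta}$ serves just as well as $\delta$ (with $\underline{\Gamma}_{\sqrt{\delta}}$ in place of $\underline{\Gamma}_\delta$). So your decomposition already proves the version of (ii) that the paper actually needs; just record the bound as $\sqrt{\delta}$ rather than chasing a factor that is not there.
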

\begin{proof}
The first part can be computed by observing that the atom in
$\D_{\delta, r}$ with minimum probability is the one in which there is
a $j \in [r]$ such that $Y_j = Z_j$, and this atom has probability $\xi$ as
defined.
The second part is immediate since $X$ is independent of $(Y, Z)$ with
probability $1-\delta$. The third and fourth parts follow from (i) and
by showing that
Lemma \ref{lem-corbd} is applicable, which can be inferred in a manner
similar to the proof of connectedness in \cite{OW}. We omit the
details here. 
The fifth part follows from
Definition \ref{def-corr}.
\end{proof}

In the rest of this section we shall sometimes 
omit writing the joint distribution
along with $\Omega^{(1)}, \Omega^{(2)}$ and $\Omega^{(3)}$, as it will be clear
from the context.

\subsection{Hardness Reduction}\label{sec-redn-2color}
We begin with an instance $\Phi$ from Theorem \ref{thm-dto1multi} with 
the number of layers $L = \lceil
32\eps^{-2}\rceil$, for a parameter $\eps > 0$ which denotes the size
of the independent set in the NO Case.

\subsubsection{Construction of $G(H, E)$}
We continue with the construction of the instance $G(H, E)$, a 
$3$-uniform hypergraph. The construction uses a parameter $\delta$
which we shall fix later.

\medskip
\noindent
{\bf Vertices.} Consider a variable $v$ of $\Phi$ in layer $l$. Let
$H^v$ be a copy of $\{-1, 1\}^{R_l}$. The vertex set $H := \cup_{l\in
[L]}\cup_{v \in V_l}H^v$. The weight of a vertex $x \in H^v$ for $v
\in V_l$ is $2^{-R_l}/(L|V_l|)$. Thus, the total weight of all the
vertices corresponding to a particular layer is $1/L$.

\medskip
\noindent
{\bf Hyperedges.} Consider two variables $v \in V_l$ and $u \in
V_{l'}$ with a constraint $\pi_{v\rightarrow u}$ between them. Note
that for every $i \in R_{l'}$, $\left|\pi_{v\rightarrow
u}^{-1}(i)\right| = d^{l-l'}$. For convenience, we let $r = d^{l-l'}$,
and dropping the subscript we shall refer to the projection simply as $\pi$.  
Let $x \in H^{u}$ and $y, z \in H^{v}$ be chosen by sampling
 $(x_i, y|_{\pi^{-1}(i)},
z|_{\pi^{-1}(i)})$ from $(\Omega^{(1)}\times \Omega^{(2)}\times \Omega^{(3)};
\mc{D}_{\delta, r})$ independently for each 
$i \in R_{l'}$. Let $\mc{D}^{vu}$ denote 
the probability distribution of the choice of $(x, y,
z)$. For all such
$(x, y,z)$ in the support of $\mc{D}^{vu}$ 
add a hyperedge between these three vertices $x, y$ and $z$.

\subsection{YES Case}
In the YES Case, let $\sigma$ be the labeling to the variables that
satisfies all constraints in $\Phi$. For every vertex $x \in H^v$
for a variable $v$ in layer $l$, color $x$ with $x_{\sigma(v)}$. It is
easy to see from the above construction of the hyperedges that this is
a valid 
$2$-coloring of the hypergraph.

\subsection{NO Case}
Suppose that there is an independent set of $\eps > 0$ fraction of
vertices. For a variable $v$ of $\Phi$, let $f_v$ be the indicator of the
independent set in the long code $H^v$. Let the \emph{heavy} variables
$v$ be such that $\E[f_v] \geq \frac{\eps}{2}$. 
After averaging and arguments analogous to those in Section
\ref{sec-multi-No} we obtain two layers $l <
l'$ such that the \emph{heavy} variables in these two layers induce
at least $\frac{\eps^2}{64}$ fraction of constraints between these two layers.
As before, we set $r = d^{l-l'}$. Also, we shall denote $R_{l'}$
by $R_1$ and $R_{l}$ by $R_2$.

We need to show that,
\begin{equation}
\E_{v,u}\left[\E_{(x,y,z)\leftarrow
\mc{D}^{vu}}\left[f_u(x)f_v(y)f_v(z)\right]\right] > 0,
\label{eqn-toshow}
\end{equation}
where the outer expectation is over pairs of heavy variables $v \in
V_l$ and $u \in V_{l'}$ which share a constraint. The analysis
consists of two main steps. In the first step we show that unless
$f_u$ and $f_v$ share influential coordinates, one can re-randomize
the $x$ variable to be independent in the inner expectation of 
Equation \eqref{eqn-toshow}. However, the notion of influence of $f_v$ 
used in
this step depends on the choice of $u$. 

The
second step shows that for a non-trivial fraction of heavy 
neighbors $u$ of $v$, the notion of
influence used in the first step can be made independent of $u$. In
addition it shows that for these $u$, the marginal expectation 
$\E[f_v(y)f_v(z)]$
induced by $\mc{D}^{vu}$ is bounded away from zero. This step
crucially uses the smoothness property of the PCP.  

\subsubsection{Making $x$ independent} 
Let us fix a pair of heavy
vertices $v, u$ which share a constraint $\pi$. For convenience we
shall think of the distribution $\mc{D}^{vu}$ being on
$\otimes_{i \in R_1}(x_i, y|_{\pi^{-1}(i)}, z|_{\pi^{-1}(i)})$, where
each $(x_i, y|_{\pi^{-1}(i)}, z|_{\pi^{-1}(i)})$ is sampled
independently from $(\Omega^{(1)}\times \Omega^{(2)}\times \Omega^{(3)},
D_{\delta, r})$. We represent the space of
 $(x_i, y|_{\pi^{-1}(i)}, z|_{\pi^{-1}(i)})$ by the correlated space 
$(\Omega^{(1)}_i \times \Omega^{(2)}_i\times \Omega^{(3)}_i)$, which
is
an independent copy of $(\Omega^{(1)}\times \Omega^{(2)}\times \Omega^{(3)})$. Thus,
the space of $\otimes_{i\in R_1}(x_i, y|_{\pi^{-1}(i)},
z|_{\pi^{-1}(i)})$ is 
$\prod_{i \in R_1} 
(\Omega^{(1)}_i \times \Omega^{(2)}_i\times \Omega^{(3)}_i)$. 
The $i$th coordinate
influence of a function $f$ on $\prod_{i \in R_1}\Omega^{(2)}_i = 
\prod_{i \in R_1}\Omega^{(3)}_i$ is
denoted by $\ol{\tn{Inf}}_i(f_v)$. The probability 
measure on all
these spaces is induced by $\mc{D}^{vu}$. 

Using the above and since the functions $f_u$ and $f_v$ are all in
the range $[0,1]$ we have the following lemma which follows from Lemma
\ref{lem-correlation} and Lemma \ref{lem-Mossel-big}. 
\begin{lemma}\label{lem-addnoise}
There is a universal
constant $C$ such that for an arbitrarily small choice of $\nu > 0$,
letting $\gamma = C\frac{\nu\xi^2}{2\log(1/\nu)}$, the following
holds,
\begin{equation}
\left| E[f_u(x)f_v(y)f_v(z)] -
E[T_{1-\gamma}f_u(x)\ol{T}_{1-\gamma}f_v(y)\ol{T}_{1-\gamma}f_v(z)]\right| 
\leq
\nu,
\end{equation}
\begin{equation}
\left| E[f_v(y)f_v(z)] -
E[\ol{T}_{1-\gamma}f_v(y)\ol{T}_{1-\gamma}f_v(z)]\right| \leq
\nu,
\end{equation}
where the $T_{1-\gamma}$ is the Bonami-Beckner operator over
$\{-1,1\}^{R_1} = \prod_{i\in R_1}\Omega^{(1)}$ and 
$\ol{T}_{1-\gamma}$ is the Bonami-Beckner operator over the space 
$\prod_{i\in R_1}\Omega^{(2)}_i = \prod_{ \in R_1}\Omega^{(3)}_i$.
To be precise, $\ol{T}_{1-\gamma}$ resamples from each
$\Omega^{(2)}_i$ independently with probability $\gamma$. Note that
$\ol{T}_{1-\gamma}$ depends on the constraint 
$\pi$ and hence on the choice of $u$. 
\end{lemma}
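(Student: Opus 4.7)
The proof will be a direct application of Lemma \ref{lem-Mossel-big} using the correlation bounds established in Lemma \ref{lem-correlation}. The ambient space $\prod_{i\in R_1}\left(\Omega^{(1)}_i\times\Omega^{(2)}_i\times\Omega^{(3)}_i\right)$ already comes equipped with the product structure required by Mossel's setup: it is the $|R_1|$-fold product of independent copies of $(\Omega^{(1)}\times\Omega^{(2)}\times\Omega^{(3)},\mc{D}_{\delta,r})$, and on each copy parts (iv) and (v) of Lemma \ref{lem-correlation} give a correlation gap of at least $\xi^2/2$.

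For the first inequality, I apply Lemma \ref{lem-Mossel-big} with $k=3$, $n=|R_1|$, and $f_1 = f_u$, $f_2 = f_3 = f_v$ regarded as functions on $\prod_i \Omega^{(1)}_i$, $\prod_i \Omega^{(2)}_i$, $\prod_i \Omega^{(3)}_i$ respectively. By Lemma \ref{lem-correlation}(v) the relevant correlation is at most $1-\xi^2/2$, so the hypothesis is satisfied. Since $f_u, f_v$ take values in $[0,1]$, both $\tn{Var}[f_j]$ and the variances of the partial products appearing in the error term are at most $1$, so the right-hand side of Mossel's bound is at most $3\nu'$ when his parameter is set to $\nu'$. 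Taking $\nu' = \nu/3$ and absorbing the factor of $3$ together with the factor $1/2$ coming from $1-\rho \geq \xi^2/2$ into the absolute constant $C$ yields the stated $\gamma = C\nu\xi^2/(2\log(1/\nu))$ and the desired bound of $\nu$. The second inequality is handled identically with $k=2$ and $f_1 = f_2 = f_v$, the correlation bound now supplied by Lemma \ref{lem-correlation}(iv).

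The only mildly delicate point is the interpretation of $\ol{T}_{1-\gamma}$: it is the Bonami-Beckner operator on the product space $\prod_{i\in R_1}\Omega^{(2)}_i \cong \{-1,1\}^{R_2}$ viewed with the block structure induced by $\pi$, i.e.\ each factor $\Omega^{(2)}_i = \{-1,1\}^{\pi^{-1}(i)}$ is resampled as a single atomic unit with probability $\gamma$. This is precisely the tensor operator $\otimes_{i\in R_1} T^i_\gamma$ in Mossel's framework, so his noise-smoothing lemma applies verbatim, even though the partition of $R_2$ into blocks (and hence $\ol{T}_{1-\gamma}$ itself) depends on the constraint $\pi$. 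I do not anticipate any deep obstacle here; the work is essentially bookkeeping to align the independent coordinates $i \in R_1$ of the product space with the blocks $\{\pi^{-1}(i)\}_{i\in R_1}$ of $R_2$, and then reading off the constants.
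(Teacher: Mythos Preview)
Your proposal is correct and is exactly the approach the paper takes: the paper simply states that the lemma ``follows from Lemma \ref{lem-correlation} and Lemma \ref{lem-Mossel-big}'' without spelling out the details, and you have supplied precisely those details---plugging the correlation bounds (v) and (iv) of Lemma \ref{lem-correlation} into Mossel's smoothing lemma with $k=3$ and $k=2$ respectively, and bounding the variance terms using $f_u,f_v\in[0,1]$. Your remark about the block structure of $\ol{T}_{1-\gamma}$ is also on point and matches the paper's setup.
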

Using a value of $\gamma$ which we shall obtain from the above lemma,
consider the function $ F(y,z) = 
\ol{T}_{1-\gamma}f_v(y)\ol{T}_{1-\gamma}f_v(z)$ 
over the
space $\prod_{i \in R_1}(\Omega^{(2)}_i\times \Omega^{(3)}_i)$. For
the time being let $f'$ denote $\ol{T}_{1-\gamma}f_v$ and $f'_i$ denote the
function $f'$ depending only on the $i$th space $\Omega^{(2)}_i = 
\Omega^{(3)}_i$ where
 the fixing of the rest of the coordinates will be clear from the
context. Thus, $F(y,z) = f'(y)f'(z)$. The
$i$th influence of $F$ in the space 
 $\prod_{i \in R_1}(\Omega^{(2)}_i\times \Omega^{(3)}_i)$
can be written as:
\begin{eqnarray}
\displaystyle \ol{\tn{Inf}}_i(F)
& = & \frac{1}{2}\E_{\substack{(y|_{\pi^{-1}(j)}, z|_{\pi^{-1}(j)})\leftarrow 
(\Omega^{(2)}_j\times \Omega^{(3)}_j) \\ j \in R_1\setminus\{i\}}}
\Big[   \nonumber \\
& &   \E_{((Y_1, Z_1), (Y_2, Z_2))\leftarrow (\Omega^{(2)}\times
\Omega^{(3)})^2}\left[(f_i'(Y_1)f_i'(Z_1) -
f_i'(Y_2)f_i'(Z_2))^2\right]\Big] \label{eqn-infprod}
\end{eqnarray}
The following inequality was proved in Lemma 4 of the work of
Samorodnitsky and Trevisan~\cite{ST-Gowers}.
\begin{lemma}\label{lem-aibi}
Let $a_1, a_2, b_2, b_2 \in [-1,1]$. Then, $(a_1a_2 - b_1b_2)^2 \leq
2\left((a_1 - b_1)^2 + (a_2 - b_2)^2\right)$. 
\end{lemma}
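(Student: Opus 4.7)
The plan is to prove the inequality by a standard ``telescoping plus Cauchy--Schwarz'' manipulation, exploiting the bound $|a_i|, |b_i| \le 1$ at the very end. First I would rewrite the difference $a_1a_2 - b_1b_2$ as a sum of two terms that each isolate a single difference $a_i - b_i$:
\begin{equation*}
a_1 a_2 - b_1 b_2 \;=\; a_1(a_2 - b_2) \;+\; b_2(a_1 - b_1).
\end{equation*}
This is the key identity; it turns a product-difference into a linear combination of $(a_2 - b_2)$ and $(a_1 - b_1)$, which are exactly the quantities appearing on the right-hand side of the desired inequality.

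Next I would square both sides and apply the elementary inequality $(x+y)^2 \le 2(x^2 + y^2)$ (equivalent to $(x-y)^2 \ge 0$) to obtain
\begin{equation*}
(a_1 a_2 - b_1 b_2)^2 \;\le\; 2 a_1^2 (a_2 - b_2)^2 \;+\; 2 b_2^2 (a_1 - b_1)^2.
\end{equation*}
Finally, using the hypothesis that $a_1, b_2 \in [-1,1]$, so $a_1^2 \le 1$ and $b_2^2 \le 1$, the claim $(a_1 a_2 - b_1 b_2)^2 \le 2\bigl((a_1-b_1)^2 + (a_2-b_2)^2\bigr)$ follows immediately.

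The proof is essentially a one-liner; there is no real obstacle, and the only choice point is which of the two symmetric telescopings to use (one could equally well write $a_1 a_2 - b_1 b_2 = a_2(a_1 - b_1) + b_1(a_2 - b_2)$ and the argument is identical). The role of the hypothesis $a_i, b_i \in [-1,1]$ is solely to kill the coefficients $a_1^2$ and $b_2^2$ after the squaring step; without this bound one would only obtain a result weighted by $\max(a_1^2, b_2^2)$.
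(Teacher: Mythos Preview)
Your proof is correct. The paper does not give its own proof of this lemma; it simply cites Lemma~4 of Samorodnitsky and Trevisan~\cite{ST-Gowers}, and your telescoping-plus-$(x+y)^2\le 2(x^2+y^2)$ argument is exactly the standard one-line verification.
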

Using the above lemma we obtain the following bound.
\begin{lemma}\label{lem-prodinfbd}
From the definitions used above,
$$\displaystyle \ol{\tn{Inf}}_i(F) 
\leq 4\ol{\tn{Inf}}_i(f').$$
\end{lemma}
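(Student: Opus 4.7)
The plan is a direct application of Lemma \ref{lem-aibi} inside the integrand of \eqref{eqn-infprod}, followed by rewriting each resulting term as a standard coordinate influence of $f'$. Since $f_v$ is the $\{0,1\}$-valued indicator of an independent set and the Bonami--Beckner operator preserves the range $[0,1]$, the function $f' = \ol{T}_{1-\gamma}f_v$ takes values in $[0,1] \subseteq [-1,1]$, so plugging $(a_1, a_2, b_1, b_2) = (f'_i(Y_1), f'_i(Z_1), f'_i(Y_2), f'_i(Z_2))$ into Lemma \ref{lem-aibi} yields the pointwise bound
$$(f'_i(Y_1)f'_i(Z_1) - f'_i(Y_2)f'_i(Z_2))^2 \leq 2(f'_i(Y_1) - f'_i(Y_2))^2 + 2(f'_i(Z_1) - f'_i(Z_2))^2.$$

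Substituting this back into \eqref{eqn-infprod} and using linearity of expectation splits the bound into two symmetric pieces, one depending only on the $Y$-copies and one only on the $Z$-copies. For the $Y$-piece, the outer expectation over $(y|_{\pi^{-1}(j)}, z|_{\pi^{-1}(j)})$ for $j \ne i$ marginally samples $y|_{-i}$ from $\prod_{j \ne i} \Omega^{(2)}_j$, while $Y_1, Y_2$ are independent copies of the $i$th coordinate drawn from $\Omega^{(2)}_i$. Because $f'_i(Y)$ depends only on $y|_{-i}$ and $Y$ (not on $z|_{-i}$ or $Z$), the $z$-marginal integrates out, and the classical identity $\tn{Var}[X] = \tfrac{1}{2}\E[(X_1 - X_2)^2]$ for i.i.d.\ $X_1, X_2$ gives
$$\E_{y|_{-i}}\E_{Y_1, Y_2}\!\left[(f'_i(Y_1) - f'_i(Y_2))^2\right] = 2\,\ol{\tn{Inf}}_i(f').$$

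The $Z$-piece evaluates identically: by construction of $\mc{D}_{\delta, r}$ the marginal distributions on $\Omega^{(2)}$ and $\Omega^{(3)}$ coincide (as observed just after the definition of $\mc{D}_{\delta, r}$), so the $i$-th coordinate influence of $f'$ computed over $\prod_j \Omega^{(3)}_j$ equals that over $\prod_j \Omega^{(2)}_j$. Combining the two contributions with the leading $\tfrac{1}{2}$ factor in \eqref{eqn-infprod} gives
$$\ol{\tn{Inf}}_i(F) \leq \tfrac{1}{2}\bigl(2\cdot 2\,\ol{\tn{Inf}}_i(f') + 2\cdot 2\,\ol{\tn{Inf}}_i(f')\bigr) = 4\,\ol{\tn{Inf}}_i(f'),$$
as required. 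I do not anticipate any real obstacle here; the lemma is essentially an algebraic bookkeeping step, and its only purpose is to pass from the influence of the product $F = f'(y)f'(z)$ on the correlated space to the influence of a single copy of $f'$ on its marginal, which is then what is used downstream to locate influential coordinates for decoding.
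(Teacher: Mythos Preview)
Your proof is correct and follows essentially the same route as the paper: apply Lemma~\ref{lem-aibi} pointwise inside \eqref{eqn-infprod}, use the identity $\tn{Var}[X]=\tfrac{1}{2}\E[(X_1-X_2)^2]$ to recognize each piece as $\ol{\tn{Inf}}_i(f')$, and exploit that the marginals on $\Omega^{(2)}$ and $\Omega^{(3)}$ agree so the $Y$- and $Z$-contributions coincide. Your bookkeeping of the constants is in fact cleaner than the paper's display.
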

\begin{proof}
Using Lemma \ref{lem-aibi} and the fact that $f'$ is bounded in
$[0,1]$ we can upper bound $\ol{\tn{Inf}}_i(F)$ in Equation
\eqref{eqn-infprod} by
\begin{eqnarray}
& & \frac{1}{2}\E_{\substack{(y|_{\pi^{-1}(j)}, z|_{\pi^{-1}(j)})\leftarrow 
(\Omega^{(2)}_j\times \Omega^{(3)}_j) \\ j \in R_1\setminus\{i\}}}
\Big[ \nonumber \\
&& \E_{((Y_1, Z_1), (Y_2, Z_2))\leftarrow (\Omega^{(2)}\times
\Omega^{(3)})^2}\left[(f_i'(Y_1) - f_i'(Y_2))^2 +
(f_i'(Z_1) - f_i'(Z_2))^2\right]\Big] \nonumber \\
& = & 4\ol{\tn{Inf}}_i(f').\nonumber
\end{eqnarray}
\end{proof}
We also have the following lemma.
\begin{lemma}\label{lem-blocktobinary}
Let $\tn{Inf}_j$ be the $j$th coordinate influence over
the space $\{-1,1\}^{R_2}$ equipped with the uniform measure. 
Then, for $i \in R_1$, 
$\ol{\tn{Inf}}_i(f') \leq r\sum_{j\in \pi^{-1}(i)}
\tn{Inf}_j(f')$.
\end{lemma}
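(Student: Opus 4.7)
The plan is to bound the block influence by single-coordinate influences via a telescoping swap combined with Cauchy--Schwarz. Since $\Omega^{(2)}_i$ is simply $\{-1,1\}^{\pi^{-1}(i)}$ with the uniform measure, the product space $\prod_{i\in R_1}\Omega^{(2)}_i$ is canonically the uniform space $\{-1,1\}^{R_2}$, so both $\ol{\tn{Inf}}_i(f')$ and $\tn{Inf}_j(f')$ can be written in their standard variance forms.

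First I would rewrite
$$\ol{\tn{Inf}}_i(f') \;=\; \tfrac{1}{2}\,\E\bigl[(f'(x) - f'(x'))^2\bigr],$$
where $x$ is uniform on $\{-1,1\}^{R_2}$ and $x'$ agrees with $x$ outside $\pi^{-1}(i)$ while the coordinates of $x'$ on $\pi^{-1}(i)$ are drawn uniformly and independently. Enumerate $\pi^{-1}(i) = \{j_1,\dots,j_r\}$ and define hybrids $x = x^{(0)}, x^{(1)}, \dots, x^{(r)} = x'$ obtained by replacing the coordinates indexed by $j_1,\dots,j_k$ one at a time with their primed counterparts. The telescoping identity
$$f'(x) - f'(x') \;=\; \sum_{k=1}^{r}\bigl(f'(x^{(k-1)}) - f'(x^{(k)})\bigr)$$
combined with the Cauchy--Schwarz inequality $\bigl(\sum_{k=1}^{r} a_k\bigr)^2 \leq r\sum_{k=1}^{r} a_k^2$ gives
$$\bigl(f'(x)-f'(x')\bigr)^2 \;\leq\; r\sum_{k=1}^{r}\bigl(f'(x^{(k-1)}) - f'(x^{(k)})\bigr)^2.$$

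The final step is to observe that taking expectations, each summand equals $2\,\tn{Inf}_{j_k}(f')$: between $x^{(k-1)}$ and $x^{(k)}$ only the single coordinate $j_k$ changes while all remaining coordinates are uniform and independent, so the squared difference averages to twice the variance of $f'$ in its $j_k$-th coordinate. Substituting back and dividing by $2$ gives
$$\ol{\tn{Inf}}_i(f') \;\leq\; r\sum_{j\in\pi^{-1}(i)} \tn{Inf}_j(f'),$$
as claimed. There is no real obstacle here beyond bookkeeping; the factor $r$ is the loss from Cauchy--Schwarz. A direct Fourier computation in fact yields the sharper factor $1$, since expanding in the Walsh basis gives $\ol{\tn{Inf}}_i(f') = \sum_{S\cap\pi^{-1}(i)\neq\emptyset}\widehat{f'}(S)^2 \leq \sum_{j\in\pi^{-1}(i)}\sum_{S\ni j}\widehat{f'}(S)^2$, but the weaker bound with factor $r$ is more than sufficient for the downstream application.
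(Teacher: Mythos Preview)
Your proof is correct and is essentially the same as the paper's: both rewrite the block influence as $\tfrac{1}{2}\E[(f'(x)-f'(x'))^2]$ with $x'$ resampled on $\pi^{-1}(i)$, introduce a sequence of $r$ hybrids swapping one coordinate at a time, telescope, and apply Cauchy--Schwarz to pick up the factor $r$. Your added Fourier remark about the sharper factor $1$ is a nice bonus not present in the paper.
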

\begin{proof} By the definition of influence, the LHS of the 
assertion can be
written as,
\begin{equation}
\frac{1}{2}\E_{\substack{y|_{\pi^{-1}(j)}\leftarrow 
\Omega^{(2)}_j \\ j \in R_1\setminus\{i\}}}
\left[
\E_{(Y^0, Y^r)\leftarrow (\Omega^{(2)})^2}\left[(f'_i(Y^0) -
f'_i(Y^r))^2\right]\right].
\end{equation}
Order the coordinates in $\pi^{-1}(i)$ as $1,\dots, r$ and define
(depending on the choice of $Y^0$ and $Y^r$) a sequence $Y^1, \dots,
Y^{r-1}$ where $Y^k$ contains the value of the first $r - k$
coordinates from $Y^0$ and the rest from $Y^r$. Letting $R_2' :=
R_2\setminus \pi^{-1}(i)$, the above
expression can be rewritten as,
\begin{eqnarray}
& & \frac{1}{2}\E_{y|_{R_2'} \leftarrow 
\{-1,1\}^{R_2'}}\left[\E_{(Y^0, Y^r)\leftarrow
(\{-1,1\}^r)^2}\left[\left(\sum_{k=0}^{r-1}(f_i'(Y^k) -
f_i'(Y^{k+1})\right)^2\right]\right] \nonumber \\
& \leq & \frac{1}{2}\E_{y|_{R_2'} \leftarrow 
\{-1,1\}^{R_2'}}\left[\E_{(Y^0, Y^r)\leftarrow
(\{-1,1\}^r)^2}\left[r\sum_{k=0}^{r-1}(f_i'(Y^k) -
f_i'(Y^{k+1})^2\right]\right] \nonumber \\
& = & r\sum_{j\in \pi^{-1}(i)}
\tn{Inf}_j(f'),
\end{eqnarray}
where we used Cauchy-Schwarz to obtain the first inequality.
\end{proof}
The following lemma uses the above analysis to show
that $x$ can be made independent of $y$ and
$z$ without incurring much loss, unless $f_u$ and $f_v$ have
matching influential coordinates.
\begin{lemma}\label{lem-main-inf} 
There is a universal constant $C$ such that 
for an arbitrarily small constant $\nu > 0$, and 
$$\gamma =
 \frac{\nu\xi^2}{2\log(1/\nu)} \ \ , \ \  \tau =
\nu^{C\frac{\log(1/\xi)\log(1/\nu)}{\nu(1-\delta)}},$$
unless there is $i \in R_1$ such that,
\begin{equation}
\min(\tn{Inf}_i(T_{1-\gamma}f_u), 4r\sum_{j\in \pi^{-1}(i)}
\tn{Inf}_j(\ol{T}_{1-\gamma}f_v)) \geq \tau, \label{eqn-inf-cond}
\end{equation}
we have,
$$\E[f_u(x)f_v(y)f_v(z)] \geq 
\underline{\Gamma}_{\delta}(\E[f_u],
\E[f_v(y)f_v(z)] - \nu) - 2\nu.$$
\end{lemma}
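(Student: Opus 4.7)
\begin{proofof}{Lemma \ref{lem-main-inf} (Proposal)}
The plan is to reduce the claim to a direct application of the bilinear Gaussian stability theorem (Theorem \ref{thm-Mossel-bilinear}) on the correlated pair of spaces $\left(\prod_{i \in R_1}\Omega^{(1)}_i,\ \prod_{i \in R_1}(\Omega^{(2)}_i \times \Omega^{(3)}_i)\right)$, viewing $x$ as living on the first coordinate and $(y,z)$ as living on the second. By Lemma \ref{lem-correlation}(ii) the relevant correlation is at most $\delta$, and by part (i) the minimum atom probability is at least $\xi$, which is what will feed into the parameter $\tau$.

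First, I would use Lemma \ref{lem-addnoise} twice to pass from $\E[f_u(x)f_v(y)f_v(z)]$ to $\E[T_{1-\gamma}f_u(x)\,F(y,z)]$, where $F(y,z) = \ol{T}_{1-\gamma}f_v(y)\,\ol{T}_{1-\gamma}f_v(z)$, and simultaneously to pass from $\E[f_v(y)f_v(z)]$ to $\E[F(y,z)]$; each step costs at most $\nu$. Since the Bonami--Beckner operator preserves the mean, $\E[T_{1-\gamma}f_u] = \E[f_u]$. Next, I would apply Theorem \ref{thm-Mossel-bilinear} to the pair of functions $T_{1-\gamma}f_u$ and $F$ on the correlated product, with the noise parameter $\gamma$ chosen via Lemma \ref{lem-addnoise} so that $1-\gamma$ appears already baked into both arguments; here the role of $\rho$ is played by $\delta$ and of $\alpha$ by $\xi$, yielding the stated form of $\tau$ up to an absolute constant $C$.

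To invoke the bilinear bound I need that for every $i \in R_1$,
\[
\min\bigl(\tn{Inf}_i(T_{1-\gamma}f_u),\ \ol{\tn{Inf}}_i(F)\bigr) \leq \tau.
\]
This is where Lemmas \ref{lem-prodinfbd} and \ref{lem-blocktobinary} plug in: the former gives $\ol{\tn{Inf}}_i(F) \leq 4\,\ol{\tn{Inf}}_i(\ol{T}_{1-\gamma}f_v)$, and the latter converts this block-influence on $\prod_i \Omega^{(2)}_i$ into the bound $\ol{\tn{Inf}}_i(\ol{T}_{1-\gamma}f_v) \leq r\sum_{j \in \pi^{-1}(i)} \tn{Inf}_j(\ol{T}_{1-\gamma}f_v)$ on the Boolean hypercube $\{-1,1\}^{R_2}$. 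Chained together, these yield $\ol{\tn{Inf}}_i(F) \leq 4r \sum_{j \in \pi^{-1}(i)} \tn{Inf}_j(\ol{T}_{1-\gamma}f_v)$. Consequently, the contrapositive of the hypothesis in Equation \eqref{eqn-inf-cond} is exactly the hypothesis of Theorem \ref{thm-Mossel-bilinear}.

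Finally, the theorem delivers
\[
\E[T_{1-\gamma}f_u(x)\,F(y,z)] \geq \underline{\Gamma}_\delta\bigl(\E[T_{1-\gamma}f_u],\,\E[F]\bigr) - \nu,
\]
and using $\E[T_{1-\gamma}f_u] = \E[f_u]$, the bound $\E[F] \geq \E[f_v(y)f_v(z)] - \nu$ from Lemma \ref{lem-addnoise}, and the monotonicity of $\underline{\Gamma}_\delta(\cdot,\cdot)$ in its second argument, I recover the stated inequality after accounting for the two $\nu$ losses (one from the Lemma \ref{lem-addnoise} application on the three-function product, one from Theorem \ref{thm-Mossel-bilinear}). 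The main obstacle is the influence bookkeeping in the previous paragraph: one has to carefully track that the ``right'' notion of influence for the bilinear theorem is on the product space $\prod_i(\Omega^{(2)}_i \times \Omega^{(3)}_i)$, while the influence appearing in the hypothesis of the lemma is the standard Boolean influence on $\{-1,1\}^{R_2}$, and the factors of $4$ and $r$ in Equation \eqref{eqn-inf-cond} come precisely from accounting for the product structure of $F$ and for the grouping of coordinates by the fiber $\pi^{-1}(i)$ of size $r$.
\end{proofof}
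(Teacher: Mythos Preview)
Your proposal is correct and follows essentially the same route as the paper: apply Lemma \ref{lem-addnoise} to replace $f_u, f_v$ by their noised versions (cost $\nu$ each for the triple product and for $\E[f_v(y)f_v(z)]$), use Lemmas \ref{lem-prodinfbd} and \ref{lem-blocktobinary} to convert the failure of \eqref{eqn-inf-cond} into the influence hypothesis of Theorem \ref{thm-Mossel-bilinear} for $T_{1-\gamma}f_u$ and $F$, then invoke that theorem with $\rho=\delta$ and $\alpha=\xi$ from Lemma \ref{lem-correlation}. Your explicit mention of the monotonicity of $\underline{\Gamma}_\delta$ in its second argument is a detail the paper leaves implicit but is indeed needed.
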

\begin{proof}
Suppose that there exists no $i \in R_1$ as in the condition of the
lemma. Using Lemmas \ref{lem-prodinfbd} and \ref{lem-blocktobinary}
our supposition implies that there exists no $i \in R_1$ such that,
$$\min(\tn{Inf}_i(T_{1-\gamma}f_u),
\ol{\tn{Inf}}_i(F)) \geq
\tau,$$
where $F(y,z)$ was defined as
$\ol{T}_{1-\gamma}f_v(y)\ol{T}_{1-\gamma}f_v(z)$.
Using Theorem \ref{thm-Mossel-bilinear} and 
Lemma \ref{lem-correlation} the above implies,
\begin{eqnarray}
\displaystyle \E[T_{1-\gamma}f_u(x)\ol{T}_{1-\gamma}f_v(y)\ol{T}_{1-\gamma}f_v(z)]
& \geq & \underline{\Gamma}_{\delta}
\left(\E[T_{1-\gamma}f_u(x)], 
\E[\ol{T}_{1-\gamma}f_v(y)\ol{T}_{1-\gamma}f_v(z)]\right) - \nu
\nonumber \\
& = &   \underline{\Gamma}_{\delta}
\left(\E[f_u(x)], 
\E[\ol{T}_{1-\gamma}f_v(y)\ol{T}_{1-\gamma}f_v(z)]\right) - \nu. 
\end{eqnarray}
Using Lemma \ref{lem-addnoise} the above implies that
\begin{equation}
\E[f_u(x)f_v(y)f_v(z)] \geq 
\underline{\Gamma}_{\delta}\left(\E[f_u(x)], 
\E[f_v(y)f_v(z)] - \nu \right) - 2\nu.
\end{equation}
\end{proof}
Note that there are two issues that are left to resolve. Firstly, we
need to lower bound  $\E[f_v(y)f_v(z)]$. Secondly,
$\tn{Inf}_i(\ol{T}_{1-\gamma}(f_v))$ depends on the choice of $u$. We
shall identify a significant fraction of heavy neighbors $u$ of $v$, for
which the expectation is bounded as well as 
$\tn{Inf}_i(\ol{T}_{1-\gamma}(f_v)) \approx 
\tn{Inf}_i(T_{1-\gamma}(f_v))$, the latter being independent of $u$.
For this we shall utilize the smoothness property of the PCP.

\subsubsection{Identifying \emph{good} neighbors $u$}
Let us first set a parameter $s$ as, 
$$s := \max\left(\frac{r}{\xi}\ln\left(\frac{1}{\nu}\right), 
\frac{r}{2\gamma}\ln\left(\frac{32r^2}{\tau}\right)\right).$$
Let the Efron-Stein decomposition of $f_v$ with respect to
$\{-1,1\}^{R_2}$ be,
\begin{equation}
f_v = \sum_{\alpha \subseteq R_2}\wh{f}_{v,\alpha}\chi_\alpha.
\label{eqn-Efron1}
\end{equation}
It can be seen (see \cite{Hastad12}) that the Efron-Stein
decomposition of $f_v$ with respect to $\prod_{i\in
R_1}\Omega^{(2)}_i$ is,
\begin{equation}
f_v = \sum_{\beta \subseteq R_1}f_v^\beta, \label{eqn-Efron2}
\end{equation}
where,
\begin{equation}
f_v^\beta = \sum_{\substack{\alpha \subseteq R_2 \\ \pi(\alpha) =
\beta}} \wh{f}_{v, \alpha}\chi_\alpha \label{eqn-Efron3}
\end{equation}
We say that a subset $\alpha$ is \emph{shattered} by $\pi =
\pi_{v\rightarrow u}$ if
$|\pi(\alpha)| = |\alpha|$. Using this we decompose 
$f_v$ into three functions, depending on the choice of $u$, as follows
\begin{eqnarray}
f_1 & = & \sum_{\alpha : |\alpha|\geq s}\wh{f}_{v,\alpha}\chi_\alpha
 \label{eqn-f1} \\
f_2 & = &  \sum_{\substack{\alpha : |\alpha| < s \\ \alpha \tn{ not
shattered}}}\wh{f}_{v,\alpha}\chi_\alpha
 \label{eqn-f2} \\
f_3 & = &  \sum_{\substack{\alpha : |\alpha| < s \\ \alpha \tn{
shattered}}}\wh{f}_{v,\alpha}\chi_\alpha
\label{eqn-f3}
\end{eqnarray}
To identify the \emph{good} neighbors of $v$, we need the following
key lemma. 
\begin{lemma}\label{lem-shatter}
With expectation taken over a random neighbor $u \in V_{l'}$ which
shares a constraint with $v$, $\E[\|f_2\|_2] \leq (s/\sqrt{T})$. 
Here $T$ is the smoothness parameter from Theorem \ref{thm-dto1multi}. 
\end{lemma}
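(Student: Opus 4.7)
The plan is to start from the decomposition of $f_2$ and exploit the smoothness property of the PCP to bound the expected squared norm, then apply Jensen's inequality to convert to a bound on the expected norm. Concretely, from the Efron--Stein expansion in equations \eqref{eqn-Efron1}--\eqref{eqn-Efron3} we have
\[
\|f_2\|_2^2 \;=\; \sum_{\substack{\alpha \subseteq R_2 \\ |\alpha| < s,\ \alpha\text{ not shattered by }\pi_{v\to u}}} \wh{f}_{v,\alpha}^{\,2},
\]
so that
\[
\E_u\!\left[\|f_2\|_2^2\right] \;=\; \sum_{\alpha \subseteq R_2,\ |\alpha| < s} \wh{f}_{v,\alpha}^{\,2}\; \Pr_u\!\left[\alpha \text{ not shattered by }\pi_{v\to u}\right].
\]

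The key step is to upper bound $\Pr_u[\alpha \text{ not shattered}]$ for a fixed $\alpha$ with $|\alpha| < s$. The set $\alpha$ fails to be shattered exactly when there exist two distinct indices $i \neq j \in \alpha$ with $\pi_{v\to u}(i) = \pi_{v\to u}(j)$. By the smoothness property of Definition \ref{def-smoothness}, for each such pair this happens with probability at most $1/T$ over a random constraint neighbor $u$, so a union bound over the $\binom{|\alpha|}{2} \leq s^2/2$ pairs in $\alpha$ gives
\[
\Pr_u\!\left[\alpha \text{ not shattered}\right] \;\leq\; \frac{s^2}{2T} \;\leq\; \frac{s^2}{T}.
\]

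Substituting this bound and using that $f_v$ is $\{0,1\}$-valued, so $\sum_{\alpha} \wh{f}_{v,\alpha}^{\,2} = \|f_v\|_2^2 \leq 1$, we obtain
\[
\E_u\!\left[\|f_2\|_2^2\right] \;\leq\; \frac{s^2}{T}\sum_{\alpha} \wh{f}_{v,\alpha}^{\,2} \;\leq\; \frac{s^2}{T}.
\]
Finally, Jensen's inequality (applied to the concave function $\sqrt{\cdot}$) yields
\[
\E_u\!\left[\|f_2\|_2\right] \;\leq\; \sqrt{\E_u\!\left[\|f_2\|_2^2\right]} \;\leq\; \frac{s}{\sqrt{T}},
\]
which is the claimed bound. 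There is no serious obstacle here; the only subtlety is confirming that the weak-density averaging step really delivers a truly random neighbor $u$ (so that the smoothness hypothesis applies directly), which is built into the bi-regularity of the PCP coming from Theorem \ref{thm-dto1multi}.
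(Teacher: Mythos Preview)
Your proof is correct and follows essentially the same approach as the paper: union-bound the non-shattering probability via smoothness, sum against the Fourier weights using $\sum_\alpha \wh{f}_{v,\alpha}^2 \leq 1$, and pass from $\E[\|f_2\|_2^2]$ to $\E[\|f_2\|_2]$ by Jensen. The paper's version is terser but the argument is identical.
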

\begin{proof} For a given $\alpha \subseteq R_2$ such that $|\alpha|
< s$, the probability (over $u$) that it is not shattered is at most
$$\sum_{i\neq j \in \alpha} \Pr[\pi_{v\rightarrow u}(i) =
\pi_{v\rightarrow u}(j)] \leq \frac{s^2}{T}.$$
Since, $\sum \wh{f}_{v, \alpha}^2 \leq 1$, we obtain that,
$$\E[\|f_2\|_2] \leq \left(\E[\|f_2\|_2^2]\right)^{1/2} \leq
\frac{s}{\sqrt{T}}.$$
\end{proof}
The above lemma implies that for at least $1 - (s^2/T)^{1/4}$ fraction
of the neighbors $u \in V_{l'}$ of $v$, $\|f_2\|_2 \leq
(s^2/T)^{1/4}$. Call such neighbors
$u$ of $v$ which satisfy this bound as \emph{good}.

\subsubsection*{Lower bounding $\E[f_v(y)f_v(z)]$}
We first set $\eta = \frac{2\delta}{r}$. It is easy to see that for
any $j \in R_2$, $\E[y_jz_j] = -1\left(1-\frac{\eta}{2}\right) +
\frac{\eta}{2} = -1 + \eta$. We shall first lower bound
$\E[f_v(y)T_{1 - \eta}f_v(-y)]$. 
We shall need the following lemma from \cite{MORSS06} which is obtained
using the \emph{reverse} hypercontractive inequality over the
boolean domain. 
\begin{lemma}\label{lem-Hatami}
Let $A, B \subseteq \{-1, 1\}^n$ have relative densities,
$$\frac{|A|}{2^n} = e^{-a^2/2} \ \ \ \ \ \ \ \ \ \ \frac{|B|}{2^n} =
e^{-b^2/2},$$
and let $y \in \{-1,1\}$ be uniform and $y'$ be a $\rho$-correlated
copy of $y$, i.e. $\E[y_iy'_i] = \rho,$ independently for each $i \in [n]$, for some $\rho > 0$. 
Then,
\begin{equation}
\Pr[y \in A, y' \in B] \geq exp\left[-\frac{1}{2}\cdot \frac{a^2 + b^2
+ 2\rho ab}{1 - \rho^2}\right].
\end{equation}
\end{lemma}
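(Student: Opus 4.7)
The plan is to follow the route of \cite{MORSS06}, deriving this Gaussian-type lower bound from Borell's reverse hypercontractive inequality on the Boolean hypercube. The key analytic input is the two-function form: for non-negative $f, g$ on $\{-1,1\}^n$ and exponents $p, q \in (0,1)$ satisfying $(1-p)(1-q) \geq \rho^2$,
\begin{equation*}
\E\bigl[f(y)\, g(y')\bigr] \;\geq\; \|f\|_p \cdot \|g\|_q,
\end{equation*}
where $y'$ is a $\rho$-correlated copy of $y$. This tensorizes from the two-point case $n=1$, which reduces to an elementary calculation on the $2\times 2$ joint distribution induced by $\rho$; the usual forward hypercontractive argument is carried out with all inequalities reversed and with exponents pushed below $1$.

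Granted this, specialize to $f = \mathbf{1}_A$ and $g = \mathbf{1}_B$. Using the normalizations $|A|/2^n = e^{-a^2/2}$ and $|B|/2^n = e^{-b^2/2}$, one obtains
\begin{equation*}
\Pr[y \in A,\, y' \in B] \;\geq\; \bigl(|A|/2^n\bigr)^{1/p}\bigl(|B|/2^n\bigr)^{1/q} \;=\; \exp\!\left(-\frac{a^2}{2p} - \frac{b^2}{2q}\right).
\end{equation*}
The freedom to choose any $p, q \in (0,1)$ saturating $(1-p)(1-q) = \rho^2$ thus reduces the lemma to minimizing $\tfrac{a^2}{2p} + \tfrac{b^2}{2q}$ under this single constraint.

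The optimization is a routine Lagrange-multiplier step: setting $s = 1-p$, $t = 1-q$ with $st = \rho^2$, differentiating and solving yields $s = \rho(a\rho + b)/(a + b\rho)$ and $t = \rho(a + b\rho)/(a\rho + b)$, so that $1-s = a(1-\rho^2)/(a+b\rho)$ and $1-t = b(1-\rho^2)/(a\rho + b)$. Substituting back, the objective telescopes to
\begin{equation*}
\frac{a^2}{2(1-s)} + \frac{b^2}{2(1-t)} \;=\; \frac{a(a + b\rho) + b(a\rho + b)}{2(1-\rho^2)} \;=\; \frac{a^2 + 2\rho a b + b^2}{2(1-\rho^2)},
\end{equation*}
which is precisely the exponent claimed by the lemma.

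The main obstacle, and the only genuinely non-trivial step, is the two-function reverse hypercontractivity itself; the remaining derivation is a Lagrangian calculation whose optimizing parameters are engineered exactly so as to reproduce the correlated-Gaussian tail exponent. The indicator-function specialization also needs $p, q > 0$ (otherwise $\|\mathbf{1}_A\|_p = 0$), which is compatible with the constraint $(1-p)(1-q) = \rho^2$ since $\rho < 1$.
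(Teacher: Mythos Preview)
Your proposal is correct and follows precisely the route the paper indicates: the lemma is quoted from \cite{MORSS06} without proof, and the paper only remarks that it is ``obtained using the \emph{reverse} hypercontractive inequality over the boolean domain,'' which is exactly the two-function Borell inequality plus Lagrangian optimization you carry out. Your handling of the constraint $(1-p)(1-q)=\rho^2$ and the resulting exponent computation are accurate; the degenerate cases $a=0$ or $b=0$ (where your optimal $p$ or $q$ hits $0$) are trivial since then one of the sets is all of $\{-1,1\}^n$.
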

Since $f_v$ is an indicator function let $A = \{y \mid f_v(y) = 1\}$. As
$v$ was chosen to be heavy, we have $\E[f_v] \geq  \frac{\eps}{2}$.
Let $B = -A$, i.e. $B = \{-y \mid y \in A\}$. It is easy to see that
\begin{equation}
\E[f_v(y)T_{1 - \eta}f_v(-y)] = \Pr[y \in A, y' \in B],
\end{equation}
where $y'$ is a $1-\eta$ correlated copy of $y$. Using Lemma
\ref{lem-Hatami} we obtain,
\begin{equation}
\E[f_v(y)T_{-1 + \eta}f_v(y)] \geq
\left(\frac{\eps}{2}\right)^{4/\eta}. \label{eqn-folk}
\end{equation}

The following two lemmas decompose two expectations we are
interested in.
\begin{lemma}\label{lem-decomp-1} Using the decompositions above, 
\begin{equation}
\left|\E[f_v(y)T_{1 - \eta}f_v(-y)] - \E[f_3(y)T_{1 -
\eta}f_3(-y)]\right| \leq 
2\|f_2\|_2 + 2\nu.
\end{equation}
\end{lemma}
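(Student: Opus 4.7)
The plan is to expand the difference using the decomposition $f_v = f_1 + f_2 + f_3$ and group the resulting cross-terms according to whether they involve $f_1$ (which will be killed by the $T_{1-\eta}$ operator) or only $f_2, f_3$ (for which $\|f_2\|_2$ enters through Cauchy--Schwarz). Since $y$ is uniform on $\{-1,1\}^{R_2}$, the Efron--Stein decomposition \eqref{eqn-Efron1} coincides with the Fourier expansion in the characters $\chi_\alpha$, and each $f_k$ is a sub-sum whose supports are pairwise disjoint in $\alpha$. Using $T_{1-\eta}f(-y) = T_{-(1-\eta)}f(y)$, any inner product of this form reads
\[
\E[f_i(y)\,T_{1-\eta}f_j(-y)] \;=\; \sum_{\alpha \subseteq R_2} \bigl(-(1-\eta)\bigr)^{|\alpha|} \,\wh{f}_{i,\alpha}\,\wh{f}_{j,\alpha}.
\]

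With this identity, I would write the telescoping decomposition
\[
\E[f_v\, T_{1-\eta}f_v(-\cdot)] - \E[f_3\, T_{1-\eta}f_3(-\cdot)]
= \E[f_1 \,T_{1-\eta}f_v(-\cdot)] + \E[f_3\,T_{1-\eta}f_1(-\cdot)] + \E[f_2\,T_{1-\eta}f_v(-\cdot)] + \E[f_3\,T_{1-\eta}f_2(-\cdot)],
\]
isolating exactly four terms to bound. For the two terms involving $f_1$, every $\alpha$ in the support satisfies $|\alpha|\ge s$, so the Fourier expression is dominated in absolute value by $(1-\eta)^s \|f_1\|_2 \|f_v\|_2 \le (1-\eta)^s$, since $f_v, f_1 \in [0,1]$ implies $\|f_v\|_2, \|f_1\|_2 \le 1$. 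For the two remaining terms, Cauchy--Schwarz together with $\|T_{1-\eta}g\|_2 \le \|g\|_2$ and $\|f_3\|_2, \|f_v\|_2 \le 1$ yields the bound $\|f_2\|_2$. Summing gives $2(1-\eta)^s + 2\|f_2\|_2$.

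It remains to check the calibration $(1-\eta)^s \le \nu$. Using $\eta = 2\delta/r$, $s \ge (r/\xi)\ln(1/\nu)$, and $\xi = \delta/(r 2^r)$ from Lemma \ref{lem-correlation}, one has $\eta s \ge (2\delta/\xi)\ln(1/\nu) = 2r\,2^r \ln(1/\nu) \ge \ln(1/\nu)$, hence $(1-\eta)^s \le e^{-\eta s} \le \nu$. Plugging this in gives the claimed bound $2\|f_2\|_2 + 2\nu$.

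The proof is almost purely algebraic once one sees the right grouping of cross-terms; the only non-routine step is picking the four-term regrouping above so that the $f_1$--contribution is absorbed into the hypercontractive/damping bound against $\|f_v\|_2$ (rather than produce five separate cross-terms each losing a factor of $\nu$), which is what matches the exact constants $2\|f_2\|_2 + 2\nu$ in the statement. No obstacle beyond verifying the parameter arithmetic for $\eta s$ is anticipated.
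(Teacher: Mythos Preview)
Your proof is correct and follows essentially the same approach as the paper: both decompose the difference into four cross-terms, bound the two involving $f_1$ by $(1-\eta)^s \le \nu$ via the damping of high-degree terms (the paper invokes Lemma~\ref{lem-Efron-damp}), and the two involving $f_2$ by $\|f_2\|_2$ via Cauchy--Schwarz. The paper peels off the second argument first and then the first, while you peel off the first argument first and then the second, but the resulting four terms are the same up to the symmetry of the bilinear form.

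One small slip: you write ``since $f_v, f_1 \in [0,1]$ implies $\|f_v\|_2, \|f_1\|_2 \le 1$''. The function $f_1$ is a Fourier truncation and need not take values in $[0,1]$. The bound $\|f_1\|_2 \le 1$ (and likewise $\|f_3\|_2 \le 1$) instead follows from orthogonality of the decomposition $f_v = f_1 + f_2 + f_3$ and Parseval, giving $\|f_1\|_2^2 \le \|f_v\|_2^2 \le 1$. This does not affect the argument.
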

\begin{proof}
By Lemma \ref{lem-Efron-damp} and Equation \eqref{eqn-Efron1}, we have 
$$|\E[f_v(y)T_{1 - \eta}f_1(-y)]| \leq \|f_v\|_2\|f_1\|_2(1-\eta)^s \leq
\nu,$$
by our setting of $s$ and since $\|f_v\|_2, \|f_1\|_2 \leq 1$.
Furthermore,
$$|\E[f_v(y)T_{1 - \eta}f_2(-y)]| \leq  \|f_v\|_2\|f_2\|_2 \leq
\|f_2\|_2.$$
We can repeat the above with $\E[f_v(y)T_{1 - \eta}f_3(-y)]$ using the
fact that $\|T_{1 - \eta}f_3(-y)\|_2 \leq 1$ to obtain the lemma.
\end{proof}
\begin{lemma}\label{lem-decomp-2} Using the decompositions above and
having $(y, z)$ sampled from $(\prod_{i \in R_1}(\Omega^{(2)}_i\times
\Omega^{(3)}_i); \mc{D}^{vu})$, 
\begin{equation}
\left|\E[f_v(y)f_v(z)] - \E[f_3(y)f_3(z)]\right| \leq 
2\|f_2\|_2 + 2\nu.
\end{equation}
\end{lemma}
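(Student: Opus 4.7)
The plan is to mirror the structure of the proof of Lemma \ref{lem-decomp-1}, replacing the use of the Bonami-Beckner operator by a direct Fourier computation tailored to the correlated distribution $\mc{D}^{vu}$. The first step is to establish an orthogonality identity for the Fourier characters $\chi_\alpha$ of $\{-1,1\}^{R_2}$ under $\mc{D}^{vu}$. Since $\mc{D}^{vu}$ is a product, over $i \in R_1$, of independent copies of $\mc{D}_{\delta, r}$ on the block $\pi^{-1}(i)$, it suffices to compute, for subsets $A, A' \subseteq [r]$,
$$\E_{\mc{D}_{\delta,r}}[\chi_A(Y)\chi_{A'}(Z)].$$
A case analysis on the two-stage sampling of $\mc{D}_{\delta,r}$---first $Z = -Y$; then with probability $\delta$ a uniformly chosen coordinate $k$ is reset so that $Y_k = Z_k = -X$---shows that this inner expectation vanishes unless $A = A'$, in which case it equals $(-1)^{|A|}(1 - \eta|A|)$ with $\eta := 2\delta/r$. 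Multiplying over blocks yields $\E[\chi_\alpha(y)\chi_{\alpha'}(z)] = 0$ for $\alpha \neq \alpha'$, and $\E[\chi_\alpha(y)\chi_\alpha(z)] = (-1)^{|\alpha|}\prod_{i \in R_1}(1 - \eta|A_i|)$, writing $A_i := \alpha \cap \pi^{-1}(i)$.

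Given this identity, the Fourier expansion $f_v = f_1 + f_2 + f_3$ yields a diagonal expression for both $\E[f_v(y)f_v(z)]$ and $\E[f_3(y)f_3(z)]$ as sums of $\wh{f}_{v,\alpha}^2 (-1)^{|\alpha|}\prod_i(1-\eta|A_i|)$ over the relevant $\alpha$'s. The key observation is that for any shattered $\alpha$---in particular, every $\alpha$ in the support of $f_3$---each $|A_i| \leq 1$, so the weight collapses to $(-1)^{|\alpha|}(1-\eta)^{|\alpha|}$. Since $f_1, f_2, f_3$ partition the Fourier support of $f_v$, subtraction cancels the $f_3$-indexed terms exactly, leaving the difference equal to the sum of $\wh{f}_{v,\alpha}^2 (-1)^{|\alpha|}\prod_i(1-\eta|A_i|)$ over $\alpha$ in the supports of $f_1$ or $f_2$.

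It remains to bound these two pieces. For the $f_1$ part ($|\alpha| \geq s$), applying $1-x \leq e^{-x}$ to each $\eta|A_i|$ (each at most $2\delta \leq 1$, so the factors are nonnegative) gives $\prod_i(1-\eta|A_i|) \leq e^{-\eta|\alpha|} \leq e^{-\eta s}$. With $\eta = 2\delta/r$, the choice $s \geq (r/\xi)\ln(1/\nu)$, and the identity $\delta/\xi = r 2^r$, this yields $e^{-\eta s} \leq \nu$, so the $f_1$ piece is bounded in magnitude by $\nu \sum_\alpha \wh{f}_{v,\alpha}^2 \leq \nu$. For the $f_2$ part, the trivial bound $|\prod_i(1-\eta|A_i|)| \leq 1$ gives $\|f_2\|_2^2 \leq \|f_2\|_2$. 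Summing gives $\|f_2\|_2 + \nu$, comfortably within the claimed $2\|f_2\|_2 + 2\nu$.

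The main technical obstacle is the orthogonality computation itself, particularly the case $A \triangle A' = \{k\}$ where $k$ coincides with the reset coordinate: the reset enforces $Y_k = Z_k$, which might at first glance produce a surviving $Y_k$ factor, but $Y_k$ remains uniformly distributed (inherited from the uniformity of $X$), and this kills the contribution. Once the orthogonality identity is settled, the rest is a routine bookkeeping calculation on the diagonal bilinear form, closely paralleling the role played by the Bonami-Beckner operator in the proof of Lemma \ref{lem-decomp-1}.
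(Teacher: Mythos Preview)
Your proof is correct, but it takes a different route from the paper. The paper argues via the block Efron--Stein decomposition of $f_v$ over $\prod_{i\in R_1}\Omega^{(2)}_i$ (Equations \eqref{eqn-Efron2}--\eqref{eqn-Efron3}) together with the correlation bound of Lemma~\ref{lem-correlation}(iv) and the Cauchy--Schwarz-type estimate of Lemma~\ref{lem-Efron-damp}: since every $\alpha$ in the support of $f_1$ has $|\pi(\alpha)|\geq s/r$, one gets $|\E[f_v(y)f_1(z)]|\leq \rho^{s/r}\leq \nu$, and then the $f_2$ terms are peeled off exactly as in Lemma~\ref{lem-decomp-1}. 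You instead compute the full bilinear form $\E[\chi_\alpha(y)\chi_{\alpha'}(z)]$ directly on each block, establish exact diagonality under $\mc{D}^{vu}$, and then bound the remaining $f_1$- and $f_2$-indexed diagonal terms by the elementary estimates $\prod_i(1-\eta|A_i|)\leq e^{-\eta|\alpha|}$ and $\|f_2\|_2^2\leq\|f_2\|_2$. Your argument is more hands-on and self-contained (no appeal to the abstract Efron--Stein machinery or to Lemma~\ref{lem-correlation}), and it even yields the sharper bound $\|f_2\|_2+\nu$; the paper's argument is more modular and reuses the same black-box lemmas already invoked elsewhere in the analysis. Incidentally, your diagonal computation also directly recovers the identity $\E[f_3(y)f_3(z)]=\E[f_3(y)T_{1-\eta}f_3(-y)]$ that the paper establishes separately right after this lemma.
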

\begin{proof}Using the bound (iv) of Lemma \ref{lem-correlation}, the
decomposition in Equations \eqref{eqn-Efron2} and \eqref{eqn-Efron3},
and Lemma \ref{lem-Efron-damp} we obtain,
$$|\E[f_v(y)f_1(z)]| \leq \|f_v\|_2\|f_1\|_2(1-\eta)^{s/r} \leq
\nu,$$
by our setting of $s$. The rest of the proof is analogous to Lemma
\ref{lem-decomp-1}.
\end{proof}
Note that $y_i$ is independent of $y_j$ and $z_j$ for $i \neq j \in R_2$.
Also,  when sampling $z$ given
$y$ the coordinates in a shattered
subset $\alpha$ are flipped independently with probability $1 -
\frac{\eta}{2}$. Thus,
$$ \E[f_3(y)f_3(z)] = \sum_{\substack{\alpha : |\alpha| < s \\ \alpha \tn{
shattered}}}\wh{f}_{v,\alpha}^2(-1 + \eta)^{|\alpha|} = \E[f_3(y)T_{1
-
\eta}f_3(-y)].$$
From the above
analysis, Lemma \ref{lem-shatter}, and Equation \eqref{eqn-folk}, 
we have that for all good neighbors $u$ of  $v$, 
\begin{equation}
\E[f_v(y)f_v(z)] \geq \left(\frac{\eps}{2}\right)^{4/\eta} 
- 4\left(\frac{s^2}{T}\right)^{1/4}
- 4\nu, 
\end{equation}
where $y$ and $z$ are sampled according to $\mc{D}^{vu}$.

\subsubsection*{Showing $\tn{Inf}_i(\ol{T}_{1-\gamma}f_v) \approx 
\tn{Inf}_i(T_{1-\gamma}f_v)$}
Recall that $\ol{T}_{1-\gamma}$ is the Bonami-Beckner operator on the
space $\prod_{i \in R_1}\Omega^{(2)}_i$ and $T_{1-\gamma}$ is over
$\{-1,1\}^{R_2}$ equipped with the uniform measure. Let $h =
\ol{T}_{1-\gamma}f_v$ and $g = T_{1-\gamma}f_v$. Define the functions
$h_i := \ol{T}_{1-\gamma}f_i$ and $g_i := T_{1-\gamma}f_i$ for
$i=1,2,3$.

Since the operators $\ol{T}_{1-\gamma}$ and $T_{1-\gamma}$ are
contractions, by Lemma \ref{lem-shatter} we
have that for good neighbors $u$, $\|h_2\|_2, \|g_2\|_2  \leq
(s^2/T)^{1/4}$. 
Also, by Lemma \ref{lem-Efron-damp} and Efron-Stein decompositions of 
$f_v$ (Equations \eqref{eqn-Efron1}, \eqref{eqn-Efron2} and
\eqref{eqn-Efron3}), we obtain: $\|h_1\|_2 \leq
(1- \gamma)^{s/r}$ and $\|g_1\|_2 \leq (1- \gamma)^s$. By our setting of
$s$, we get $\|h_1\|_2^2, \|g_1\|_2^2 \leq \frac{\tau}{32r^2}$. 

For a subset $\alpha$ which is shattered, it is easy to see that
that $\wh{h}_\alpha = \wh{g}_\alpha = \wh{f}_{v, \alpha}(1 -
\gamma)^{|\alpha|}$. Using the definition of influence over the domain
$\{-1,1\}^{R_2}$ we obtain the following lemma.
\begin{lemma}\label{lem-infsame}
For any $i \in R_2$, 
$$\left|\tn{Inf}_i(\ol{T}_{1-\gamma}f_v) -
\tn{Inf}_i(T_{1-\gamma}f_v)\right| \leq
2\left(\frac{s^2}{T}\right)^{1/4} + \frac{\tau}{16r^2}.$$
\end{lemma}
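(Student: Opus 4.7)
The plan is to work entirely in the Fourier basis $\{\chi_\alpha\}_{\alpha \subseteq R_2}$ of $\{-1,1\}^{R_2}$ equipped with the uniform measure, and to exploit the fact that both $T_{1-\gamma}$ and $\ol{T}_{1-\gamma}$ are simultaneously diagonal in this basis, albeit with different eigenvalues. The standard computation gives $T_{1-\gamma}\chi_\alpha = (1-\gamma)^{|\alpha|}\chi_\alpha$. For $\ol{T}_{1-\gamma}$, the block-resampling step on $\pi^{-1}(i)$ leaves $\chi_\alpha$ unchanged when $\alpha$ is disjoint from $\pi^{-1}(i)$ and multiplies it by $(1-\gamma)$ when $\alpha$ meets $\pi^{-1}(i)$, so $\ol{T}_{1-\gamma}\chi_\alpha = (1-\gamma)^{|\pi(\alpha)|}\chi_\alpha$. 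Consequently, using $\wh{f}_{v,\alpha}$ for the uniform-measure Fourier coefficient,
\begin{equation}
\tn{Inf}_i(h) - \tn{Inf}_i(g) = \sum_{\alpha \ni i}\wh{f}_{v,\alpha}^{\,2}\Bigl[(1-\gamma)^{2|\pi(\alpha)|} - (1-\gamma)^{2|\alpha|}\Bigr],
\end{equation}
and since $|\pi(\alpha)| \leq |\alpha|$ every bracket is nonnegative.

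I would then split this sum using the decomposition $f_v = f_1 + f_2 + f_3$ from Equations~\eqref{eqn-f1}--\eqref{eqn-f3}. The $f_3$-contribution vanishes identically, since shattered $\alpha$ satisfy $|\pi(\alpha)| = |\alpha|$. For the $f_2$-contribution, I would upper bound the bracket trivially by $1$ and apply Parseval, yielding a term no larger than $\|f_2\|_2^{\,2}$; by Lemma~\ref{lem-shatter} together with the good-neighbor hypothesis $\|f_2\|_2 \leq (s^2/T)^{1/4}$, this is at most $(s^2/T)^{1/2}$, which in turn is at most $(s^2/T)^{1/4}$ as soon as $s^2 \leq T$.

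For the $f_1$-contribution the key observation is the uniform pigeonhole bound $|\pi(\alpha)| \geq |\alpha|/r \geq s/r$ whenever $|\alpha| \geq s$, which lets me bound $(1-\gamma)^{2|\pi(\alpha)|} \leq (1-\gamma)^{2s/r} \leq e^{-2\gamma s/r} \leq \tau/(32r^2)$ by the defining choice $s \geq (r/2\gamma)\ln(32r^2/\tau)$. Since $\sum_\alpha \wh{f}_{v,\alpha}^{\,2} \leq 1$, the $f_1$-contribution is at most $\tau/(32r^2)$. Adding the three estimates gives $\tn{Inf}_i(h) - \tn{Inf}_i(g) \leq (s^2/T)^{1/4} + \tau/(32r^2)$, which is absorbed into the stated $2(s^2/T)^{1/4} + \tau/(16r^2)$ with room to spare.

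The main point to get right is the identification of domains: $\ol{T}_{1-\gamma}$ is formally defined on $\prod_{i \in R_1}\Omega^{(2)}_i$ while $T_{1-\gamma}$ lives on $\{-1,1\}^{R_2}$, but these spaces agree as measure spaces under the natural identification, and the characters $\chi_\alpha$ are the common eigenfunctions, with the two operators differing only in assigning eigenvalue $(1-\gamma)^{|\pi(\alpha)|}$ versus $(1-\gamma)^{|\alpha|}$. Once this alignment is noted, the rest is a coefficient-by-coefficient comparison driven by the shattering dichotomy (which kills $f_3$), the good-neighbor hypothesis on $\|f_2\|_2$, and the tail estimate coming from the choice of $s$.
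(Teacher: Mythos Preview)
Your proposal is correct and follows essentially the same approach as the paper: both arguments decompose $f_v = f_1 + f_2 + f_3$, observe that the shattered part $f_3$ contributes nothing to the difference (since $|\pi(\alpha)| = |\alpha|$ there), bound the $f_1$-part via the damping estimate $(1-\gamma)^{2|\pi(\alpha)|} \leq (1-\gamma)^{2s/r} \leq \tau/(32r^2)$, and bound the $f_2$-part using the good-neighbor hypothesis $\|f_2\|_2 \leq (s^2/T)^{1/4}$. The only difference is cosmetic: you compute the Fourier expansion of $\tn{Inf}_i(h) - \tn{Inf}_i(g)$ directly, while the paper first records the norm bounds $\|h_1\|_2^2,\|g_1\|_2^2 \leq \tau/(32r^2)$ and $\|h_2\|_2,\|g_2\|_2 \leq (s^2/T)^{1/4}$ and then uses $\tn{Inf}_i(\cdot) \leq \|\cdot\|_2^2$ together with $h_3 = g_3$.
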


\medskip
\noindent
{\bf Choice of Parameters.} Given $\eps > 0$, fix  $\delta \in (0,
1/2)$, which also fixes $\eta$. The choice of $L$ made at the
beginning of Section \ref{sec-redn-2color} is fixed
and therefore the maximum possible value of $r$ is also fixed. 
Choose $\nu$ small enough so that 
\begin{equation}
\underline{\Gamma}_{\delta}
\left(\frac{\eps}{2}, 
 \frac{1}{2}\left(\frac{\eps}{2}\right)^{4/\eta} - 5\nu \right) - 2\nu
> 0.\label{eqn-nonzero}
\end{equation} 
This also fixes the choice of $\gamma$ and $\tau$ by Lemma
\ref{lem-main-inf}, and the choice of $s$ as defined above. 
Then choose $T$ to be large enough so that 
$$4\left(\frac{s^2}{T}\right)^{1/4} \leq  
\min\left\{\frac{1}{2}\left(\frac{\eps}{2}\right)^{4/\eta}, 
\frac{\eps^2}{128}\right\},$$
and,
$$2\left(\frac{s^2}{T}\right)^{1/4} \leq  \frac{\tau}{16r^2}.$$
The above setting implies that
for all good neighbors $u$ of  $v$, 
\begin{equation}
\E[f_v(y)f_v(z)] \geq \frac{1}{2}\left(\frac{\eps}{2}\right)^{4/\eta} 
- 4\nu, \label{eqn-expec}
\end{equation}
and for any $i \in R_2$, using Lemma \ref{lem-infsame},
\begin{equation}
\left|\tn{Inf}_i(\ol{T}_{1-\gamma}f_v) -
\tn{Inf}_i(T_{1-\gamma}f_v)\right| \leq
\frac{\tau}{8r^2}.\label{eqn-inf-same2}
\end{equation}
Using Equations \eqref{eqn-nonzero}, \eqref{eqn-expec} and
\eqref{eqn-inf-same2} along with Lemma 
\ref{lem-main-inf} for a heavy \emph{and} good neighbor $u$ of $v$
yields an $i^* \in R_1$ such that,
\begin{equation}
\min(\tn{Inf}_{i^*}(T_{1-\gamma}f_u), 4r\sum_{j\in \pi^{-1}(i^*)}
\tn{Inf}_j(T_{1-\gamma}f_v)) \geq \tau/2. \label{eqn-inf-cond2}
\end{equation}

\medskip
\noindent
{\bf Labeling.} The labeling to a heavy variable $u \in V_{l'}$ is
given by choosing a label $i \in R_1$ independently 
with probability proportional to $\tn{Inf}_i(T_{1-\gamma}f_u)$.  The label
 of a heavy variable $v \in V_{l}$ is similarly assigned
given by choosing $j \in R_2$ independently 
with probability proportional to $\tn{Inf}_i(T_{1-\gamma}f_v)$.
Note
that the sum of all influences of $T_{1-\gamma}f_u$
($T_{1-\gamma}f_v$) is bounded by $1/\gamma$.

Suppose $u$ is a good and heavy neighbor of a heavy variable $v$. 
Then analysis above along with
Lemma \ref{lem-main-inf} and Equation \ref{eqn-inf-cond2} 
implies that the labeling strategy will
succeed for $v$ and $u$ with probability $\tau^2\gamma^2/16r$.
Additionally, from the above analysis, at least $\frac{\eps^2}{128}$
fraction of constraints between layers $l$ and $l'$ are between heavy
variables $v \in V_l$ and $u \in V_{l'}$ such that $u$ is a good
neighbor of $v$. Thus, the probabilistic labeling strategy satisfies
in expectation $\frac{\eps^2\tau^2\gamma^2}{2048r}$ fraction of
constraints. By choosing the soundness $\zeta$ to be small enough we
obtain a contradiction. 

\bibliographystyle{alpha}
\bibliography{Refs-2color}
\appendix

\section{Construction of Smooth $d$-to-$1$ MLPCP}\label{sec-dto1multi}
The construction of the Smooth $d$-to-$1$ Multi-Layered PCP $\Phi$ closely 
follows the construction used \cite{Khot-3}. We shall only give the
construction. 

We begin with an instance $\mc{L}$ of the $d$-to-$1$ Game given by
Conjecture \ref{conj-dto1bireg} with the variable sets $\mc{U}$,
$\mc{V}$ and label sets $[k]$ and $[m]$. For convenience we refer to
the variables in $\mc{V}$ as $\mc{V}$-variables and those in $\mc{U}$ as
$\mc{U}$-variables. 

The variables  of $\Phi$ in
the $l$th layer are sets of $(TL + L - l)$ $\mc{V}$-variables
and $(l-1)$ $\mc{U}$-variables. 
The label set $R_l$ of layer $l$ is the set of all $(TL + L-1)$-tuples of 
labelings to  $TL + L - l$ $\mc{V}$-variables and $l-1$
$\mc{U}$-variables.

There is a constraint between a
variable $v$ in layer $l$ and a variable $u$ in layer $l'$ of $\Phi$
if replacing $(l - l')$ $\mc{V}$-variables $q_1,\dots, q_{l-l'}$ from
the set associated with $v$, with $\mc{U}$-variables 
$p_1, \dots, p_{l-l'}$ such that $p_r$ has a constraint with $q_r$ in
$\mc{L}$ for $r = 1, \dots, l-l'$, yields the set associated with $u$.
The constraint $\pi_{v\rightarrow u}$ is projection which  checks
the consistency of the labels, according to whether the variables of
$\mc{L}$ common to both $u$ and $v$ are assigned identically and the
assignments to $p_1, \dots, p_{l-l'}$ and $q_1,\dots, q_{l-l'}$ are
consistent. It is easy to see that $\pi_{v\rightarrow u}^{-1}(i) =
d^{l-l'}$ for any $i \in R_{l'}$.

The proof of weak density follows from the bi-regularity property of
$\Phi$ in a manner identical to the proof in \cite{DGKR03}. The proof of
soundness is identical to the proof in \cite{Khot-3}. The proof of
hardness in Theorem \ref{sec-dto1multidef} follows from standard
arguments as given in \cite{DGKR03}. We omit these proofs.

\section{Fourier Analysis}\label{sec-Fourier}
We will be working over the field $\Ft$.
Define the following homomorphism
$\phi$ from $(\Ft, +)$ to the multiplicative group $(\{-1,1\}, .)$, by
$\phi(a) := (-1)^a$.
We now consider the vector space $\Ft^m$ for some positive integer $m$. We
define the `characters' $\chi_\alpha : \Ft^m \mapsto \{-1,1\}$ for every
$\alpha \in \Ft^m$ as,
$$ \chi_\alpha(f) := \phi(\alpha\cdot f), \ \ \ \ \ \ \ f \in \Ft^m$$
where `$\cdot$' is the inner product in the vector space $\Ft^m$. 
The characters $\chi_\alpha$ satisfy the following properties,
\begin{eqnarray*}
\chi_0(f)  = 1 &\ \ \ \ \ & \forall f \in \Ft^m\\
\chi_\alpha(0) = 1 &\ \ \ \ \ & \forall \alpha \in \Ft^m\\
\chi_{\alpha + \beta}(f) = \chi_\alpha(f)\chi_\beta(f)\\
\chi_{\alpha}(f + g) = \chi_\alpha(f)\chi_\alpha(g)
\end{eqnarray*}
and,
\begin{equation*}
\E_{f \in \Ft^m}\left[\chi_\alpha(f)\right] = \begin{cases}
					1 & \text{ if } \alpha = 0 \\
				  	0   & \text{ otherwise }
					\end{cases}
\end{equation*}
The characters $\chi_\alpha$ form an orthonormal basis for $L^2(\Ft^m)$. We
have,
$$  \left<\chi_\alpha, \chi_\beta\right> = \begin{cases}
				1 & \text{ if }\alpha = \beta \\
				0 & \text{ otherwise }
				\end{cases}
$$
where,
$$ \left<\chi_\alpha, \chi_\beta\right> := \E_{f \in \Ft^m}\left[
\chi_\alpha(f)\chi_\beta(f)\right]. $$

Let $A:\Ft^m \mapsto \R$ be any real valued function. Then the
Fourier expansion of $A$ is given by,
$$A(x) = \sum_{\alpha \in \Ft^m}\widehat{A}_\alpha\chi_\alpha(x),$$
where,
$$\widehat{A}_\alpha = \E_{x\in \Ft^m}[A(x)\chi_\alpha(x)].$$
A useful equality is:
$$\widehat{A}_0 = \E_{x\in \Ft^m}[A(x)].$$

\subsection*{Folding}
The following lemma gives a property of the Fourier coefficients of
any homogeneously folded function.
\begin{lemma}
Let $A : \Ft^m\mapsto \R$ be any function such that $A(x + y)
= A(x)$ for some $y \in \Ft^m$ and all $x\in \Ft^m$. 
Then if $\widehat{A}_\alpha \neq 0$, then
$\alpha\cdot y = 0$.
\end{lemma}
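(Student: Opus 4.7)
The plan is to derive the desired orthogonality from the translation invariance $A(x+y)=A(x)$ by performing a simple change of variables inside the expectation that defines $\widehat{A}_\alpha$. I would start from
\[
\widehat{A}_\alpha \;=\; \E_{x\in \Ft^m}\bigl[A(x)\chi_\alpha(x)\bigr],
\]
and apply the substitution $x \mapsto x+y$. Since the map $x \mapsto x+y$ is a bijection on $\Ft^m$, the uniform expectation is preserved, so this substitution leaves the left-hand side unchanged.

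Next, I would use the two structural properties recalled just above the lemma: the hypothesis $A(x+y) = A(x)$ eliminates $y$ from the $A$-factor, and the multiplicativity of the character, $\chi_\alpha(x+y) = \chi_\alpha(x)\chi_\alpha(y)$, pulls a factor of $\chi_\alpha(y)$ outside the expectation. Together these give
\[
\widehat{A}_\alpha \;=\; \E_{x\in\Ft^m}\bigl[A(x+y)\chi_\alpha(x+y)\bigr] \;=\; \chi_\alpha(y)\cdot \E_{x\in \Ft^m}\bigl[A(x)\chi_\alpha(x)\bigr] \;=\; \chi_\alpha(y)\cdot \widehat{A}_\alpha.
\]

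From here the conclusion is immediate. Assuming $\widehat{A}_\alpha \neq 0$, I can cancel it to get $\chi_\alpha(y) = 1$, and by the definition $\chi_\alpha(y) = (-1)^{\alpha\cdot y}$ this is equivalent to $\alpha\cdot y = 0$ in $\Ft$, as claimed. There is essentially no obstacle: the lemma is a one-line consequence of the invariance of uniform measure under translation and of the fact that $\chi_\alpha$ is a group homomorphism $(\Ft^m,+)\to(\{-1,+1\},\cdot)$. The only point worth a moment's care is that $\chi_\alpha(y)\in\{-1,+1\}$, so the equation $\chi_\alpha(y)\widehat{A}_\alpha = \widehat{A}_\alpha$ with $\widehat{A}_\alpha\neq 0$ genuinely forces $\chi_\alpha(y)=1$ and not merely $|\chi_\alpha(y)|=1$. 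For the actual usage in Section~\ref{sec-main1}, where $\ol{C}[W]$ is folded over the entire subspace $H_W$, one simply applies this lemma to each $y\in H_W$ (or equivalently to a basis of $H_W$) to conclude $\alpha\perp H_W$ whenever $\widehat{A}_\alpha\neq 0$.
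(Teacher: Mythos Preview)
Your proof is correct and follows essentially the same approach as the paper's own proof: substitute $x\mapsto x+y$ in the defining expectation for $\widehat{A}_\alpha$, use the folding hypothesis and multiplicativity of $\chi_\alpha$ to pull out a factor $\chi_\alpha(y)$, and cancel. The only differences are cosmetic (you collapse the chain of equalities into one line and add a remark on the application to $H_W$).
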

\begin{proof} Assume $\widehat{A}_\alpha \neq 0$.
By definition and using the folding property,
\begin{eqnarray*}
\widehat{A}_\alpha & =  & \E_{x\in \Ft^m}[A(x)\chi_\alpha(x)] \\ 
                   & =  &\E_{x\in \Ft^m}[A(x + y)
			\chi_\alpha(x +  y)] \\
		   & = & \E_{x\in \Ft^m}[A(x)
			\chi_\alpha(x + y)] \\
		& = & \E_{x\in \Ft^m}[A(x)
			\chi_\alpha(x)]\chi_{\alpha}(y) \\
		& = & \widehat{A}_\alpha \chi_\alpha(y).
\end{eqnarray*}
Thus, if $\widehat{A}_\alpha \neq 0$, then $\chi_\alpha(y) = 1$.
Thus, $\phi(\alpha\cdot y) = 1$. 
This implies that $\alpha\cdot y = 0$.
\end{proof}

\end{document}